\newcommand{\lsp}{\hspace{0.1em}}
\def\identity{\leavevmode\hbox{\small1\kern-3.8pt\normalsize1}}
\newtheorem{theorem}{Theorem}
\newtheorem{lemma}{Lemma}
\newtheorem{proposition}{Proposition}
\newcommand{\ket}[1]{\left | #1 \right\rangle}
 \newcommand{\rmd}{\mathrm{d}}
 \newcommand{\rme}{\mathrm{e}}
 \newcommand{\rmi}{\mathrm{i}}
 \newcommand{\rmF}{\mathrm{F}}
 \newcommand{\rmT}{\mathrm{T}}
 \newcommand{\caH}{\mathcal{H}}
\newcommand{\scrD}{\mathscr{D}}
\newcommand{\scrH}{\mathscr{H}}
  \newcommand{\scrE}{\mathscr{E}}
\newcommand{\tr}{\operatorname{tr}}
\newcommand{\1}{\operatorname{\uppercase\expandafter{\romannumeral1}}}
\newcommand{\2}{\operatorname{\uppercase\expandafter{\romannumeral2}}}
\newcommand{\3}{\operatorname{\uppercase\expandafter{\romannumeral3}}}
\newcommand{\4}{\operatorname{\uppercase\expandafter{\romannumeral4}}}
\newcommand{\5}{\operatorname{\uppercase\expandafter{\romannumeral5}}}
\newcommand{\6}{\operatorname{\uppercase\expandafter{\romannumeral6}}}
\newcommand{\7}{\operatorname{\uppercase\expandafter{\romannumeral7}}}
\newcommand{\8}{\operatorname{\uppercase\expandafter{\romannumeral8}}}
\newcommand{\9}{\operatorname{\uppercase\expandafter{\romannumeral9}}}
\def\eqref#1{\textup{(\ref{#1})}}  
\newcommand{\eref}[1]{Eq.~\textup{(\ref{#1})}}
\newcommand{\Eref}[1]{Equation~\textup{(\ref{#1})}}
\newcommand{\esref}[1]{Eqs.~\textup{(\ref{#1})}}
\newcommand{\Esref}[1]{Equations~\textup{(\ref{#1})}}
\newcommand{\fref}[1]{Fig.~\ref{#1}}
\newcommand{\tref}[1]{Table~\ref{#1}}
\newcommand{\thref}[1]{Theorem~\ref{#1}}
\newcommand{\Thref}[1]{Theorem~\ref{#1}}
\newcommand{\thsref}[1]{Theorems~\ref{#1}}
\newcommand{\Thsref}[1]{Theorems~\ref{#1}}
\newcommand{\lref}[1]{Lemma~\ref{#1}}
\newcommand{\Lref}[1]{Lemma~\ref{#1}}
\newcommand{\lsref}[1]{Lemmas~\ref{#1}}
\newcommand{\pref}[1]{Proposition~\ref{#1}}
\newcommand{\Pref}[1]{Proposition~\ref{#1}}
\newcommand{\cref}[1]{Conjecture~\ref{#1}}
\newcommand{\Cref}[1]{Conjecture~\ref{#1}}
\newcommand{\rcite}[1]{Ref.~\cite{#1}}
\newcommand{\rscite}[1]{Refs.~\cite{#1}}
\def\<{\langle}  
\def\>{\rangle}  
\begin{document}

\title{Optimal Verification of the Bell State and Greenberger-Horne-Zeilinger States in Untrusted Quantum Networks}

\author{Yun-Guang Han}
\affiliation{State Key Laboratory of Surface Physics and Department of Physics, Fudan University, Shanghai 200433, China}
\affiliation{Institute for Nanoelectronic Devices and Quantum Computing, Fudan University, Shanghai 200433, China}
\affiliation{Center for Field Theory and Particle Physics, Fudan University, Shanghai 200433, China}

\author{Zihao Li}
\affiliation{State Key Laboratory of Surface Physics and Department of Physics, Fudan University, Shanghai 200433, China}
\affiliation{Institute for Nanoelectronic Devices and Quantum Computing, Fudan University, Shanghai 200433, China}
\affiliation{Center for Field Theory and Particle Physics, Fudan University, Shanghai 200433, China}

\author{Yukun Wang}
\affiliation{Department of Computer Science and Technology, China University of Petroleum, Beijing 102249, China}

\author{Huangjun Zhu}
  \email{zhuhuangjun@fudan.edu.cn}
\affiliation{State Key Laboratory of Surface Physics and Department of Physics, Fudan University, Shanghai 200433, China}
\affiliation{Institute for Nanoelectronic Devices and Quantum Computing, Fudan University, Shanghai 200433, China}
\affiliation{Center for Field Theory and Particle Physics, Fudan University, Shanghai 200433, China}

\begin{abstract}
Bipartite and multipartite entangled states are basic ingredients for constructing quantum networks and their accurate verification is crucial to the  functioning of the networks, especially for untrusted networks. Here we propose a simple approach for verifying the Bell state in an untrusted quantum network  in which one party is not honest. Only local projective measurements are required  for the honest party. It turns out each verification protocol is tied to a probability distribution on the Bloch sphere and its performance has an intuitive geometric meaning. This geometric picture enables us to construct the optimal and simplest verification protocols, which  are also very useful to detecting entanglement in the untrusted network. Moreover, we show that our verification protocols can achieve almost the same sample efficiencies as protocols tailored to standard quantum state verification. Furthermore, we establish an intimate connection between the verification of Greenberger-Horne-Zeilinger states and the verification of the Bell state. By virtue of this connection we  construct the optimal protocol for verifying  Greenberger-Horne-Zeilinger states and for detecting genuine multipartite entanglement. 
\end{abstract}

\date{\today}
\maketitle

\section{INTRODUCTION}
Entanglement is the characteristic of quantum mechanics and key resource in quantum information processing \cite{Horodecki2009,Nonlocality2014,Steering2020}. 
As typical examples of bipartite and multipartite entangled states, the Bell state and Greenberger-Horne-Zeilinger (GHZ) states \cite{GHZ1989,GHZ1990} play crucial  roles in numerous quantum information processing tasks and in foundational studies, such as quantum teleportation \cite{Tele1993,Tele1997,Tele2004}, quantum key distribution \cite{E91,DI2007}, quantum random number generation \cite{QRNG2016}, and nonlocality tests \cite{CHSH1969,Pan2000}.
Furthermore, as a special example of graph states \cite{hein2004}, GHZ states are  useful to constructing quantum networks \cite{Kimble2008,Wehner2018} designed for distributed quantum information processing, such as quantum secret sharing \cite{QSS1999,QSS2014}, quantum conference key agreement \cite{Zhao2020GHZ} and distribution \cite{Fu2015GHZ}.

To guarantee the proper functioning of a quantum network, it is essential to verify the entangled state deployed in the network accurately and efficiently, especially for untrusted networks \cite{Eisert2020,Supic2020,Pappa2012,McCut2016,Supic2016,Gheorghiu2017,Lu2020,Unnikrishnan2020,Markham2020}. This scenario has wide applications in quantum information processing, such as one-sided device-independent (DI) quantum key distribution \cite{Branciard2012}, anonymous communication \cite{Unnikrishnan2019,Hahn2020GHZ}, and verifiable quantum secure modulo summation \cite{Hayashi2019}.
Meanwhile, this problem is  tied to the foundational studies on quantum steering in the asymmetric scenario \cite{Steering2007,ExpSteering2010,Steering2015,Steering2020} and the uncertainty principle in the presence of quantum memory \cite{BertCCR10,Zhu21}.

 Unfortunately, not much is known about quantum verification in untrusted networks despite its significance. This is because not all parties in the networks are honest, and the verification problem gets much more complicated in the presence of dishonest parties. In particular, traditional tomographic approaches are not applicable in the network setting even if their low efficiency is tolerable. Also, most alternative approaches, including direct fidelity estimation \cite{Flammia2011} and quantum state verification (QSV) \cite{HayaMT06,Haya09,pallister2018,ZhuH2019AdvS,ZhuH2019AdvL,Takeuchi2018}, are not applicable, although QSV can address the adversarial  scenario in which the source is not trustworthy \cite{Takeuchi2018,ZhuH2019AdvS,ZhuH2019AdvL}. DI QSV \cite{Dimic2021} based on self-testing \cite{Mayers2004,Supic2020,McKague2012,
Yang2014,Kaniewski2016,Hayashi2018,
Vidick2020} can be applied in the network setting in principle, but is too resource consuming and too demanding with current technologies. For the Bell state and GHZ states, optimal verification protocols are known  when all parties are honest \cite{pallister2018,ZhuH2019O,wang2019,li2019_bipartite,yu2019,li2020GHZ,Dangniam2020}. In the network setting,  however, only  suboptimal protocols are known in the literature \cite{Pappa2012,McCut2016,Supic2016,Gheorghiu2017,Unnikrishnan2020}. 

In this paper, we propose a simple approach for verifying the Bell state over an untrusted network in the semi-device-independent (SDI) scenario in which one party is not honest. Only local projective measurements are required  for the honest party. In addition, we establish a simple  connection between verification   protocols of the Bell state and probability distributions on the Bloch sphere and reveal an intuitive geometric interpretation of the performance of each verification protocol. By virtue of this geometric picture, we construct the optimal and simplest protocols for verifying the Bell state, which can also be applied to detecting entanglement in the untrusted network. Moreover, we determine the sample efficiencies of our SDI verification protocols in addition to the guessing probabilities.

Furthermore, we establish an intimate connection between the verification of GHZ states and the verification of the Bell state. Thanks to this connection,  efficient protocols for verifying GHZ states can easily be constructed from the counterparts for the Bell state.  Notably, this connection enables us to  construct the optimal protocol for verifying  GHZ states and for detecting genuine multipartite entanglement (GME). To put our work in perspective, we also provide a detailed comparison between  SDI QSV
considered in this work and standard QSV as well as DI QSV based on  self-testing. For the Bell state and GHZ states, SDI verification can achieve almost the same sample efficiency as standard QSV; by contrast, the sample efficiency in the DI scenario is in general quadratically worse in the infidelity unless there exists a suitable Bell inequality for which 
 the quantum bound coincides with the algebraic bound.

\section{RESULTS}
\noindent\textbf{Verification of the  Bell state}\\
Suppose two distant parties, Alice and Bob, want to create the Bell state $|\Phi\rangle=(|00\>+|11\>)/\sqrt{2}$ as follows:  Bob first prepares $|\Phi\>$ in his lab and then sends one particle of the entangled pair to Alice using a quantum channel.  To verify this state Alice can perform a random projective measurement from a set of accessible measurements and then ask Bob to guess the measurement outcome given the measurement chosen. Each projective measurement is specified by a unit vector $\bm{r}$ on the Bloch sphere, which specifies the two outcomes $P_\pm=(\mathbb{I} \pm \bm{r}\cdot\bm{\sigma})/2$, where $\bm{\sigma}$ is the vector composed of the three Pauli matrices. 
If Bob is honest and prepares the target state $|\Phi\>$, then his reduced states corresponding to the two outcomes $P_{+}$ and $P_-$ have mutually orthogonal supports, so he can guess the measurement outcome with certainty by performing a suitable projective measurement.

If Bob is not honest and tries to prepare a different  state $\rho$ instead of $|\Phi\>$, then his \emph{guessing probability}---the probability of successful guess---would be limited.
In this case, Alice cannot distinguish two states that can be turned into each other by local operations of  Bob; nevertheless, she can verify whether the state prepared is indeed $|\Phi\>$ up to these local operations.
Let $\rho_{\pm}=\tr_A(\rho P_{\pm})$ be the unnormalized reduced states of Bob. To guess the measurement outcome of Alice, Bob can perform a two-outcome POVM $\{E_+, E_-\}$ to distinguish $\rho_+$ and $\rho_-$.
By the Helstrom theorem \cite{Helstrom1976}, the
maximum guessing probability $ \gamma(\rho,\bm{r})$ over all POVMs (or projective measurements) reads $ \gamma(\rho,\bm{r})=(1+\|\rho_+-\rho_-\|_1)/2$.

Recall that a general  two-qubit state has the form
\begin{equation}
\rho=\frac{1}{4}\biggl(\mathbb{I} +\bm{a}\cdot \bm{\sigma} \otimes \mathbb{I}+\mathbb{I}\otimes \bm{b}\cdot \bm{\sigma}+\sum_{j,k}T_{jk}\sigma_j\otimes \sigma_k\biggr), \label{eq:twoqubit}
\end{equation}
where $\sigma_j, \sigma_k$ are Pauli matrices (also denoted by $X,Y,Z$),  $\bm{a}$ and $\bm{b}$ are the Bloch vectors of the two reduced states, and  $T$ is the correlation matrix.
If $\rho$ is pure, then we can deduce
(cf. Supplementary Note 1),
\begin{align}\label{eq:gammaTr}
\gamma(\rho,\bm{r})&=\frac{1}{2}\bigl(1+\|T^\rmT\bm{r}\|\bigr)=\frac{1}{2}\bigl(1+\bigl\|\sqrt{T T^\rmT }\lsp\bm{r}\bigr\|\bigr).
\end{align}
To understand the geometric meaning of $\gamma(\rho,\bm{r})$, note that  the set of vectors $\{\sqrt{T T^\rmT}\lsp\bm{r}: |\bm{r}|=1\}$  forms a rotational ellipsoid, which is called the \emph{correlation ellipsoid} and denoted by $\scrE_\rho$ as illustrated in Fig.~\ref{fig:Correlation-ellipsoid} (cf. the steering ellipsoid  \cite{Jevtic2014, Zhang2019}). 
The  semi-major axis $\bm{v}$ and  semi-minor axis of $\scrE_\rho$ have length 1 and $C$, respectively, where $C$ is the concurrence of $\rho$ \cite{Wootters1998,Horodecki2009}. In addition, the radius $\|\sqrt{T T^\rmT}\lsp\bm{r}\bigr\|$ is determined by $C$ and the angle between $\bm{r}$ and the semi-major axis as follows,
\begin{equation}\label{eq:Tr}
\bigl\|\sqrt{T T^\rmT}\lsp\bm{r}\bigr\|=\|T^\rmT\bm{r}\|=\sqrt{C^2+(1-C^2)(\bm{r}\cdot\bm{v})^2}.
\end{equation}

\begin{figure}
\begin{center}
  \includegraphics[width=7cm]{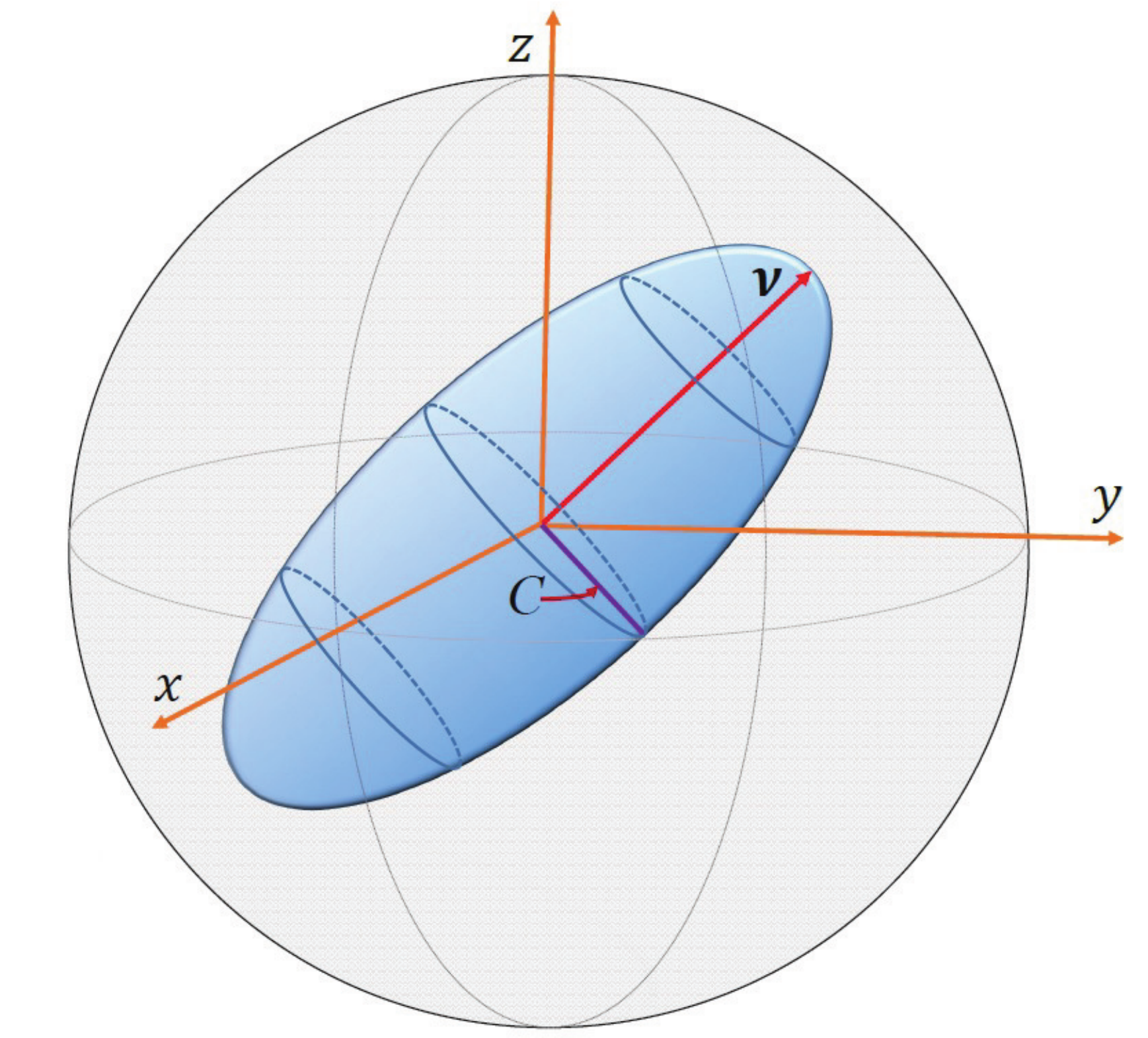}
  \caption{\label{fig:Correlation-ellipsoid}
Geometric illustration of the  $XYZ$ protocol in the Bloch sphere. For a given concurrence $C$, the guessing probability is maximized when the semi-major axis $\bm{v}$ of the correlation ellipsoid parallels one of the eight intelligent directions.     
}
\end{center}
\end{figure}

A verification strategy of Alice is determined by  a probability distribution $\mu$ on the Bloch sphere, which specifies the probability of performing each projective measurement. Given the strategy $\mu$ and the state $\rho$, the maximum average guessing probability of Bob reads
\begin{equation}
\gamma(\rho,\mu):=\int\rmd\mu(\bm{r})\gamma(\rho,\bm{r})=\frac{1}{2}+\frac{1}{2}\int\rmd\mu(\bm{r}) \|T^\rmT\bm{r}\|,
\end{equation}
where the bias is a weighted average of radii of the correlation ellipsoid. Denote by $\gamma_2(C,\mu)$ the maximum guessing probability over all pure states with concurrence at most $C$.  Note that  maximizing $\gamma(\rho,\mu)$ for a given concurrence amounts to choosing a proper orientation of the correlation ellipsoid so as to maximize the weighted average of radii, as illustrated in \fref{fig:Correlation-ellipsoid}. This intuition leads to the following theorem  as proved in Supplementary Note 1.
\begin{theorem}\label{thm:MGP}
Suppose $0\leq C\leq 1$; then
\begin{align}
\gamma_2(C,\mu)=&\frac{1}{2}[1+ g(C,\mu)],\label{eq:gammaCmu}\\
g(C, \mu ) :=&\max_{\bm{v}}\int\rmd\mu(\bm{r}) \sqrt{C^2+(1-C^2)(\bm{r}\cdot\bm{v})^2}, \label{eq:gCmu}
\end{align}
where the maximization in \eref{eq:gCmu} is over all unit vectors.
\end{theorem}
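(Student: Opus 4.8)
The plan is to reduce the optimization over pure states to a one-parameter family labeled by the orientation of the correlation ellipsoid, using the explicit radius formula already in hand. First I would combine \eref{eq:gammaTr} and \eref{eq:Tr}: for a pure two-qubit state $\rho$ with concurrence $C_\rho$ and semi-major axis $\bm{v}$ (a unit vector) of $\scrE_\rho$, the guessing probability along $\bm{r}$ is $\gamma(\rho,\bm{r})=\frac{1}{2}\bigl(1+\sqrt{C_\rho^2+(1-C_\rho^2)(\bm{r}\cdot\bm{v})^2}\bigr)$. Averaging against the strategy $\mu$ then gives
\begin{equation*}
\gamma(\rho,\mu)=\frac{1}{2}+\frac{1}{2}\int\rmd\mu(\bm{r})\sqrt{C_\rho^2+(1-C_\rho^2)(\bm{r}\cdot\bm{v})^2},
\end{equation*}
so that $\gamma(\rho,\mu)$ depends on $\rho$ only through the pair $(C_\rho,\bm{v})$.

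Next I would show that every pair $(C_\rho,\bm{v})$ with $C_\rho\in[0,1]$ and $|\bm{v}|=1$ is realized by some pure state. The Schmidt-form states $\cos\theta\ket{00}+\sin\theta\ket{11}$ already attain every concurrence $C_\rho=|\sin 2\theta|\in[0,1]$. Since concurrence is a local-unitary invariant, applying a local unitary on Alice's side leaves $C_\rho$ fixed while acting on $T$ by an $\mathrm{SO}(3)$ rotation of its row index; consequently $\sqrt{TT^\rmT}$ is conjugated by that rotation and the semi-major axis $\bm{v}$ is rotated rigidly. As the rotation ranges over all of $\mathrm{SO}(3)$, the direction $\bm{v}$ sweeps out the entire unit sphere for each fixed $C_\rho$. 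Hence
\begin{equation*}
\gamma_2(C,\mu)=\frac{1}{2}+\frac{1}{2}\max_{\substack{0\le c\le C\\ |\bm{v}|=1}}\int\rmd\mu(\bm{r})\sqrt{c^2+(1-c^2)(\bm{r}\cdot\bm{v})^2}.
\end{equation*}

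It remains to discharge the constraint $c\le C$. Setting $t=(\bm{r}\cdot\bm{v})^2\in[0,1]$, the square of the integrand equals $c^2(1-t)+t$, whose $c$-derivative $2c(1-t)$ is nonnegative; thus the integrand is nondecreasing in $c$ for every fixed $\bm{r}$, and so is the integral. The maximum over $c\le C$ is therefore attained at $c=C$, leaving $\max_{|\bm{v}|=1}\int\rmd\mu(\bm{r})\sqrt{C^2+(1-C^2)(\bm{r}\cdot\bm{v})^2}$, which is exactly $g(C,\mu)$ of \eref{eq:gCmu}; this maximum is attained because the integrand is continuous in $\bm{v}$ on the compact unit sphere. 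Substituting yields \eref{eq:gammaCmu}. I expect the genuine obstacle to be the second step---verifying that the semi-major-axis direction can be rotated to an arbitrary unit vector at fixed concurrence---whereas the monotonicity in the final step is elementary.
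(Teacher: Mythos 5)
Your proposal is correct and follows essentially the same route as the paper: reduce $\gamma(\rho,\mu)$ to a function of $(C_\rho,\bm{v})$ via \eref{eq:gammaTr} and \eqref{eq:Tr}, and then observe that every concurrence--orientation pair is realized by some pure state. You simply make explicit two points the paper leaves implicit, namely the construction of the realizing state (Schmidt form plus a local rotation of the correlation ellipsoid) and the monotonicity in $c$ needed to replace ``concurrence at most $C$'' by ``concurrence equal to $C$''.
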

Any unit vector $\bm{v}$ that maximizes the integration in \eref{eq:gCmu} is called an \emph{intelligent direction}. For a given concurrence, the guessing probability is maximized when the major axis of the correlation ellipsoid parallels an intelligent direction. 
When $C=1$, the correlation ellipsoid is a sphere, in which case \thref{thm:MGP} yields $g(C, \mu )=1$ and $\gamma_2(C, \mu )=1$.   When $C=0$, the correlation ellipsoid reduces to a line segment, in which case we can deduce
\begin{align}
g^\ast (\mu):=g(0, \mu )&=\max_{\bm{v}}\int\rmd\mu(\bm{r}) |\bm{r}\cdot\bm{v}|,  \label{eq:gC0} \\
\gamma_2^\ast(\mu ):=\gamma_2(0, \mu )&=\frac{1}{2}+\frac{1}{2}\max_{\bm{v}}\int\rmd\mu(\bm{r}) |\bm{r}\cdot\bm{v}|.  \label{eq:gammaC0}
\end{align}
Notably,  entanglement can be certified in the shared system when the guessing probability surpasses the threshold $\gamma_2^\ast(\mu )$. The relation between the guessing probability and concurrence for various verification protocols are illustrated in Fig.~\ref{fig:Concurrence-Pure}.

\begin{figure}
\begin{center}
  \includegraphics[width=7.5cm]{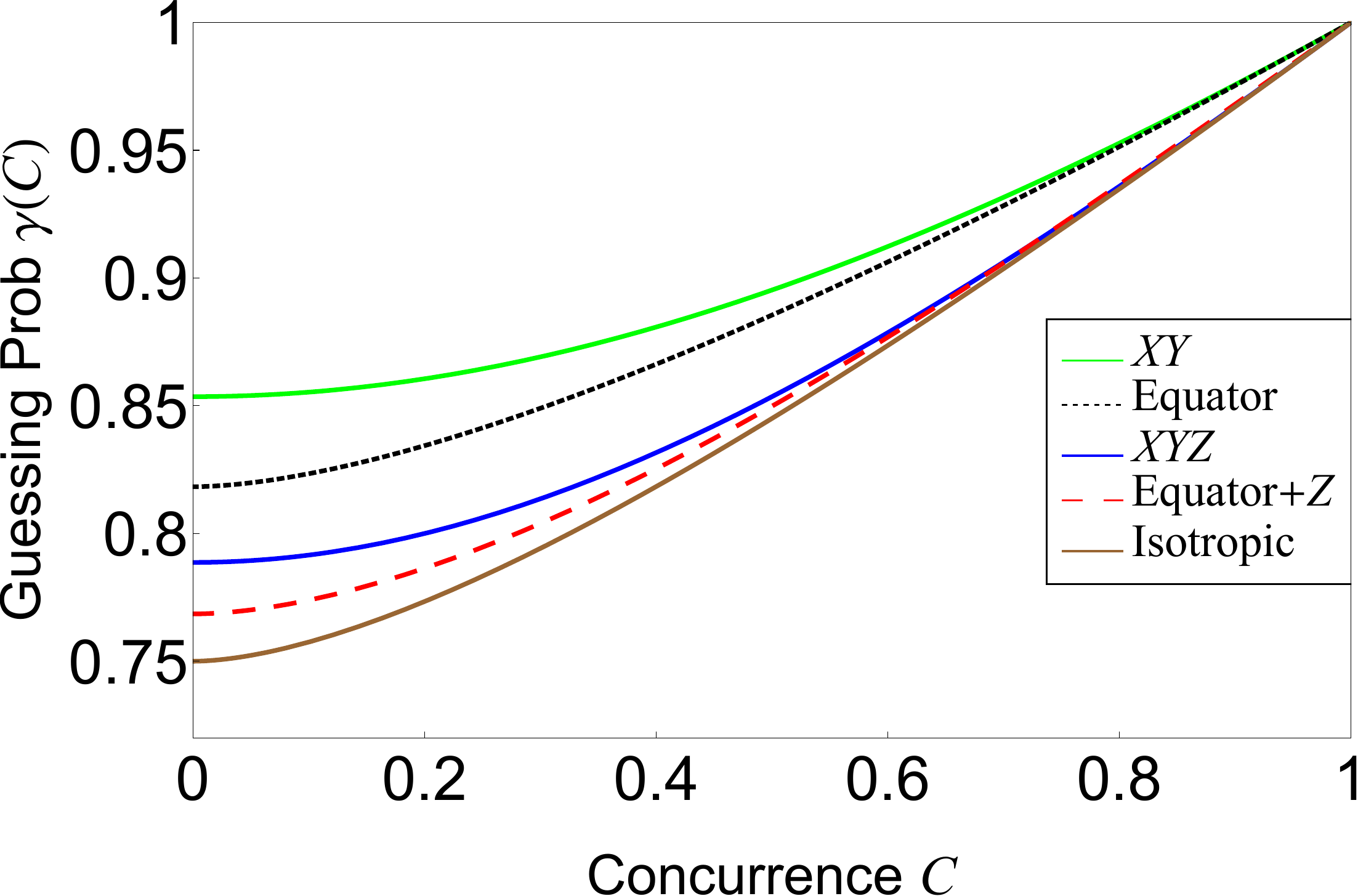}
  \caption{\label{fig:Concurrence-Pure}
  The  guessing probability $\gamma(C)=\gamma_2(C)$ as a function of the  concurrence $C$ for various verification protocols of the Bell state. Here the $XY$ protocol and isotropic protocol are introduced in the main text, while  other protocols are proposed in the Supplementary Material.
}
\end{center}
\end{figure}

\bigskip
\noindent\textbf{Alternative strategies of the adversary}\\
So far we have assumed that  the state $\rho$ prepared by Bob is a two-qubit pure state and $\rho_A:=\tr_B(\rho)$ is supported in the local support of the target Bell state, that is, the subspace spanned by $|0\rangle$ and $|1\rangle$. Can Bob gain any advantage
if $\rho_A$ is not supported in this subspace?
The answer turns out to be negative. Now Alice can first perform the projective measurement $\{P_A, \mathbb{I}-P_A\}$ with $P_A=|0\rangle\langle 0|+|1\rangle \langle 1|$ and then apply a verification protocol as before if she obtains the first outcome and reject otherwise. The maximum guessing probability $\gamma(C,\mu)$ of Bob for any pure state with $C(\rho)\leq C$ is the same as before as shown in the following lemma and proved in Supplementary Note 3.
\begin{lemma}\label{lem:gammaCHighDim}
$\gamma(C,\mu)=\gamma_2(C,\mu)$ for $0\leq C\leq 1$.
\end{lemma}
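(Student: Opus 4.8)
The plan is to prove the nontrivial direction $\gamma(C,\mu)\le\gamma_2(C,\mu)$, since the reverse inequality is immediate (two-qubit states with $\rho_A$ supported in $V:=\mathrm{span}\{\ket0,\ket1\}$ form a special case). So I would fix an arbitrary pure state $\ket\psi$ on $\caH_A\otimes\caH_B$ with concurrence $C(\psi)\le C$ and show that, under Alice's augmented protocol (first measure $\{P_A,\mathbb{I}-P_A\}$, then perform the $\bm r$ measurement inside $V$), Bob's average guessing probability cannot exceed $\gamma_2(C,\mu)$.

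First I would decompose $\ket\psi=\ket{\psi_1}+\ket{\psi_0}$ with $\ket{\psi_1}=(P_A\otimes\mathbb{I})\ket\psi$ and $\ket{\psi_0}=((\mathbb{I}-P_A)\otimes\mathbb{I})\ket\psi$, and set $q=\braket{\psi_1}{\psi_1}$. Since the $\pm$ projectors are supported on $V$, the reject outcome contributes nothing to the pass probability and the conditional reduced states $\rho_\pm$ depend only on $\ket{\psi_1}$ (the cross terms drop out because the $A$-factors of $\ket{\psi_1}$ and $\ket{\psi_0}$ live in orthogonal subspaces). The unnormalized Helstrom bound then gives $\gamma(\rho,\bm r)=\tfrac12(q+\|\rho_+-\rho_-\|_1)$. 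Introducing $\ket{\tilde\psi}=\ket{\psi_1}/\sqrt q$, which has Schmidt rank at most two and is therefore a genuine two-qubit pure state of some concurrence $\tilde C$, a one-line computation yields $\gamma(\rho,\bm r)=q\,\gamma(\tilde\psi,\bm r)$, whence $\gamma(\rho,\mu)=q\,\gamma(\tilde\psi,\mu)\le q\,\gamma_2(\tilde C,\mu)$ by \Thref{thm:MGP}.

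It then remains to show $q\,\gamma_2(\tilde C,\mu)\le\gamma_2(C(\psi),\mu)$, which I would reduce to two scalar facts. (a) A concurrence bound $C(\psi)\ge q\tilde C$: writing $\rho_A$ in block form relative to $V\oplus V^\perp$ and using $C(\psi)^2=2(1-\tr\rho_A^2)$, one finds $C(\psi)^2-q^2\tilde C^2=2(1-q^2)-4\tr(BB^\dagger)-2\tr(D^2)$, where $B$ and $D$ are the off-diagonal and lower-right blocks; bounding $\tr(D^2)\le(1-q)^2$ and $\tr(BB^\dagger)=\|B\|_{\mathrm{HS}}^2\le\|B\|_1^2\le q(1-q)$ (the last from trace-norm contractivity of the partial trace applied to $\ket{\psi_1}\bra{\psi_0}$) makes this difference nonnegative. (b) A star-shapedness property $g(q\tilde C,\mu)\ge q\,g(\tilde C,\mu)$: fixing the optimal direction $\bm v$ for $g(\tilde C,\mu)$ and setting $u=(\bm r\cdot\bm v)^2$, the integrand inequality $(q\tilde C)^2+(1-(q\tilde C)^2)u\ge q^2[\tilde C^2+(1-\tilde C^2)u]$ holds because the difference equals $u(1-q^2)\ge0$. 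Combining (a), (b), the monotonicity of $g$ in $C$, and $q\le1$ chains as $q\,\gamma_2(\tilde C,\mu)=\tfrac q2[1+g(\tilde C,\mu)]\le\tfrac12+\tfrac12 g(q\tilde C,\mu)=\gamma_2(q\tilde C,\mu)\le\gamma_2(C(\psi),\mu)\le\gamma_2(C,\mu)$.

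I expect the main obstacle to be the concurrence bound (a): one must recognize that although renormalizing to $\ket{\tilde\psi}$ may \emph{increase} the in-subspace concurrence, the accompanying loss factor $q$ always over-compensates, and that the clean way to see this is the trace-norm estimate $\|B\|_1\le\sqrt{q(1-q)}$ on the off-diagonal block rather than any direct diagonalization of $\rho_A$. Step (b) is elementary once the correct reduction is set up, and the remaining bookkeeping is routine.
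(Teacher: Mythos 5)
Your proposal is correct and follows the same overall route as the paper's proof: project with $\{P_A,\mathbb{I}-P_A\}$, observe that the guessing probability factorizes as $q$ times that of the renormalized Schmidt-rank-two state, and then control $q\,\gamma_2(\tilde C,\mu)$ by a concurrence bound plus a scalar inequality for $\gamma_2$. The differences lie in how the two sub-steps are justified, and in both cases your treatment is more self-contained. First, the paper simply asserts the key bound $C(\rho')\leq\min\{C(\rho)/q,1\}$ without proof, whereas you prove the equivalent statement $C(\psi)\geq q\tilde C$ via the block decomposition of $\rho_A$ and the estimates $\tr(D^2)\leq(1-q)^2$ and $\tr(BB^\dagger)\leq\|B\|_1^2\leq q(1-q)$ (the latter from trace-norm contractivity of the partial trace applied to $\ket{\psi_1}\bra{\psi_0}$); these combine to exactly $1-q^2$, so your argument is tight and correct, modulo the implicit convention that the concurrence of a higher-dimensional pure state is $\sqrt{2(1-\tr\rho_A^2)}$. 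Second, for the scalar step the paper invokes its \lref{lem:gCmup} ($p\,\gamma_2(C/p,\mu)\leq\gamma_2(C,\mu)$), which it derives from the slope bound of \lref{lem:gCslope} and the lower bound of \lref{lem:gCmuLB}; your pointwise integrand inequality $(q\tilde C)^2+\bigl(1-(q\tilde C)^2\bigr)u\geq q^2\bigl[\tilde C^2+(1-\tilde C^2)u\bigr]$ establishes the same fact directly and more elementarily. Working with $\tilde C\in[0,1]$ rather than $C(\rho)/q$ also lets you dispense with the paper's case split between $C(\rho)\leq q$ and $q\leq C(\rho)$. The only items worth adding are the degenerate case $q=0$, where $\gamma(\rho,\mu)=0\leq\gamma_2(C,\mu)$ trivially, and a one-line remark that $\gamma(\tilde\psi,\bm r)$ is unchanged under the local isometry on Bob's side that identifies $\ket{\tilde\psi}$ with a genuine two-qubit state, so that \thref{thm:MGP} indeed applies.
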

Note that $\gamma(C,\mu)=1$ when $C\geq 1$, in which case Bob can  prepare the target Bell state. So we can focus on the case $0\leq C\leq 1$. Define $\gamma^\ast(\mu):=\gamma(0,\mu)$, then  $\gamma^\ast(\mu)=\gamma_2^\ast(\mu)$ thanks to \lref{lem:gammaCHighDim}, so the threshold for entanglement detection remains the same as before; cf. \eref{eq:gammaC0}. In conjunction with the convexity of  $\gamma_2(C,\mu)$ in $C$ (cf. \lref{lem:Convex}  in Supplementary Note 2), \lref{lem:gammaCHighDim} implies that 
\begin{align}
\gamma(C,\mu)=\gamma_2(C,\mu)\leq \gamma^\ast(\mu)(1-C)+C,\quad  0\leq C\leq 1, \label{eq:gammaCmuLUB}
\end{align}
which offers the best linear upper bound for $\gamma(C,\mu)$. 
When the distribution $\mu$ is clear from the context, $\gamma_2^\ast(\mu)$ and $\gamma^\ast(\mu)$ are abbreviated as $\gamma_2^\ast$ and $\gamma^\ast$ for simplicity.

Above results can be extended to mixed states, although our main interest are pure states. Let $\hat{\gamma}(C,\mu)$ be the maximum guessing probability of Bob over all states with concurrence at most $C$. Define  $\hat{\gamma}_2(C,\mu)$ in a similar way, but assuming that $\rho_A$ is supported in the support of $P_A$. By the following theorem proved  in Supplementary Note 4, $\hat{\gamma}(C,\mu)$ and $\hat{\gamma}_2(C,\mu)$ are  weighted averages of  $\gamma^\ast(\mu)$ and $\gamma(1,\mu)=1$. 
\begin{theorem}\label{thm:Mixed}
  Suppose $0\leq C\leq 1$; then
  \begin{align}
  \hat{\gamma}(C,\mu)&=\hat{\gamma}_2(C,\mu)=
  (1-C)\gamma^\ast(\mu) +C \nonumber \\
  &=\frac{1+C}{2}+\frac{1-C}{2}\max_{\bm{v}}\int\rmd\mu(\bm{r}) |\bm{r}\cdot\bm{v}|.  \label{eq:gammahat}
  \end{align}
\end{theorem}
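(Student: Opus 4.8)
The plan is to squeeze both $\hat{\gamma}$ and $\hat{\gamma}_2$ between a dimension‑agnostic upper bound and an explicit saturating construction, and then to observe that the construction is already admissible for the more restrictive quantity $\hat{\gamma}_2$, which pins the two to a common value. For the upper bound I would first record that, for each fixed $\bm{r}$, the guessing probability $\gamma(\rho,\bm{r})=\frac{1}{2}(1+\|\tr_A[\rho(P_+-P_-)]\|_1)$ is convex in $\rho$, since $\rho\mapsto\tr_A[\rho(P_+-P_-)]$ is linear and the trace norm is convex; averaging over $\mu$ preserves convexity, so $\gamma(\cdot,\mu)$ is convex on the whole state space in any local dimension. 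Given $\rho$ with $C(\rho)\le C$, I would take an optimal convex‑roof decomposition $\rho=\sum_i p_i\proj{\psi_i}$ with $\sum_i p_iC(\psi_i)=C(\rho)$, apply convexity to get $\gamma(\rho,\mu)\le\sum_i p_i\gamma(\proj{\psi_i},\mu)$, and bound each pure term by the linear estimate \eref{eq:gammaCmuLUB}, which remains valid when Alice's marginal is higher dimensional thanks to \lref{lem:gammaCHighDim}. Because $\gamma^\ast\le1$, the result $\gamma^\ast+(1-\gamma^\ast)\sum_i p_iC(\psi_i)$ is nondecreasing in the average concurrence, so $\gamma(\rho,\mu)\le(1-C)\gamma^\ast+C$; this bounds both $\hat{\gamma}$ and $\hat{\gamma}_2$ from above.

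For the lower bound I would exhibit a saturating state in which Bob attaches an orthogonal flag qubit $F$ to his side: $\rho=(1-C)\proj{\psi_0}+C\proj{\psi_1}$, where $\ket{\psi_0}=\ket{e}_A\ket{\chi}_B\ket{0}_F$ is a product state whose reduced state on $A$ has Bloch vector along an intelligent direction $\bm{v}$ (so that $\gamma(\proj{\psi_0},\mu)=\gamma^\ast$), and $\ket{\psi_1}=\ket{\Phi}_{AB}\ket{1}_F$ is the target Bell state tensored with the orthogonal flag. Since $\ket{0}_F$ and $\ket{1}_F$ are orthogonal, the operators $\tr_A[\proj{\psi_0}(P_+-P_-)]$ and $\tr_A[\proj{\psi_1}(P_+-P_-)]$ live in orthogonal blocks of Bob's Hilbert space, so their trace norms add rather than partially cancel; a one‑line computation then gives $\|\tr_A[\proj{\psi_0}(P_+-P_-)]\|_1=|\bm{r}\cdot\bm{v}|$ and $\|\tr_A[\proj{\psi_1}(P_+-P_-)]\|_1=1$, whence $\gamma(\rho,\bm{r})=\frac{1}{2}[1+C+(1-C)|\bm{r}\cdot\bm{v}|]$ and $\gamma(\rho,\mu)=(1-C)\gamma^\ast+C$ exactly. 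The concurrence of $\rho$ is fixed to $C$ by two matching inequalities: the displayed decomposition gives $C(\rho)\le(1-C)\cdot0+C\cdot1=C$ by the convex‑roof definition, while Bob measuring $F$ locally produces the product state (concurrence $0$) with probability $1-C$ and the Bell state (concurrence $1$) with probability $C$, so monotonicity of concurrence under local operations yields $C(\rho)\ge C$. As Alice's marginal here is a genuine qubit, $\rho$ is admissible for $\hat{\gamma}_2$, giving $\hat{\gamma}_2\ge(1-C)\gamma^\ast+C$.

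Combining the bounds with the trivial $\hat{\gamma}_2\le\hat{\gamma}$ closes the argument, $(1-C)\gamma^\ast+C\le\hat{\gamma}_2\le\hat{\gamma}\le(1-C)\gamma^\ast+C$, so all three coincide; rewriting $\gamma^\ast$ through \eref{eq:gammaC0} then gives the stated closed form. The main obstacle is the lower bound. The naive two‑qubit mixture of the optimal separable state and the Bell state does \emph{not} saturate the chord: there the two distinguishing operators overlap and their trace norms interfere destructively, leaving the value on the pure‑state curve $\gamma_2(C,\mu)$, which is strictly below the chord for $0<C<1$ by convexity. The essential idea is that enlarging Bob's system with an orthogonal flag decouples these operators, simultaneously linearizing the guessing probability and the concurrence in the mixing weight; the step deserving the most care is the monotonicity half of $C(\rho)=C$, which guarantees that the flag does not covertly reduce the concurrence and thereby make the construction infeasible.
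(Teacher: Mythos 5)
Your proof is correct, and its two pillars coincide with the paper's: the upper bound via an optimal convex-roof decomposition plus convexity of the guessing probability in $\rho$ and the linear (chord) bound on pure components, and the lower bound via a mixture of an optimally oriented product state and a Bell state whose Bob-side supports are orthogonal (your explicit flag qubit is exactly the paper's requirement that $\tr_A\rho_1$ and $\tr_A\rho_2$ have orthogonal supports, and your two-sided argument for $C(\rho)=C$ via convex roof plus LOCC monotonicity is a careful justification of what the paper asserts directly). The one genuine difference is organizational: the paper first establishes $\hat{\gamma}_2(C,\mu)=(1-C)\gamma^\ast+C$ and then proves $\hat{\gamma}=\hat{\gamma}_2$ by a separate mixed-state projection argument (projecting Alice's side onto $P_A$, tracking the success probability $q$ and the bound $C(\rho')\leq C(\rho)/q$, and invoking monotonicity of the already-derived formula), whereas you collapse both into a single pass by observing that the pure-state bound \eqref{eq:gammaCmuLUB} is already dimension-agnostic thanks to \lref{lem:gammaCHighDim}, so the convex-roof step bounds $\hat{\gamma}$ directly. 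Your route is slightly leaner since it reuses \lref{lem:gammaCHighDim} rather than re-running its projection argument at the mixed-state level; the only small point to tidy is the edge case where individual pure components of the optimal decomposition have concurrence exceeding $1$, for which the bound $\gamma(\proj{\psi_i},\mu)\leq 1\leq \gamma^\ast+(1-\gamma^\ast)C(\psi_i)$ still closes the argument.
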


\bigskip
\noindent\textbf{Fidelity as the figure of merit}\\
Next, we consider the fidelity as the figure of merit, which is more natural for QSV. Here we assume that Bob controls the whole system except that of Alice, so we can assume that the state $\rho$ prepared by Bob is pure. Define the reduced fidelity
\begin{align}\label{def:reducedfidelity}
F_B(\rho):=\max_{U_B} \hspace*{0.05cm} \<\Phi|(\mathbb{I}_A\otimes U_B)\rho (\mathbb{I}_A\otimes U_B)^\dag |\Phi\>,
\end{align}
where the maximization is taken over all local unitary transformations on $\caH_B$. Denote by $\gamma^\rmF(F,\mu)$ the maximum guessing probability over all pure states with $F_B(\rho)\leq F$. Define $\gamma_2^\rmF(F,\mu)$ in a similar way, but assuming that $\rho_A$ is supported in the support of $P_A$.
It is known that $F_B(\rho)=[1+C(\rho)]/2\geq 1/2$  for any two-qubit pure state $\rho$ satisfying $P_A\rho_A=\rho_A$ \cite{Verstraete2002}.
So $\gamma_2^\rmF(F,\mu)$ is defined only for $1/2\leq F\leq 1$, although $\gamma^\rmF(F,\mu)$ is defined for $0\leq F\leq 1$.

The following theorem  proved in Supplementary Note~5 clarifies the relations between 
$\gamma^\rmF(F,\mu)$, $\gamma_2^\rmF(F,\mu)$, and $\gamma_2(C,\mu)$.
The guessing probabilities $\gamma^\rmF(F,\mu)$  for various verification protocols are illustrated in Fig.~\ref{fig:Fidelity-Pure}. 
\begin{theorem}\label{thm:gammaF}
	Suppose $1/2\leq F\leq 1$; then
	\begin{align}
	\gamma_2^\rmF(F,\mu)&=\gamma_2(2F-1,\mu)\leq 1-2(1-\gamma^\ast)(1-F). \label{eq:gamma2F}
	\end{align} 
	Suppose $0\leq F\leq 1$; then   
	\begin{align}
	\gamma^\rmF(F,\mu)&=\begin{cases}
	2\gamma^\ast F  & 0\leq F< 1/2,\\
\gamma_2(2F-1,\mu) & 1/2\leq F\leq 1,
	\end{cases}  \label{eq:gammaF}\\
	\gamma^\rmF(F,\mu)&\leq 1-2(1-\gamma^\ast)(1-F). \label{eq:gammaFLUB}
\end{align}	
\end{theorem}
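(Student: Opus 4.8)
The plan is to reduce the computation of both $\gamma^\rmF(F,\mu)$ and $\gamma_2^\rmF(F,\mu)$ to a one-parameter optimization over the concurrence and then to solve it using the convexity and linear-bound properties of $\gamma_2(C,\mu)$ established above. First I would determine how $F_B$ and the guessing probability depend on a general pure state $\ket{\psi}$ prepared by Bob. Set $p=\bra{\psi}(P_A\otimes\mathbb{I})\ket{\psi}$ and write $\ket{\psi}=\sqrt{p}\,\ket{\psi_1}+\sqrt{1-p}\,\ket{\psi_0}$, where $\ket{\psi_1}$ has Alice's marginal in $\supp(P_A)$ and $\ket{\psi_0}$ has Alice's marginal orthogonal to it. Because $\ket{\Phi}$ lives in $\supp(P_A)$ on Alice's side, only $\ket{\psi_1}$ can overlap the target, so $F_B(\proj{\psi})=p\,F_B(\proj{\psi_1})=p(1+C_1)/2$ with $C_1:=C(\ket{\psi_1})$, using $F_B=[1+C]/2$ for states obeying $P_A\rho_A=\rho_A$.

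In the protocol where Alice first measures $\{P_A,\mathbb{I}-P_A\}$ and rejects on the second outcome, the projectors $P_\pm$ annihilate $\ket{\psi_0}$, hence Bob's unnormalized conditional states obey $\rho_\pm=p(\rho_1)_\pm$. A Helstrom estimate then gives the pass probability at direction $\bm r$ as $\frac{1}{2}(p+\|\rho_+-\rho_-\|_1)=p\,\gamma(\proj{\psi_1},\bm r)$; averaging over $\bm r\sim\mu$ and optimizing Bob's choice of $\ket{\psi_1}$ at fixed concurrence turns this into $p\,\gamma_2(C_1,\mu)$. This reduces \thref{thm:gammaF} to maximizing $p\,\gamma_2(C_1,\mu)$ over $0\le p\le1$ and $0\le C_1\le1$ subject to $p(1+C_1)/2\le F$. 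Since $\gamma_2(C,\mu)$ is increasing in $C$ (immediate from \eref{eq:gCmu}), the optimum saturates $p(1+C_1)=2F$ up to capping $p$ at $1$, so I eliminate $p=2F/(1+C_1)$ and maximize $h(C_1):=\gamma_2(C_1,\mu)/(1+C_1)$.

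For $\gamma_2^\rmF$ the defining constraint $p=1$ forces $C_1=2F-1$, which gives the equality in \eref{eq:gamma2F}. For $\gamma^\rmF$ feasibility ($p\le1$) requires $C_1\ge 2F-1$, so the answer splits at $F=1/2$: I would show $h$ is nonincreasing on $[0,1]$, so for $F\ge1/2$ the maximum sits at the left endpoint $C_1=2F-1$ (i.e.\ $p=1$), reproducing $\gamma_2(2F-1,\mu)$, whereas for $F<1/2$ the point $C_1=0$ (i.e.\ $p=2F$) is admissible and optimal, giving $p\,\gamma_2(0,\mu)=2\gamma^\ast F$. Monotonicity of $h$ follows from convexity of $\gamma_2$ (\lref{lem:Convex}) combined with the elementary bound $\gamma_2(C,\mu)\ge(1+C)/2$, which drops out of \eref{eq:gCmu} since the integrand is at least $C$. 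The linear bounds in \eref{eq:gamma2F} and \eref{eq:gammaFLUB} then follow from the best linear bound \eref{eq:gammaCmuLUB}: putting $C=2F-1$ there yields $\gamma_2(2F-1,\mu)\le 1-2(1-\gamma^\ast)(1-F)$, while for $F<1/2$ one checks $2\gamma^\ast F\le 1-2(1-\gamma^\ast)(1-F)$ because the difference factors as $(2\gamma^\ast-1)(1-2F)\ge0$.

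I expect the main obstacle to be structural rather than computational: proving that the reduction to $p\,\gamma_2(C_1,\mu)$ is tight in both directions—that Bob can do no better, and that the leakage construction at $C_1=0$ (a product state in $\supp P_A$ carrying weight $2F$, with the remaining amplitude parked outside) is genuinely realizable—and verifying cleanly that the optimizer jumps from $p=1$ to $p=2F$ exactly at $F=1/2$. This is where the nontrivial phenomenon lives, namely that when $F<1/2$ Bob actually gains by letting Alice's marginal leak out of $\supp(P_A)$.
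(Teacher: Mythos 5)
Your proposal is correct and follows essentially the same route as the paper's Supplementary Note 5: project Alice's system onto $\supp(P_A)$ so that $\gamma(\rho,\mu)=p\,\gamma(\rho',\mu)$ and $F_B(\rho)=p\lsp F_B(\rho')$, reduce to a one-parameter optimization, and settle it via convexity of $\gamma_2$ together with the elementary bound $\gamma_2(C,\mu)\geq(1+C)/2$. Your monotonicity claim for $h(C_1)=\gamma_2(C_1,\mu)/(1+C_1)$ is exactly the paper's Lemma on the monotonicity of $q\lsp\gamma_2^\rmF(F/q,\mu)$ in $q$ under the substitution $q=2F/(1+C_1)$, and your achievability construction for $F<1/2$ matches the paper's state $\sqrt{2F}|\Psi'\rangle+\sqrt{1-2F}|22\rangle$.
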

\Eref{eq:gammaFLUB}  offers the best linear upper bound for $\gamma^\rmF(F,\mu)$ when $1/2\leq F\leq 1$ and demonstrates the robustness of the verification protocol. \Thsref{thm:MGP} to \ref{thm:gammaF} corroborate the significance of the threshold $\gamma^{\ast}$ in verifying the Bell state and entanglement in the SDI scenario. 
Moreover,  the threshold $\gamma^{\ast}$  determines the sample efficiency, as we shall see shortly. Therefore, $\gamma^{\ast}$ can be regarded as the most important figure of merit for characterizing the performance of a verification protocol.

\begin{figure}
	\begin{center}
		\includegraphics[width=7.5cm]{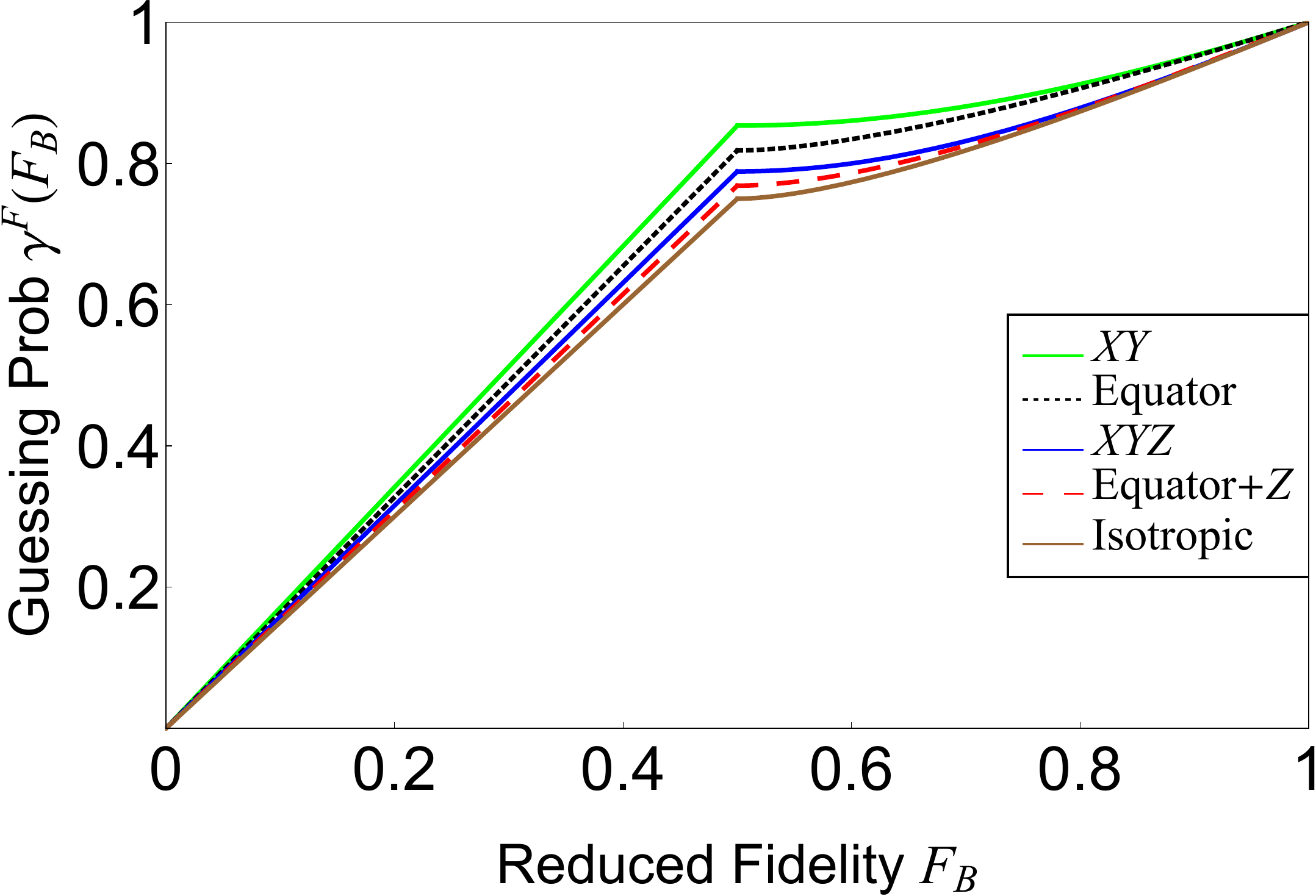}
		\caption{\label{fig:Fidelity-Pure}
			Relation between the  guessing probability $\gamma^\rmF(F_B)$ and the reduced fidelity $F_B$  for various verification protocols of the Bell state.  Here the $XY$ protocol and isotropic protocol are introduced in the main text, while  other protocols are proposed in the Supplementary Material. 
		}
	\end{center}
\end{figure}

\begin{table*}
	\caption{\label{tab:ProtocolGHZDH}
		Concrete protocols for verifying the Bell state in an untrusted quantum network.  Here $\gamma(C)$ ($\hat{\gamma}(C)$) is the maximum guessing probability for pure (mixed) states with concurrence at most $C$, and  $\bm{v}(C=0)$ is an intelligent direction for $C=0$. Entanglement can be certified when the guessing probability surpasses the threshold $\gamma^{\ast}=\gamma(0)=\hat{\gamma}(0)$. The $XY$ protocol and isotropic protocol are the simplest and optimal verification protocols, respectively.  All protocols listed, except for the isotropic protocol, can be generalized to GHZ states.
	}   
	\begin{math}
	\begin{array}{c|c|c|c|c}
	\hline\hline
	\mbox{Protocol}  &\mbox{Threshold } \gamma^{\ast}   & \gamma(C) \mbox{ (pure state)}  & \hat{\gamma}(C) \mbox{ (mixed state)} &  \bm{v}(C=0) \\[0.5ex]
	\hline
	XY  & \frac{1}{2}+\frac{1}{2\sqrt{2}}\approx 0.854     & \frac{1}{2}+\frac{1}{2}\sqrt{\frac{1+C^2}{2}} \vphantom{\bigg|}    &\frac{1}{4}[2+\sqrt{2}+(2-\sqrt{2})C\lsp]  & \frac{1}{\sqrt{2}}(1,1,0)^\rmT      \\[1.1ex]
	XYZ   & \frac{1}{2}+\frac{1}{2\sqrt{3}} \approx 0.789  &\frac{1}{2}+\frac{1}{2}\sqrt{\frac{1+2C^2}{3}}    &\frac{1}{6}[3+\sqrt{3}+(3-\sqrt{3})C\lsp] & \frac{1}{\sqrt{3}}(1,1,1)^\rmT    \\[1ex]      
	\mbox{Isotropic}  & \frac{3}{4}=0.75  &\frac{3}{4}+\frac{C^2\mathrm{arcsinh}\bigl(\frac{\sqrt{1-C^2}}{C}\bigr)}{4\sqrt{1-C^2}} & \frac{3+C}{4} & \mbox{any direction}   \\[1.7ex]
	
	\mbox{Equator} & \frac{1}{2}+\frac{1}{\pi}\approx 0.818           &\frac{1}{2}+\frac{1}{\pi}K(\sqrt{1-C^2}\lsp) &\frac{1}{2\pi}[\pi+2+(\pi-2)C\lsp] &\mbox{any direction in the $xy$-plane} \\[0.8ex]
	\mbox{Polygon(3)} & \frac{5}{6}\approx 0.833           & \frac{4+\sqrt{1+3C^2}}{6} & \frac{5+C}{6} & \mbox{any vertex direction}  \\[0.7ex]
	\mbox{Equator} +Z   & \frac{1}{2}+\frac{1}{\sqrt{4+\pi^2}}\approx 0.769       & - &\frac{1+C}{2}+\frac{1-C}{\sqrt{4+\pi^2}} & \frac{1}{\sqrt{4+\pi^2}}(\pi,0,2)^\rmT \\[1.5ex]
	\mbox{Polygon(3)}+Z & \frac{1}{2}+\frac{1}{\sqrt{13}}\approx 0.777    &  -   & \frac{1}{2}+\frac{1}{\sqrt{13}}+\bigl(\frac{1}{2}-\frac{1}{\sqrt{13}}\bigr)C & \frac{1}{\sqrt{13}}(3,0,2)^\rmT \\[1ex]
	\hline\hline
	\end{array} 
	\end{math}
\end{table*}

\bigskip
\noindent\textbf{Simplest and optimal  verification protocols}\\
Here we propose several concrete verification protocols, including the simplest and optimal protocols. The main results are summarized in \tref{tab:ProtocolGHZDH} and illustrated in Fig.~\ref{fig:Concurrence-Pure};  more technical details can be found in Supplementary Note~9.

In the simplest verification protocol, Alice can perform two projective measurements $\bm{r}_1$ and $\bm{r}_2$ with probabilities $p_1$ and $p_2$, respectively.  Here the maximum guessing probability $\gamma(C,\mu)$ only depends on the angle between $\bm{r}_1$ and $\bm{r}_2$ in addition to the probabilities $p_1$ and $p_2$. Moreover, $\gamma(C,\mu)$ is minimized when $\bm{r}_1 \cdot \bm{r}_2=0$ and $p_1=p_2=1/2$, in which case we have
\begin{equation}
g(C,\mu)=\sqrt{\frac{1+C^2}{2}},\quad  \gamma(C,\mu)=\frac{1}{2}+\frac{1}{2}\sqrt{\frac{1+C^2}{2}},
\end{equation}
and the guessing probability threshold is  $\gamma^{\ast}=(2+\sqrt{2})/4$. 
When $\bm{r}_1=(1,0,0)^\rmT$ and $\bm{r}_2=(0,1,0)^\rmT$ for example, we get the  $XY$ protocol. Previously, \rcite{Pappa2012} proposed an equivalent protocol, but neither derived the exact formula for the guessing probability  nor proved the optimality of the $XY$ protocol among all two-setting protocols.

To determine the optimal  protocol, we need to minimize $g(C,\mu)$ over $\mu$. By \thref{thm:MGP} (cf. \lref{lem:Convex}  in Supplementary Note 2), $g(C,\mu)$ is convex in $\mu$, so $g(C,\mu)$ is minimized when $\mu$ is the uniform distribution on the Bloch sphere, which yields  the  \emph{isotropic protocol} with 
\begin{align}
g(C, \mu )&
=\frac{1}{2}+\frac{C^2\mathrm{arcsinh}(\frac{\sqrt{1-C^2}}{C})}{2\sqrt{1-C^2}}, \\
\gamma(C, \mu ) &=\frac{3}{4}+\frac{C^2\mathrm{arcsinh}(\frac{\sqrt{1-C^2}}{C})}{4\sqrt{1-C^2}},
\end{align}
and the guessing probability threshold  is $\gamma^{\ast}=3/4$.

Protocols based on the Pauli $Z$ measurement and measurements on the $xy$-plane are of special interest to the verification of   GHZ states as we shall see shortly. Prominent examples include the $XYZ$ protocol (cf. \fref{fig:Correlation-ellipsoid}), equator protocol, equator$+Z$ protocol, polygon protocol,  and polygon$+Z$ protocol (see Supplementary Note 9).

\bigskip
\noindent\textbf{Sample efficiency}\\
To construct a practical verification protocol, it is crucial to clarify the sample efficiency. Although this problem has been resolved in standard QSV  \cite{pallister2018,ZhuH2019AdvS,ZhuH2019AdvL}, little is known about the sample efficiency in the DI and SDI scenarios \cite{Dimic2021}. 
Here we clarify the sample efficiency of our verification protocols in  the SDI scenario. 
Consider a quantum device that  is supposed to produce the target state $|\Phi\>\in\caH$, but actually produces the states $\rho_1,\rho_2,\dots,\rho_N$ in $N$ runs. Our task is to verify  whether these states are sufficiently close to the target state on average. Here the reduced fidelity is a natural choice for quantifying the closeness since Alice is ignorant to 
the local unitary transformations acting on Bob's system. To guarantee that the average reduced fidelity of the states  $\rho_1,\rho_2,\dots,\rho_N$  is larger than $1-\epsilon$  with  significance level $\delta$ (confidence level $1-\delta$), the number of tests required is determined in Supplementary Note 6, with the result
\begin{equation}\label{eq:NumberTestSDI}
N=\biggl\lceil\frac{ \ln \delta}{\ln[1-2(1-\gamma^{\ast})\epsilon\lsp]}\biggr\rceil\approx \frac{ \ln \delta^{-1}}{2(1-\gamma^{\ast})\epsilon}. 
\end{equation}
Note that the sample efficiency is determined by
the threshold  $\gamma^{\ast}=\gamma_2^\ast$  defined in \eref{eq:gammaC0}.

The minimum threshold  $\gamma^{\ast}=3/4$ is attained for the  isotropic protocol, in which case  $N\approx(2\ln \delta^{-1})/\epsilon$,
which is comparable to the number $(3\ln \delta^{-1})/(2\epsilon)$ required  in standard QSV \cite{pallister2018,ZhuH2019O}. So the Bell state can be verified in the SDI scenario almost as efficiently as in the standard QSV. Our protocol can achieve the optimal sample complexity because it is tied to a steering inequality whose quantum bound coincides with the algebraic bound. In  contrast, the sample complexity in the DI scenario is quadratically worse in the scaling with  $1/\epsilon$, that is, $N \propto (\ln \delta^{-1})/\epsilon^2$ \cite{Dimic2021} (cf.~Supplementary Note 7).

\begin{figure}
	\begin{center}
		\includegraphics[width=7.7cm]{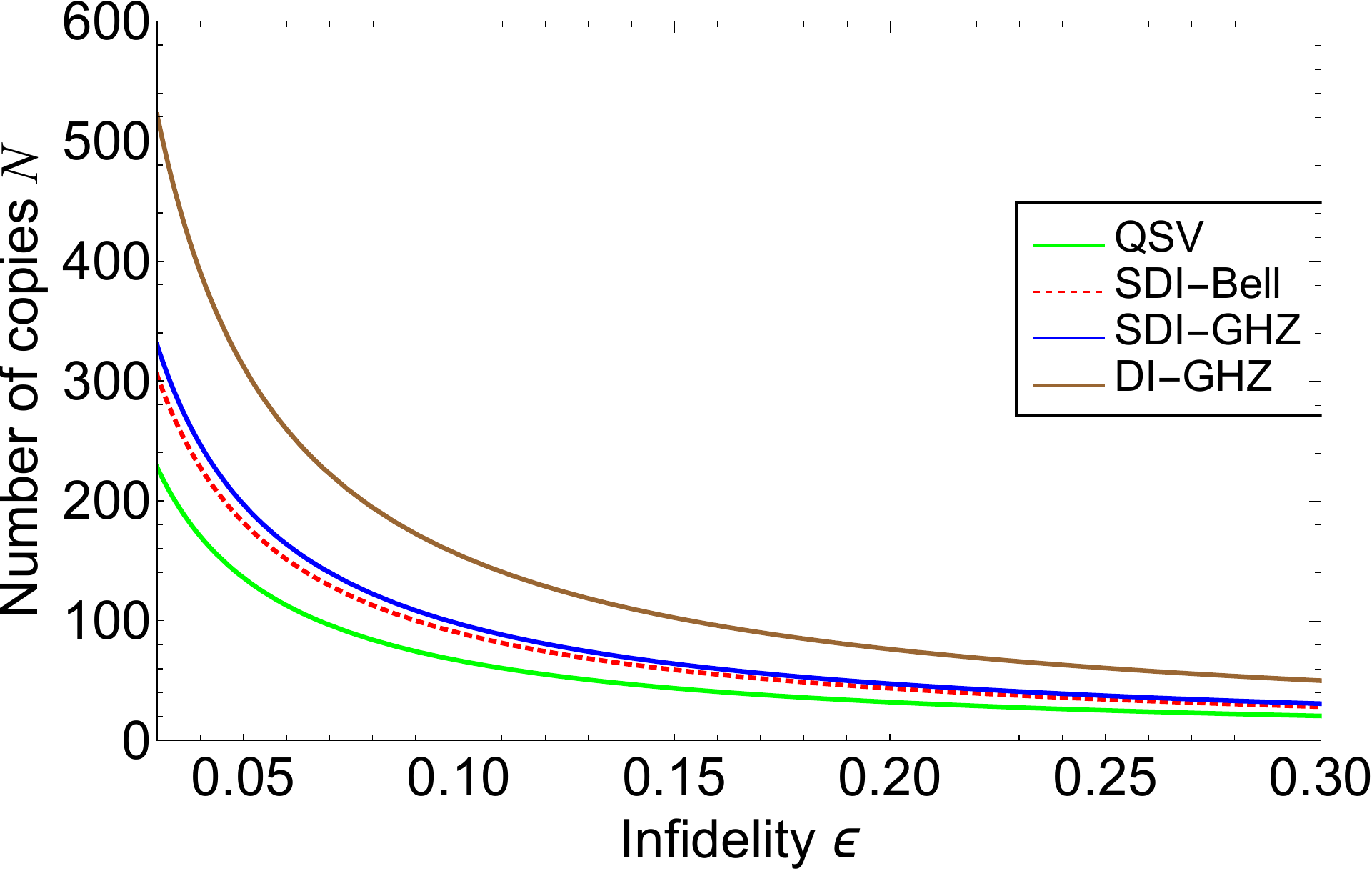}
		\caption{\label{fig:QSVDISDI}
			Sample complexities for 	
			verifying the  Bell state and GHZ states in three different scenarios. In standard QSV, 
			the Bell state and GHZ states can be verified with the same sample complexity \cite{pallister2018,li2020GHZ}. In the SDI scenario, the isotropic protocol is chosen for verifying the Bell state and  the optimized equator$+Z$ protocol is chosen for verifying the GHZ states. In the DI scenario, the Mermin inequality is employed for verifying 
			the three-qubit GHZ state  \cite{Kaniewski2016,Dimic2021}.  Here the significance level is chosen to be $\delta=0.01$. 
		}
	\end{center}
\end{figure}

\bigskip
\noindent\textbf{Verification of the GHZ state}\\
Next, consider   the  GHZ state  $|G^n\>=(|0\>^{\otimes n}+| 1\>^{\otimes n})/\sqrt{2}$ of $n$-qubits with $n\geq 3$. To verify this state, the $n$ parties can randomly perform certain tests based on local projective measurements. In each test, the verifier (one of the parties) asks  each party to  perform a local projective measurement as specified by a unit vector on the Bloch sphere and return the measurement outcome.  If all parties are honest, then
only the target state $|G^n\>$ can pass all tests with certainty, so  the GHZ state can be verified. In the presence of dishonest parties,  let $\scrD$ be the set of dishonest parties, who know which parties are honest or dishonest and who may collude with each other; let $\scrH$ be the set of honest parties (including the verifier), who do not know which other parties are honest or dishonest. The goal is to verify $|G^n\>$ up to local unitary transformations on the joint Hilbert space of $\scrD$ \cite{Pappa2012,McCut2016}.
Assuming $|\scrD|,|\scrH|\geq 1$, then   $|G^n\>$ may be regarded as a Bell state shared between $\scrH$ and $\scrD$. So the verification of the GHZ state is closely tied to the verification of the Bell state. Actually, there is no essential difference when $|\scrH|=1$.

However, a key distinction arises when $|\scrH|\geq 2$ because each member of $\scrH$  can only perform local projective measurements on his/her party. So the potential tests that the verifier can realize are restricted.
Careful analysis in Supplementary Note 10 shows that only two types of tests for verifying the GHZ state $|G^n\>$
can be  constructed from local projective measurements.
In the first type, all parties perform $Z$ measurements, and the test is passed if they obtain the same outcome.  In this way the verifier can   effectively realize the $Z$ measurement on $V_\scrH$, where $V_\scrH$ is the two-dimensional subspace spanned by $\bigotimes_{j\in \scrH}|0\rangle_j$ and $\bigotimes_{j\in \scrH}|1\rangle_j$.

In the second type of tests, party $j$  performs the $X(\phi_j)$ measurement with $\sum_j \phi_j=0\mod 2\pi$, where
$X(\phi_j)=\rme^{-\rmi \phi_j}|0\rangle\langle 1|+\rme^{\rmi \phi_j}|1\rangle\langle 0|$
corresponds to the Bloch vector $(\cos\phi_j,\sin\phi_j,0)^\rmT$, and each $\phi_j$ is decided by the verifier. 
The test is passed if the number of outcomes $-1$ is even. Suppose $\phi_1,\phi_2,\ldots, \phi_{n}$ are chosen independently and uniformly at random from the interval $[0,2\pi)$. Then   $\phi_\scrH:=\sum_{j\in \scrH}\phi_j \mod 2\pi$ is uniformly distributed in $[0,2\pi)$. Given $\phi\in[0,2\pi)$,  the average of  $\bigotimes_{j\in \scrH}  X(\phi_j)$ under the condition $\phi_\scrH =\phi$ reads
\begin{align}
\biggl\<\bigotimes_{j\in \scrH} X(\phi_j)\biggr\>_\phi=\rme^{-\rmi\phi}\bigotimes_{j\in \scrH}(|0\>\<1|)_j  +\rme^{\rmi\phi}\bigotimes_{j\in \scrH}(|1\>\<0|)_j.
\end{align}
In this way, the verifier can   effectively realize the $X(\phi)$ measurement on  $V_\scrH$, where $\phi$ is completely random. A similar result holds when  $\phi_j$ are chosen independently and uniformly at random from the discrete set $\{2k\pi/M\}_{k=0}^{M-1}$ with $M\geq 3$ being a positive integer.

By the above analysis, the verifier can effectively realize projective measurements along the $z$-axis or on the $xy$-plane when represented on the Bloch-sphere of $V_\scrH$, but not other projective measurements (assuming $|\scrH|\geq 2$).  Each verification protocol of the GHZ state corresponds to a probability distribution $\mu$ on the Bloch sphere that is supported on  the equator together with the north and south poles. Moreover, for  all protocols in \tref{tab:ProtocolGHZDH} except for the isotropic protocol (cf. Supplementary Notes 9 and 10), the guessing probabilities are the same as in the verification of the Bell state. 
To be specific, $\gamma(C,\mu)$ and $\gamma^\rmF(F,\mu)$ can be defined as before; \thsref{thm:MGP}, \ref{thm:gammaF}, and \lref{lem:gammaCHighDim} still hold, except that now $C$ refers to the bipartite concurrence between $\scrH$ and $\scrD$. Although variants of the $XY$ protocol and equator protocol were proposed previously \cite{Pappa2012,McCut2016},  such exact formulas for the guessing probabilities are not known in the literature.
To optimize the performance, $\mu$ should be uniform on the equator, which leads to the equator$+Z$ protocol; the optimal probability $p_Z$ for performing the $Z$ measurement depends on $C$ or $F$ as before.

A quantum state $\rho$ is genuinely multipartite entangled (GME) if its fidelity with the GHZ state $\tr(\rho|G^n\>\<G^n|)$ is larger than $1/2$ \cite{Guhne2009}. The  GME can be certified if the guessing probability surpasses the detection threshold $\gamma^\rmF(1/2)=\gamma^\ast$.
This threshold is minimized at the special equator$+Z$ protocol with $p_Z=4/(4+\pi^2)\approx0.288$, in which case we have
\begin{equation}\label{eq:gammastar}
\gamma^\rmF(1/2)=\gamma^\ast=\frac{1}{2}+\frac{1}{\sqrt{4+\pi^2}}\approx 0.769.
\end{equation}
This threshold is only $2.5\%$ higher than the optimal threshold $3/4$ for certifying the entanglement of the Bell state based on the isotropic protocol.

The sample efficiency for verifying the GHZ state can be determined following a similar analysis applied to the Bell state. The formula in \eref{eq:NumberTestSDI} still applies, except that the choice of verification protocols is restricted. 
Now the minimum of $\gamma^\ast$ is achieved at a special equator$+Z$ protocol  [cf.~\eref{eq:gammastar}]. So the GHZ state can be verified in the SDI scenario with almost the same efficiency as in the standard QSV \cite{li2020GHZ}, as illustrated in Fig.~\ref{fig:QSVDISDI}. In the DI scenario, by contrast, it is in general impossible to achieve such a high efficiency unless one can construct a Bell inequality for which the quantum bound coincides with the algebraic bound \cite{Dimic2021}. Notably, 
the three-qubit GHZ state can be verified with such a high efficiency by virtue of the Mermin inequality \cite{Kaniewski2016,Dimic2021}.

It should be pointed out that  all our  protocols for  verifying  GHZ states are applicable even in  the presence of an arbitrary number of  dishonest parties as long as the verifier is honest. Meanwhile, these protocols are useful for detecting GME. For some cryptographic tasks such as anonymous quantum  communication, the security for all honest parties can be guaranteed at the same time with the assistance of a trusted common random source (CRS) \cite{Pappa2012}. In this case, the number of honest parties can affect the security parameter.

\section{DISCUSSION}

We proposed a simple and practical approach for verifying the Bell state in an untrusted quantum network in which one party is not honest. We also established a simple connection between verification protocols of the Bell state and probability distributions on the Bloch sphere together with an intuitive geometric picture. Based on this connection, we derived  simple formulas for the guessing probability as  functions of the concurrence and reduced fidelity.
Meanwhile, we clarified the sample efficiency  of each verification protocol and showed that the sample efficiency is determined by
the threshold in the  guessing probability. Moreover, we constructed the optimal and simplest  protocols for verifying the Bell state, which  are also very useful to detecting entanglement in the untrusted network.

Furthermore, we reduce the verification problem of  GHZ states to the counterpart  of the Bell state, which enables us to construct the optimal protocol for verifying  GHZ states and for detecting GME. Our work shows that both Bell state and GHZ states can be verified in the SDI scenario with the same sample complexity as in standard QSV. By contrast, the sample complexity in the DI scenario is in general quadratically worse. 
This work is  instrumental to
verifying entangled states in untrusted quantum networks, which is crucial to guaranteeing  the proper functioning of quantum networks.  In addition, this work is of intrinsic interest to the foundational studies on quantum steering. In the future, it would be desirable to generalize our results to generic bipartite pure states, stabilizer states, and other  quantum states.

\bigskip

\noindent\textbf{ACKNOWLEDGEMENTS}\\
This work is  supported by  the National Natural Science Foundation of China (Grants No.~11875110 and No.~62101600) and  Shanghai Municipal Science and Technology Major Project (Grant No.~2019SHZDZX01).\\

\twocolumngrid

\newpage

\clearpage

\setcounter{equation}{0}
\setcounter{figure}{0}
\setcounter{lemma}{0}

\renewcommand{\figurename}{Supplementary Figure}
\renewcommand{\theequation}{S\arabic{equation}}
\renewcommand{\thetable}{S\arabic{table}}
\renewcommand{\thetheorem}{S\arabic{theorem}}
\renewcommand{\thelemma}{S\arabic{lemma}} 

\def\eqref#1{\textup{(\ref{#1})}}  

\newtheorem{conjecture}{Conjecture}

\newpage
\section*{Supplementary Material}
In this Supplementary Material we provide rigorous proofs for the Theorems and Lemmas  presented in the main text together with some  auxiliary results. First, we prove several key results on the guessing probability, including \esref{eq:gammaTr}, \eqref{eq:Tr}, and \thref{thm:MGP} in the main text. Then we prove 
\lref{lem:gammaCHighDim} and \thref{thm:Mixed} after summarizing  the basic properties of the guessing probability.
Next, we consider the fidelity as the figure of merit and prove \thref{thm:gammaF}. Next,
we determine the sample efficiencies of our verification protocols and explain the significance of the guessing probability threshold. Next, we compare  the sample efficiency of semi-device-independent (SDI)  quantum state verification (QSV) with standard QSV and device-independent (DI) QSV based on self-testing.  Next, we propose a number of concrete protocols for verifying the Bell state, including protocols based on polygons and platonic solids; the properties of these protocols are discussed in detail. Finally, we
establish an intimate connection between the verification of GHZ states and the verification of the Bell state, which enables us to construct efficient and even optimal protocols for verifying GHZ states. In the course of study, we introduce the concept of compatible measurements, which is of interest beyond the main focus of this work.

\section*{\label{asec:GuessProbApp}SUPPLEMENTARY Note 1: Guessing probability}
Here we prove several key results on the guessing probability, namely, \esref{eq:gammaTr}, \eqref{eq:Tr}, and \thref{thm:MGP} in the main text.
\begin{proof}[Proof of \eref{eq:gammaTr}]
	Recall that any two-qubit state has the form
	\begin{equation}
	\rho=\frac{1}{4}\biggl(\mathbb{I} +\bm{a}\cdot \bm{\sigma} \otimes \mathbb{I}+\mathbb{I}\otimes \bm{b}\cdot \bm{\sigma}+\sum_{j,k}T_{j,k}\sigma_j\otimes \sigma_k\biggr). \label{eq:twoqubitApp}
	\end{equation}
	After Alice performs the projective measurement $\bm{r}\cdot \bm{\sigma}$, the unnormalized reduced states of Bob associated with the two outcomes read
	\begin{equation}
	\rho_{\pm}=\tr_A(\rho P_{\pm})=\frac{1}{4}\biggl(\mathbb{I}\pm \bm{a}\cdot \bm{r}\mathbb{I}+\bm{b}\cdot \bm{\sigma}\pm \sum_{j,k}T_{jk}r_j\sigma_k\biggr),\;
	\end{equation}
	which implies that
	\begin{equation}
	\rho_+-\rho_-=\frac{1}{2}\biggl(\bm{a}\cdot \bm{r}\mathbb{I}+ \sum_{j,k}T_{jk}r_j\sigma_k\biggr).
	\end{equation}
	Therefore,
	\begin{align}
	\gamma(\rho,\bm{r})&=\frac{1}{2}+\frac{1}{2}\|\rho_+-\rho_-\|_1\nonumber\\
	&=\frac{1}{2}+\frac{1}{4}\biggl\|\bm{a}\cdot \bm{r}\mathbb{I}+ \sum_{j,k}T_{jk}r_j\sigma_k\biggr\|_1\nonumber\\
	&=\frac{1}{2}\bigl(1+\max\bigl\{|\bm{a}\cdot \bm{r}|,\|T^\rmT\bm{r}\| \bigr\}\bigr).
	\end{align}
	
	If  $\rho$ is pure, then $\rho_+-\rho_-$ has at most one positive eigenvalue and one negative eigenvalue.
	So $|\bm{a}\cdot \bm{r}|\leq \|T^\rmT\bm{r}\|$ and
	\begin{align}
	\gamma(\rho,\bm{r})&=\frac{1}{2}\bigl(1+\|T^\rmT\bm{r}\|\bigr),
	\end{align}
	which confirms \eref{eq:gammaTr}.
\end{proof}

\begin{proof}[Proof of \eref{eq:Tr}]
	If $\rho$ is a pure state with concurrence $C$, then its correlation matrix $T$ has three singular values $1, C, C$, and so does $T^\rmT$. Accordingly, $TT^\rmT$ has three eigenvalues $1, C^2, C^2$. The semi-major axis $\bm{v}$ of the correlation ellipsoid happens to be an eigenvector of $TT^\rmT$ with eigenvalue 1. Therefore,
	\begin{align}
	&\bigl\|\sqrt{T T^\rmT}\lsp\bm{r}\bigr\|^2=
	\|T^\rmT\bm{r}\|^2=\bm{r}^\rmT TT^\rmT\bm{r}\nonumber\\
	&=
	(\bm{r}\cdot\bm{v})^2+C^2[1-(\bm{r}\cdot\bm{v})^2]=C^2+(1-C^2)(\bm{r}\cdot\bm{v})^2,
	\end{align}
	which implies \eref{eq:Tr}.   
\end{proof}

\begin{proof}[Proof of \thref{thm:MGP}]
	Given any unit vector $\bm{v}$ on the Bloch sphere, there exists a two-qubit pure state $\rho$ with concurrence $C$ and correlation matrix $T$ such that $\bm{v}$ is an eigenvector of $TT^\rmT$ associated with the eigenvalue 1. In other words, the semi-major axis of the correlation ellipsoid of $\rho$ can be chosen to coincide with $\bm{v}$. So  \thref{thm:MGP} follows from \esref{eq:gammaTr} and \eqref{eq:Tr} in the main text. 
\end{proof}

\Thref{thm:MGP} implies that
\begin{align}
\gamma_2(C,\mu)&=\max_\rho \{\gamma(\rho,\mu)|C(\rho)\leq C\}\nonumber\\
&=\max_\rho \{\gamma(\rho,\mu)|C(\rho)= C\},
\end{align}
where both maximizations are taken over two-qubit pure states. This result corroborates the intuition that the maximum guessing probability is nondecreasing with the concurrence.

\section*{\label{asec:BasicProp}SUPPLEMENTARY Note 2: Basic properties of the guessing probability}
Here  we summarize the basic properties of the guessing probability $\gamma_2(C,\mu)$ and the function $g(C,\mu)$.

\begin{lemma}\label{lem:Convex}
	Suppose $0\leq C\leq 1$; then $g(C,\mu)$ is nondecreasing in $C$; in addition, $g(C,\mu)$ is convex in $C$ and $\mu$, respectively, that is,
	\begin{align}
	g(p_1 C_1+p_2C_2, \mu)&\leq p_1 g(C_1, \mu)+ p_2 g(C_2,\mu),\\
	g(C, p_1 \mu_1+p_2\mu_2)&\leq p_1 g(C, \mu_1)+ p_2 g(C, \mu_2),
	\end{align}
	where $p_1,p_2\geq 0$, $p_1+p_2=1$, $0\leq C_1, C_2\leq 1$, and $\mu_1,\mu_2$ are two probability distributions  on the Bloch sphere. Similarly, $\gamma_2(C,\mu)$ is nondecreasing in $C$ and  is convex in $C$ and $\mu$, respectively.
\end{lemma}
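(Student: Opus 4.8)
The plan is to reduce every claimed property to an elementary property of the scalar integrand and then transport it through the two operations that build $g$ out of that integrand, namely integration against $\mu$ and maximization over the unit vector $\bm{v}$. Writing $t=\bm{r}\cdot\bm{v}\in[-1,1]$ and
\[
\phi(C,t):=\sqrt{C^2+(1-C^2)t^2},
\]
the definition \eref{eq:gCmu} reads $g(C,\mu)=\max_{\bm{v}}\int\rmd\mu(\bm{r})\,\phi(C,\bm{r}\cdot\bm{v})$. The observation driving the proof is that both averaging against a fixed probability measure and pointwise maximization preserve monotonicity and convexity: a pointwise supremum of nondecreasing functions is nondecreasing, a supremum of convex functions is convex, and $\int\rmd\mu$ inherits both since it is a nonnegative-weighted average. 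So it suffices to establish the corresponding facts for $\phi$ in the variable $C$, and to exploit affineness in $\mu$.

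First I would handle the dependence on $C$. For fixed $t$, set $u(C)=t^2+C^2(1-t^2)$ so that $\phi=\sqrt{u}$, with $u'=2C(1-t^2)$ and $u''=2(1-t^2)$. Then $\partial_C\phi=u'/(2\sqrt{u})\ge 0$ for $0\le C\le 1$ (using $t^2\le 1$), which gives monotonicity of the integrand in $C$ (the degenerate point $C=t=0$ being absorbed by continuity). For convexity I would compute
\[
\partial_C^2\phi=\frac{u\,u''-\tfrac12(u')^2}{2\,u^{3/2}},
\]
and verify that the numerator collapses to $u\,u''-\tfrac12(u')^2=2t^2(1-t^2)\ge 0$, so $\phi(\cdot,t)$ is convex on $[0,1]$ for every $t$. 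Integrating over $\bm{r}$ at fixed $\bm{v}$ then yields a function of $C$ that is again nondecreasing and convex, and taking the maximum over $\bm{v}$ preserves both, establishing the monotonicity and convexity of $g(C,\mu)$ in $C$.

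For convexity in $\mu$ I would instead use that for each fixed $\bm{v}$ and $C$ the map $\mu\mapsto\int\rmd\mu(\bm{r})\,\phi(C,\bm{r}\cdot\bm{v})$ is affine in $\mu$, so on $\mu=p_1\mu_1+p_2\mu_2$ it splits exactly as $p_1\int\rmd\mu_1\,\phi+p_2\int\rmd\mu_2\,\phi$; taking the supremum over $\bm{v}$ and using subadditivity of the supremum gives the desired convexity inequality in $\mu$. Finally, all four properties transfer verbatim to $\gamma_2(C,\mu)=\tfrac12[1+g(C,\mu)]$, since this is an increasing affine function of $g$ and monotonicity and convexity are invariant under composition with such a map.

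The main obstacle is the convexity in $C$: the square root renders $\partial_C^2\phi$ of a priori indefinite sign, and the whole argument hinges on the nonnegativity of the numerator $u\,u''-\tfrac12(u')^2$. The clean cancellation to $2t^2(1-t^2)$ is what makes it go through, so I would isolate this identity as the crux; everything else — preservation under averaging and supremum, and the passage to $\gamma_2$ — is routine once $\phi$ is understood.
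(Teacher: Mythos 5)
Your proof is correct and follows essentially the same route as the paper, which simply asserts that the lemma follows from the explicit formula in Theorem~1 ($g$ as a supremum over $\bm{v}$ of $\mu$-averages of $\sqrt{C^2+(1-C^2)(\bm{r}\cdot\bm{v})^2}$); you have merely filled in the routine verification the paper leaves implicit, and your key computation $u\lsp u''-\tfrac12(u')^2=2t^2(1-t^2)\geq 0$ checks out. (A marginally slicker way to see convexity in $C$ is to note $\phi(C,t)=\bigl\|(C\sqrt{1-t^2},\,t)\bigr\|$ is a Euclidean norm of a vector depending affinely on $C$, but this changes nothing of substance.)
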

\Lref{lem:Convex} follows from \esref{eq:gCmu} and \eqref{eq:gammaCmu} in \thref{thm:MGP}. The monotonicity of $\gamma_2(C,\mu)$ is also clear from its definition.

If $\mu_1$ and $\mu_2$ can be turned into each other by an orthogonal transformation, then $g(C, \mu_1)=g(C, \mu_2)$, so
\begin{equation}
g(C, p_1 \mu_1+p_2\mu_2)\leq g(C, \mu_1).
\end{equation}
Therefore, $g(C,\mu)$ for given $C$ is minimized when $\mu$ is the uniform distribution on the Bloch sphere.

The \emph{verification matrix} of the distribution $\mu$ is defined as
\begin{equation}
\Xi(\mu):= \int\rmd\mu(\bm{r})\bm{r} \bm{r}^\rmT \label{eq:Ximu},
\end{equation}
which is a $3\times 3$ positive-semidefinite  matrix with trace~1, that is, $\tr[\lsp\Xi(\mu)]=1$. The operator norm
(the largest eigenvalue) of  $\Xi(\mu)$ is bounded from below by $1/3$, that is, $\|\Xi(\mu)\|\geq 1/3$. The distribution $\mu$ and the corresponding protocol are \emph{balanced} if $\Xi(\mu)$ is proportional to the identity matrix, in which case the lower bound $\|\Xi(\mu)\|\geq 1/3$ is saturated.

As we shall see shortly, the properties of $g(C, \mu )$ is closely tied to the verification matrix $\Xi(\mu)$. For example, an upper bound  for   $g(C,\mu)$ can be constructed from $\Xi(\mu)$, as shown in  the following lemma.
\begin{lemma}\label{lem:gCUB}
	Suppose $0\leq C\leq 1$; then
	\begin{align}
	g(C, \mu)\leq \sqrt{C^2+(1-C^2)\|\Xi(\mu)\|}. \label{eq:gCUB}
	\end{align}
	The inequality is saturated if there exists a unit vector $\bm{v}$ in  the eigenspace of $\Xi(\mu)$ associated with the largest eigenvalue  and  $|\bm{r}\cdot \bm{v}|$ is a constant in the support of $\mu$ except for a set of measure zero. Any such unit vector, if it exists, represents an intelligent direction. 
\end{lemma}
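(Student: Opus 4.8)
The plan is to establish \eref{eq:gCUB} by combining the concavity of the square root (Jensen's inequality) with the variational characterization of the operator norm. Fix an arbitrary unit vector $\bm{v}$ and regard $f(\bm{r}):=C^2+(1-C^2)(\bm{r}\cdot\bm{v})^2$ as a nonnegative function on the Bloch sphere. Since $\sqrt{\cdot}$ is concave and $\mu$ is a probability distribution, Jensen's inequality gives
\[
\int\rmd\mu(\bm{r})\sqrt{f(\bm{r})}\leq\sqrt{\int\rmd\mu(\bm{r})\,f(\bm{r})}.
\]

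First I would simplify the integral on the right. Because $\int\rmd\mu(\bm{r})=1$, the constant term contributes $C^2$, while the remaining term is
\[
\int\rmd\mu(\bm{r})(\bm{r}\cdot\bm{v})^2=\bm{v}^\rmT\Bigl(\int\rmd\mu(\bm{r})\,\bm{r}\bm{r}^\rmT\Bigr)\bm{v}=\bm{v}^\rmT\Xi(\mu)\bm{v}.
\]
Since $\bm{v}$ is a unit vector and $\Xi(\mu)$ is positive semidefinite, the Rayleigh quotient is bounded by the largest eigenvalue, $\bm{v}^\rmT\Xi(\mu)\bm{v}\leq\|\Xi(\mu)\|$. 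Substituting back and using $0\leq C\leq 1$ (so that $1-C^2\geq 0$) yields $\int\rmd\mu(\bm{r})\sqrt{f(\bm{r})}\leq\sqrt{C^2+(1-C^2)\|\Xi(\mu)\|}$ for \emph{every} unit vector $\bm{v}$. Taking the maximum over $\bm{v}$ on the left-hand side then gives \eref{eq:gCUB}.

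For the saturation claim, I would track the two inequalities used above and identify when each is tight. The Rayleigh bound becomes an equality exactly when $\bm{v}$ lies in the eigenspace of $\Xi(\mu)$ associated with the eigenvalue $\|\Xi(\mu)\|$. Jensen's inequality becomes an equality exactly when $f(\bm{r})$, equivalently $|\bm{r}\cdot\bm{v}|$, is constant $\mu$-almost everywhere, which is precisely the stated condition up to a set of measure zero. Hence, if a unit vector $\bm{v}$ satisfies both conditions, the chain of inequalities collapses to an equality for that $\bm{v}$, so the upper bound in \eref{eq:gCUB} is attained. Since $g(C,\mu)$ is the maximum over all unit vectors and is bounded above by the same quantity, such a $\bm{v}$ must be a maximizer, i.e., an intelligent direction.

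The computations here are entirely routine; the only point requiring care is the equality condition for Jensen's inequality, where I would invoke the strict concavity of the square root to conclude that equality forces $f(\bm{r})$ to be $\mu$-a.e.\ constant. This accounts for the \emph{except for a set of measure zero} qualifier in the statement, and is the only genuinely technical step in an otherwise elementary argument.
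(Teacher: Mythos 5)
Your proposal is correct and follows essentially the same route as the paper's proof: Jensen's inequality via concavity of the square root, followed by the Rayleigh-quotient bound $\bm{v}^\rmT\Xi(\mu)\bm{v}\leq\|\Xi(\mu)\|$, with the saturation conditions obtained by tracking when each of these two inequalities is tight. No gaps.
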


\begin{proof}[Proof of \lref{lem:gCUB}]
	Let $\bm{v}$ be an arbitrary unit vector in dimension 3. The upper bound in \eref{eq:gCUB} follows from \eref{eq:gCmu} in the main text  and the following equation
	\begin{align}
	&\int\rmd\mu(\bm{r})\sqrt{C^2+(1-C^2)(\bm{r}\cdot\bm{v})^2}\nonumber\\
	&\leq \sqrt{C^2+(1-C^2)\int\rmd\mu(\bm{r})(\bm{r}\cdot\bm{v})^2}\nonumber\\
	&= \sqrt{C^2+(1-C^2)\bm{v}^\rmT\Xi\bm{v}}\nonumber\\
	&\leq  \sqrt{C^2+(1-C^2)\|\Xi\|},
	\end{align} 
	where $\Xi$ is an abbreviation of $\Xi(\mu)$. Here the first inequality is due to the concavity of the square-root function and is
	saturated iff
	$|\bm{r}\cdot \bm{v}|$ is a constant in the support of $\mu$ except for a set of measure zero;  the second inequality is saturated iff $\bm{v}$ is an eigenvector of $\Xi$ with the largest eigenvalue $\|\Xi\|$. So the inequality in \eref{eq:gCUB} is saturated iff there exists a unit vector $\bm{v}$ that belongs to the eigenspace of $\Xi$ associated with the largest eigenvalue  and  $|\bm{r}\cdot \bm{v}|$ is a constant in the support of $\mu$ except for a set of measure zero. In addition, any such unit vector, if it exists, represents an intelligent direction.
\end{proof}

By contrast, a  lower bound for $g(C, \mu )$ can be constructed by choosing  $\bm{v}$ as  an eigenvector of $\Xi(\mu)$ associated with the largest eigenvalue. Alternative lower bounds are presented in the following lemma.
\begin{lemma}\label{lem:gCmuLB}
	Suppose $0\leq C\leq 1$; then
	\begin{align}
	g(C,\mu)&\geq \|\Xi\|+(1-\|\Xi\|)C\geq \frac{1+2C}{3}\geq C, \label{eq:gCmuLB}\\
	\gamma_2(C,\mu)&\geq \frac{1+\|\Xi\|+(1-\|\Xi\|)C}{2}\geq \frac{2+C}{3}\geq C. \label{eq:gammaCmuLB}
	\end{align}
\end{lemma}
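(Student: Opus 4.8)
The plan is to establish the leftmost inequality in each chain by a judicious choice of $\bm{v}$ together with the \emph{concavity} of the square-root function, and then to obtain the remaining inequalities by elementary estimates using $\|\Xi\|\geq 1/3$ and $C\leq 1$.

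First I would choose $\bm{v}$ to be a unit eigenvector of $\Xi=\Xi(\mu)$ associated with the largest eigenvalue $\|\Xi\|$, which is an admissible choice in the maximization defining $g(C,\mu)$ in \eref{eq:gCmu}. The key observation is that, although $t\mapsto \sqrt{C^2+(1-C^2)t}$ is concave on $[0,1]$ (which was exactly what produced the \emph{upper} bound in \lref{lem:gCUB} via Jensen's inequality applied to the expectation), concavity also yields a matching \emph{lower} bound: a concave function lies above the chord joining its endpoints, so
\begin{equation}
\sqrt{C^2+(1-C^2)t}\geq (1-t)\sqrt{C^2}+t\sqrt{C^2+(1-C^2)}=C+(1-C)t
\end{equation}
for all $t\in[0,1]$. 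Substituting $t=(\bm{r}\cdot\bm{v})^2\in[0,1]$ and integrating against $\mu$, the right-hand side integrates to $C+(1-C)\bm{v}^\rmT\Xi\bm{v}=C+(1-C)\|\Xi\|=\|\Xi\|+(1-\|\Xi\|)C$, where I used $\bm{v}^\rmT\Xi\bm{v}=\|\Xi\|$ for the top eigenvector. This gives the first inequality $g(C,\mu)\geq \|\Xi\|+(1-\|\Xi\|)C$.

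The remaining estimates are routine linear bookkeeping. The map $x\mapsto x+(1-x)C=C+(1-C)x$ is nondecreasing in $x$ because $1-C\geq 0$, so inserting $\|\Xi\|\geq 1/3$ yields $\|\Xi\|+(1-\|\Xi\|)C\geq \tfrac13+\tfrac23 C=(1+2C)/3$; and $(1+2C)/3\geq C$ is equivalent to $C\leq 1$. This completes the chain \eref{eq:gCmuLB}. The bounds on $\gamma_2$ in \eref{eq:gammaCmuLB} then follow immediately by inserting the corresponding bounds on $g$ into the identity $\gamma_2(C,\mu)=\tfrac12[1+g(C,\mu)]$ from \eref{eq:gammaCmu}, together with the simplifications $\tfrac12[1+(1+2C)/3]=(2+C)/3$ and $(2+C)/3\geq C$ (again equivalent to $C\leq 1$).

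The only conceptual point—and hence the main thing to get right—is the direction of the concavity argument: the very same concavity that produces an \emph{upper} bound for $g$ in \lref{lem:gCUB} (through Jensen applied to the average $\bm{v}^\rmT\Xi\bm{v}$) produces a \emph{lower} bound here (through the chord/secant inequality), and the two bounds coincide exactly at the endpoints $C=0$ and $C=1$. Everything else reduces to monotonicity of linear functions of $C$ and the spectral bound $\|\Xi\|\geq 1/3$, so I do not anticipate any genuine obstacle beyond stating the chord inequality correctly.
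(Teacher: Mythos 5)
Your proposal is correct and follows essentially the same route as the paper: the paper likewise obtains the lower bound from the concavity of the square-root function on $[0,1]$ (i.e., the chord inequality $\sqrt{C^2+(1-C^2)t}\geq C+(1-C)t$), integrates to get $C+(1-C)\bm{v}^\rmT\Xi\bm{v}$, maximizes over $\bm{v}$, and then applies $\|\Xi\|\geq 1/3$ and $C\leq 1$ together with $\gamma_2=\tfrac12(1+g)$. No gaps.
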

\begin{proof}
	\Eref{eq:gCmuLB} can be derived as follows,
	\begin{align}
	g(C,\mu)&=\max_{\bm{v}} \int\rmd\mu(\bm{r})\sqrt{C^2+(1-C^2)(\bm{r}\cdot\bm{v})^2}\nonumber\\
	&\geq \max_{\bm{v}} \bigl[(1-\bm{v}^\rmT\Xi\bm{v})C+\bm{v}^\rmT\Xi\bm{v}\bigr]  \nonumber\\
	& =(1-\|\Xi\|)C+\|\Xi\|\geq \frac{1+2C}{3}\geq C.
	\end{align}
	Here the first inequality follows from the concavity of  the square-root function and the constraint $0\leq (\bm{r}\cdot\bm{v})^2\leq 1$;   the second inequality follows from the facts $\|\Xi\|\geq 1/3$ and $0\leq C\leq 1$.
	
	\Eref{eq:gammaCmuLB} follows from \eref{eq:gCmuLB} and the equality $\gamma_2(C,\mu)=[1+g(C,\mu)]/2$ in \eref{eq:gammaCmu} in the main text.
\end{proof}

The bound in \eref{eq:gCmuLB} is nearly tight
when $C\rightarrow 1$. To see this point,
note that
\begin{align}
&\int\rmd\mu(\bm{r})\sqrt{C^2+(1-C^2)(\bm{r}\cdot\bm{v})^2}\nonumber\\
&\approx \int\rmd\mu(\bm{r}) \Bigl\{1-\frac{1}{2}[1-(\bm{r}\cdot\bm{v})^2](1-C^2)\Bigr\}\nonumber\\
&=1-\frac{1}{2}(1-\bm{v}^\rmT\Xi\bm{v})(1-C^2)\approx 1-(1-\bm{v}^\rmT\Xi\bm{v})(1-C)\nonumber\\ &
=[1-\bm{v}^\rmT\Xi\bm{v}]C+\bm{v}^\rmT\Xi\bm{v},
\end{align}
where we have kept the first-order approximation in $1-C$. This equation implies that
\begin{equation}\label{eq:gClim1}
g(C,\mu)\approx\|\Xi\|+(1-\|\Xi\|)C,
\end{equation}
so the bound in \eref{eq:gCmuLB} is nearly tight
when $C\rightarrow 1$. In the limit $C\rightarrow 1$, \eref{eq:gClim1} implies that  $g(C,\mu)$ is minimized when $\|\Xi\|=1/3$, that is, $\Xi=\mathbb{I}/3$, so that the verification protocol is balanced.

\begin{lemma}\label{lem:gCslope}
	Suppose $0\leq C_1\leq C_2\leq 1$; then
	\begin{equation}
	g(C_2,\mu)-g(C_1,\mu)\leq \frac{2}{3}(C_2-C_1). \label{eq:gCslope}
	\end{equation}
\end{lemma}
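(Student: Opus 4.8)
The plan is to avoid any direct pointwise estimate on the integrand and instead combine the convexity of $g(C,\mu)$ in $C$ (\lref{lem:Convex}) with the lower bound already obtained in \lref{lem:gCmuLB}. It is worth recording at the outset why the naive route is insufficient, since this pinpoints where the factor $\frac23$ must come from. Fixing the optimal direction $\bm{v}$ for $C_2$, one has $g(C_2,\mu)-g(C_1,\mu)\le\int\rmd\mu(\bm{r})\,[h(C_2)-h(C_1)]$ with $h(C)=\sqrt{C^2+(1-C^2)(\bm{r}\cdot\bm{v})^2}$, and $h(C_2)-h(C_1)=\int_{C_1}^{C_2}h'(C)\,\rmd C$. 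But $h'(C)\to 1-(\bm{r}\cdot\bm{v})^2$ as $C\to 1$, which equals $1$ whenever $\bm{r}\perp\bm{v}$, so a termwise bound only gives slope $\le 1$. The sharper constant $\frac23$ therefore cannot be extracted pointwise; it must come from the \emph{averaged} constraint $\|\Xi(\mu)\|\ge 1/3$ encoded in the verification matrix, and this is exactly what \lref{lem:gCmuLB} already captures.

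The key observation is that the desired slope bound is equivalent to a statement proved earlier. The lower bound $g(C,\mu)\ge(1+2C)/3$ from \eref{eq:gCmuLB} rearranges into
\begin{equation}
1-g(C,\mu)\le \tfrac{2}{3}(1-C),\qquad 0\le C\le 1,
\end{equation}
while the boundary value $g(1,\mu)=1$ is read off directly from \eref{eq:gCmu}. In words, the secant slope of $g(\cdot,\mu)$ measured from the right endpoint $C=1$ never exceeds $\frac23$.

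From here the proof assembles in two short steps. First I would dispose of the endpoint case $C_2=1$: the claim $g(1,\mu)-g(C_1,\mu)\le\frac{2}{3}(1-C_1)$ is then precisely the rearranged bound above. For $C_2<1$ I would invoke the three-slope (secant) inequality for convex functions applied to the ordered points $C_1\le C_2<1$, which gives
\begin{equation}
\frac{g(C_2,\mu)-g(C_1,\mu)}{C_2-C_1}\le\frac{g(1,\mu)-g(C_2,\mu)}{1-C_2}=\frac{1-g(C_2,\mu)}{1-C_2}\le\frac{2}{3},
\end{equation}
where the middle equality uses $g(1,\mu)=1$ and the final inequality is \eref{eq:gCmuLB} evaluated at $C=C_2$. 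Multiplying through by $C_2-C_1$ yields \eref{eq:gCslope}.

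The only real subtlety I anticipate is boundary bookkeeping rather than any genuine estimate: the three-slope inequality needs a strictly interior middle point, so the cases $C_2=1$ and the trivial $C_1=C_2$ must be peeled off as above. Conceptually, the main obstacle is recognizing that the lemma is not asking for a new computation at all but is a restatement of the fact that $\|\Xi(\mu)\|\ge 1/3$ caps the steepest ascent of $g$ near $C=1$ at $1-\|\Xi\|\le\frac23$; once this is seen, convexity propagates the endpoint slope bound to every subinterval.
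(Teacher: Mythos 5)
Your proof is correct and follows essentially the same route as the paper's: both cases rest on convexity of $g(\cdot,\mu)$, the endpoint value $g(1,\mu)=1$, and the lower bound $g(C,\mu)\geq(1+2C)/3$ from \lref{lem:gCmuLB}. The only cosmetic difference is that you anchor the three-chord comparison at $C_2$ (bounding the secant over $[C_2,1]$) while the paper anchors it at $C_1$ (bounding the secant over $[C_1,1]$); both are valid instances of the same convexity argument.
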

\begin{proof}
	When $C_2=1$, we have $g(C_2)=1$, so \eref{eq:gCslope} reduces to the  equation
	\begin{equation}
	1-g(C_1,\mu)\leq \frac{2}{3}(1-C_1) \label{eq:SlopeProof}
	\end{equation}
	and so  follows from \eref{eq:gCmuLB}.
	
	When $0\leq C_1\leq C_2<1$, \eref{eq:gCslope} can be derived as follows
	\begin{align}
	&g(C_2,\mu)-g(C_1,\mu)\leq \frac{C_2-C_1}{1-C_1}[g(C=1,\mu)-g(C_1,\mu)]\nonumber\\
	&=\frac{C_2-C_1}{1-C_1}[1-g(C_1,\mu)]\leq \frac{C_2-C_1}{1-C_1}\times\frac{2}{3}(1-C_1)\nonumber\\
	&=\frac{2}{3}(C_2-C_1).
	\end{align}
	Here the first inequality follows from the facts that $g(C,\mu)$ is nondecreasing and convex in $C$; the second inequality follows from \eref{eq:SlopeProof}.
\end{proof}

\begin{lemma}\label{lem:gCmup}
	Suppose $0\leq C\leq p\leq 1$ and $p>0$; then
	\begin{gather}
	p g\Bigl(\frac{C}{p},\mu\Bigr)\leq g(C,\mu), \label{eq:gCmup}\\
	p \gamma_2\Bigl(\frac{C}{p},\mu\Bigr)\leq \gamma_2(C,\mu). \label{eq:gammaCmup}
	\end{gather}
\end{lemma}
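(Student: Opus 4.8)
The plan is to establish the first inequality \eqref{eq:gCmup} for $g$ by a direct pointwise comparison of the integrands defining $g(C/p,\mu)$ and $g(C,\mu)$, and then to deduce the second inequality \eqref{eq:gammaCmup} for $\gamma_2$ as an immediate corollary using the identity $\gamma_2(C,\mu)=[1+g(C,\mu)]/2$ from \eqref{eq:gammaCmu}. The hypothesis $C\leq p$ is used precisely to guarantee that $C/p\leq 1$, so that $g(C/p,\mu)$ is defined and the radicand in its defining integral is a genuine (nonnegative-weight) expression.

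First I would start from the definition \eqref{eq:gCmu} and absorb the prefactor $p$ into the square root. The crux is the elementary identity, valid for every $\bm{r}$ and every unit vector $\bm{v}$,
\[
p\sqrt{(C/p)^2+\bigl(1-(C/p)^2\bigr)(\bm{r}\cdot\bm{v})^2}=\sqrt{C^2+(p^2-C^2)(\bm{r}\cdot\bm{v})^2},
\]
which simply collapses the factor $p$ into the radicand. Since $0\leq p\leq1$ we have $p^2-C^2\leq 1-C^2$, and since $(\bm{r}\cdot\bm{v})^2\geq0$ the argument of the left radical below does not exceed that of the right one, so monotonicity of the square root gives
\[
\sqrt{C^2+(p^2-C^2)(\bm{r}\cdot\bm{v})^2}\leq\sqrt{C^2+(1-C^2)(\bm{r}\cdot\bm{v})^2}.
\]
Integrating against $\mu$, I obtain that for each fixed $\bm{v}$ the left-hand integral is bounded by $g(C,\mu)$; taking the supremum over unit vectors $\bm{v}$ then yields $p\lsp g(C/p,\mu)\leq g(C,\mu)$, which is \eqref{eq:gCmup}.

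For \eqref{eq:gammaCmup} I would write $p\lsp\gamma_2(C/p,\mu)=\tfrac{p}{2}+\tfrac{p}{2}g(C/p,\mu)$ and bound the two terms separately: the constant term by $\tfrac{p}{2}\leq\tfrac12$ (using $p\leq1$) and the second term by $\tfrac{p}{2}g(C/p,\mu)\leq\tfrac12 g(C,\mu)$ (using \eqref{eq:gCmup}), which together give $p\lsp\gamma_2(C/p,\mu)\leq\tfrac12+\tfrac12 g(C,\mu)=\gamma_2(C,\mu)$.

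I do not expect a serious obstacle, but I would flag one tempting dead end: trying to derive the statement from the convexity of $g$ in $C$ (\lref{lem:Convex}) points the wrong way. Indeed, writing $C=p\cdot(C/p)+(1-p)\cdot0$ and applying convexity gives $g(C,\mu)\leq p\lsp g(C/p,\mu)+(1-p)g(0,\mu)$, the reverse of the desired inequality. Hence the genuine content is that one must exploit the explicit form $\sqrt{C^2+(1-C^2)(\bm{r}\cdot\bm{v})^2}$ of the integrand; the substitution that pulls the factor $p$ inside the radical, turning the slope coefficient $1-C^2$ into the smaller $p^2-C^2$, is the single decisive step.
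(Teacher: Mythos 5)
Your proof is correct, and it takes a genuinely different route from the paper's. The paper derives \eqref{eq:gCmup} from two auxiliary facts: the slope bound $g(C_2,\mu)-g(C_1,\mu)\leq \tfrac{2}{3}(C_2-C_1)$ of \lref{lem:gCslope} and the lower bound $g(C,\mu)\geq \tfrac{1+2C}{3}$ of \lref{lem:gCmuLB}, combining them to get $p\lsp g(C/p,\mu)-g(C,\mu)\leq -(1-p)\bigl[g(C,\mu)-\tfrac{2}{3}C\bigr]\leq -\tfrac{1-p}{3}$. You instead absorb the factor $p$ into the radical via the identity $p\sqrt{(C/p)^2+(1-(C/p)^2)(\bm{r}\cdot\bm{v})^2}=\sqrt{C^2+(p^2-C^2)(\bm{r}\cdot\bm{v})^2}$ (note that $C\leq p$ guarantees $p^2-C^2\geq 0$, and $p\leq 1$ gives $p^2-C^2\leq 1-C^2$) and compare integrands pointwise. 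Your argument is more elementary and self-contained, bypassing \lsref{lem:gCmuLB} and \ref{lem:gCslope} entirely; what the paper's route buys is the slightly stronger quantitative statement $p\lsp g(C/p,\mu)\leq g(C,\mu)-\tfrac{1-p}{3}$, which, however, is not exploited elsewhere. Your handling of \eqref{eq:gammaCmup} (splitting $p\gamma_2(C/p,\mu)=\tfrac{p}{2}+\tfrac{p}{2}g(C/p,\mu)$ and bounding the two terms by $\tfrac12$ and $\tfrac12 g(C,\mu)$) is exactly what the paper's one-line deduction amounts to, and your closing remark that convexity of $g$ in $C$ yields the reverse inequality is an accurate diagnosis of why a purely structural argument does not suffice here.
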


\begin{proof}
	According to \lref{lem:gCslope}, we have
	\begin{align}
	g\Bigl(\frac{C}{p},\mu\Bigr)\leq g(C,\mu)+\frac{2}{3}\Bigl(\frac{C}{p}-C\Bigr),
	\end{align}
	so 
	\begin{align}
	&pg\Bigl(\frac{C}{p},\mu\Bigr)-g(C,\mu)\leq-(1-p)g(C,\mu)+\frac{2}{3}(1-p)C\nonumber\\
	&=-(1-p)\Bigl[g(C,\mu)-\frac{2}{3}C\Bigr]\leq -\frac{1-p}{3},
	\end{align}
	which implies \eref{eq:gCmup}. Here the last inequality follows from \lref{lem:gCmuLB}.
	
	\Eref{eq:gammaCmup} follows from \eref{eq:gCmup} and the equality $\gamma_2(C,\mu)=[1+g(C,\mu)]/2$
	in \eref{eq:gammaCmu} in the main text.
\end{proof}

\begin{lemma}\label{lem:g*(mu)=2max}
	Suppose the strategy $\mu$ satisfies
	\begin{align}\label{eq:CenterSymProt}
	\int\rmd\mu(\bm{r})\bm{r}=0.
	\end{align}
	Then $g^\ast(\mu)$ defined in \eref{eq:gC0} in the main text can be expressed as
	\begin{align}\label{eq:g*(mu)=2max}
	g^\ast(\mu)=2\,\max_{R}\,\left|\int_{R}\rmd\mu(\bm{r}) \bm{r}\right|=2\max_{\bm{v}} \left|\int_{\bm{r}\cdot\bm{v}\geq 0}\rmd\mu(\bm{r}) \bm{r}\right|,
	\end{align}
	where the first maximization is over all measurable subsets of  the Bloch sphere, and the second maximization is over all unit vectors on the Bloch sphere. In addition, $\bm{v}$ is an intelligent direction at $C=0$ iff it satisfies the following two conditions,   
	\begin{align}\label{eq:vparamax}
	\bm{v} \,\mathbin{\!/\mkern-5mu/\!}\,  \int_{\bm{r}\cdot\bm{v}\geq 0}\rmd\mu(\bm{r}) \bm{r},\quad 2\left|\int_{\bm{r}\cdot\bm{v}\geq 0}\rmd\mu(\bm{r}) \bm{r}\right|=g^\ast( \mu ).
	\end{align}
\end{lemma}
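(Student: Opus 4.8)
The plan is to rewrite the functional $g^\ast(\mu)=\max_{\bm{v}}\int\rmd\mu(\bm{r})\,|\bm{r}\cdot\bm{v}|$ from \eref{eq:gC0} in terms of the vector-valued hemisphere integral and then establish the two claimed equalities by a sandwich argument. First I would fix a unit vector $\bm{v}$, split the integral over the two hemispheres $\{\bm{r}\cdot\bm{v}\geq 0\}$ and $\{\bm{r}\cdot\bm{v}<0\}$, and invoke the centering hypothesis \eref{eq:CenterSymProt}, which forces the integral of $\bm{r}$ over the second hemisphere to equal minus the integral over the first. This collapses the two contributions into a single term:
\begin{equation}
\int\rmd\mu(\bm{r})\,|\bm{r}\cdot\bm{v}|=2\,\bm{v}\cdot\bm{w}(\bm{v}),\qquad \bm{w}(\bm{v}):=\!\!\int_{\bm{r}\cdot\bm{v}\geq 0}\!\!\rmd\mu(\bm{r})\,\bm{r}.
\end{equation}
Hence $g^\ast(\mu)=\max_{\bm{v}}2\,\bm{v}\cdot\bm{w}(\bm{v})$, and the centering condition is precisely what removes the asymmetry between the two hemispheres, without which this identity would fail.

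Next I would prove that $g^\ast(\mu)$, $2\max_{\bm{v}}|\bm{w}(\bm{v})|$, and $2\max_{R}|\int_R\rmd\mu(\bm{r})\,\bm{r}|$ all coincide by chaining three inequalities. The bound $g^\ast(\mu)\leq 2\max_{\bm{v}}|\bm{w}(\bm{v})|$ follows from Cauchy--Schwarz, $\bm{v}\cdot\bm{w}(\bm{v})\leq|\bm{w}(\bm{v})|$ for a unit vector $\bm{v}$. Since hemispheres are particular measurable sets, $\max_{\bm{v}}|\bm{w}(\bm{v})|\leq\max_{R}|\int_R\rmd\mu(\bm{r})\,\bm{r}|$ trivially. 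To close the loop, I would show $2\max_{R}|\int_R\rmd\mu(\bm{r})\,\bm{r}|\leq g^\ast(\mu)$: for any measurable $R$ put $\bm{u}=\int_R\rmd\mu(\bm{r})\,\bm{r}$ (the case $\bm{u}=0$ being trivial), project onto $\hat{\bm{u}}=\bm{u}/|\bm{u}|$ to get $|\bm{u}|=\int_R\rmd\mu(\bm{r})\,(\hat{\bm{u}}\cdot\bm{r})$, and note that this integral is maximized over all sets by retaining exactly the points with $\hat{\bm{u}}\cdot\bm{r}\geq 0$, giving $|\bm{u}|\leq\hat{\bm{u}}\cdot\bm{w}(\hat{\bm{u}})=\tfrac12\int\rmd\mu(\bm{r})\,|\bm{r}\cdot\hat{\bm{u}}|\leq\tfrac12 g^\ast(\mu)$. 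Taking the supremum over $R$ delivers the last inequality, and the chain forces equality throughout.

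For the final characterization, recall that $\bm{v}$ is an intelligent direction at $C=0$ exactly when it attains the maximum defining $g^\ast(\mu)$, i.e.\ when $\int\rmd\mu(\bm{r})\,|\bm{r}\cdot\bm{v}|=g^\ast(\mu)$, which by the identity above reads $2\,\bm{v}\cdot\bm{w}(\bm{v})=g^\ast(\mu)$. I would then read the two conditions \eref{eq:vparamax} directly off the saturation of the chain $2\,\bm{v}\cdot\bm{w}(\bm{v})\leq 2|\bm{w}(\bm{v})|\leq g^\ast(\mu)$: the first inequality is tight iff $\bm{v}\mathbin{\!/\mkern-5mu/\!}\bm{w}(\bm{v})$ (equality in Cauchy--Schwarz for a unit vector), and the second is tight iff $2|\bm{w}(\bm{v})|=g^\ast(\mu)$; $\bm{v}$ is intelligent iff both hold simultaneously, which is exactly \eref{eq:vparamax}.

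The only non-routine ingredient is the hemisphere-optimality invoked in the lower bound, namely that among all measurable subsets the norm of $\int_R\rmd\mu(\bm{r})\,\bm{r}$ is largest on a hemisphere. The clean way past this is the linearization above: reducing the vector optimization to the scalar problem of maximizing $\int_R(\hat{\bm{u}}\cdot\bm{r})\,\rmd\mu(\bm{r})$, whose optimizer is manifestly the set $\{\hat{\bm{u}}\cdot\bm{r}\geq 0\}$. I would also record the degenerate case $g^\ast(\mu)=0$, equivalently $\bm{w}(\bm{v})\equiv 0$, in which all three expressions vanish and every direction is trivially intelligent, so the parallelism condition in \eref{eq:vparamax} is vacuous.
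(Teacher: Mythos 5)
Your proof is correct and follows essentially the same route as the paper's: both rest on the identity $\int\rmd\mu(\bm{r})\,|\bm{r}\cdot\bm{v}|=2\,\bm{v}\cdot\int_{\bm{r}\cdot\bm{v}\geq 0}\rmd\mu(\bm{r})\,\bm{r}$ obtained from the centering condition, on the observation that the hemisphere maximizes the linearized functional over measurable sets, and on saturation of the resulting inequality chain to characterize the intelligent directions. The only difference is organizational---your explicit three-link sandwich versus the paper's joint maximization over $\bm{v}$ and $R$---which is not substantive.
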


\begin{proof}
	Let $R$ be any measurable subset of the Bloch sphere and 
	$\overline{R}$ its  complement; then 
	we have  
	\begin{align}
	\int_{R}\rmd\mu(\bm{r}) \bm{r} = -\int_{\overline{R}}\rmd\mu(\bm{r}) \bm{r}
	\end{align}
	by the  assumption \eref{eq:CenterSymProt}. 
	Therefore,
	\begin{align}
	&\int\rmd\mu(\bm{r}) |\bm{r}\cdot\bm{v}|
	=\int_{\bm{r}\cdot\bm{v}\geq0}\rmd\mu(\bm{r}) \bm{r}\cdot\bm{v} -
	\int_{\bm{r}\cdot\bm{v}<0}\rmd\mu(\bm{r}) \bm{r}\cdot\bm{v}  \nonumber\\
	&=2\bm{v}\cdot\int_{\bm{r}\cdot\bm{v}\geq 0}\rmd\mu(\bm{r}) \bm{r}\geq  2\bm{v}\cdot\int_{R}\rmd\mu(\bm{r}) \bm{r}; \label{eq:int|rv|dmu0}
	\end{align} 
	here the inequality is saturated if $R$ is the region determined by the inequality $\bm{r}\cdot\bm{v}\geq 0$. 
	It follows  that
	\begin{align}\label{eq:int|rv|dmu}
	\int\rmd\mu(\bm{r}) |\bm{r}\cdot\bm{v}|
	&=2\max_{R}\left[\bm{v}\cdot\int_{R}\rmd\mu(\bm{r}) \bm{r}\right].
	\end{align}
	
	By plugging \eref{eq:int|rv|dmu} into \eref{eq:gC0}  in the main text we obtain
	\begin{align}
	&g^\ast( \mu)=\max_{\bm{v}}\int\rmd\mu(\bm{r}) |\bm{r}\cdot\bm{v}|       =2\,\max_{\bm{v},R}\left[\bm{v}\cdot\int_{R}\rmd\mu(\bm{r}) \bm{r}\right] \nonumber\\
	&=2\,\max_{R}\,\left|\int_{R}\rmd\mu(\bm{r}) \bm{r}\right|,
	\end{align} 
	which confirms the first equality in  \eref{eq:g*(mu)=2max}. 
	The second equality in \eref{eq:g*(mu)=2max} follows from the equation below,
	\begin{align}
	&2\max_{\bm{v}} \left|\int_{\bm{r}\cdot\bm{v}\geq 0}\rmd\mu(\bm{r}) \bm{r}\right|\leq 2\,\max_{R}\,\left|\int_{R}\rmd\mu(\bm{r}) \bm{r}\right|=g^\ast( \mu )\nonumber\\
	&=\max_{\bm{v}}\int\rmd\mu(\bm{r}) |\bm{r}\cdot\bm{v}|=2\max_{\bm{v}}\bm{v}\cdot\int_{\bm{r}\cdot\bm{v}\geq 0}\rmd\mu(\bm{r}) \bm{r}\nonumber\\
	&\leq 2\max_{\bm{v}} \left|\int_{\bm{r}\cdot\bm{v}\geq 0}\rmd\mu(\bm{r}) \bm{r}\right|.
	\end{align}
	
	According to   \eqref{eq:int|rv|dmu0},  $\bm{v}$ is an intelligent direction iff 
	\begin{equation}
	2\bm{v}\cdot\int_{\bm{r}\cdot\bm{v}\geq 0}\rmd\mu(\bm{r}) \bm{r}=g^\ast(\mu).
	\end{equation}
	In view of \eref{eq:g*(mu)=2max}, this condition holds iff the conditions in \eref{eq:vparamax} hold. 
\end{proof}

\section*{\label{asec:HigmDim}SUPPLEMENTARY Note 3: Proof of \lref{lem:gammaCHighDim}}
Here we prove \lref{lem:gammaCHighDim} presented in the main text, which shows that Bob cannot increase the guessing probability by preparing  a higher-dimensional state $\rho$ whose local support for Alice is not contained in the local support of the target Bell state.

Let $\rho_A=\tr_B(\rho)$ and let
\begin{equation}
P_A=|0\>\<0|+|1\>\<1| \label{eq:PA}
\end{equation}
be the projector onto the local support of the target Bell state. Before proving \lref{lem:gammaCHighDim}, it is instructive to give formal mathematical definitions of $\gamma_2(C,\mu)$ and $\gamma(C,\mu)$:
\begin{align}
\gamma_2(C,\mu)&=\max_{\rho}\{\gamma(\rho,\mu)| C(\rho)\leq C, \rho^2=\rho, P_A\rho_A=\rho_A\},\\
\gamma(C,\mu)&=\max_{\rho}\{\gamma(\rho,\mu)| C(\rho)\leq C, \rho^2=\rho\}.
\end{align}
Here the constraint $\rho^2=\rho$ means $\rho$ is a pure state, while the constraint $P_A\rho_A=\rho_A$ means $\rho_A$ is supported in the support of $P_A$ (the local support of the target Bell state). By definition we have
\begin{equation}
\gamma(C,\mu)\geq \gamma_2(C,\mu).
\end{equation}
To prove the equality $\gamma(C,\mu)= \gamma_2(C,\mu)$, it suffices to derive the opposite inequality.

\begin{proof}[Proof of \lref{lem:gammaCHighDim}]
	If Alice performs the projective measurement $\{P_A,\mathbb{I}-P_A\}$ on the state $\rho$, then the probability of obtaining the first outcome is
	$q=\tr[(P_A\otimes \mathbb{I}_B)\rho]$. When $q>0$, after obtaining this outcome, the state $\rho$ turns into
	\begin{equation}
	\rho'=\frac{1}{q}(P_A\otimes \mathbb{I}_B)\rho(P_A\otimes \mathbb{I}_B).
	\end{equation}
	Note that $\tr_B(\rho')$ is supported in the support of $P_A$. In addition, the concurrence of $\rho'$ obeys the upper bound
	\begin{align}\label{eq:Crhop}
	C(\rho')\leq \min\{C(\rho)/q,1\}.
	\end{align}
	
	When $q>0$ and $C(\rho)\leq q$, the guessing probability of Bob satisfies
	\begin{align}
	\gamma(\rho,\mu)&=q\gamma(\rho',\mu)\leq
	q\gamma_2(C(\rho'),\mu)\leq q\gamma_2(C(\rho)/q,\mu)\nonumber\\
	&\leq \gamma_2(C(\rho),\mu). \label{eq:gammarhoProof1}
	\end{align}
	Here the first inequality follows from the definition of $\gamma_2(C,\mu)$; the second inequality follows from the inequality $C(\rho')\leq C(\rho)/q$ and the monotonicity of $\gamma_2(C,\mu)$ in $C$ (cf. \lref{lem:Convex}); the  third inequality follows from \eref{eq:gammaCmup} in \lref{lem:gCmup}.
	
	When $0<q\leq C(\rho)\leq  1$, the guessing probability of Bob satisfies
	\begin{align}
	\gamma(\rho,\mu)&=q\gamma(\rho',\mu)\leq q\leq C(\rho)\leq \gamma_2(C(\rho),\mu), \label{eq:gammarhoProof2}
	\end{align}
	where the first inequality follows from the fact that $\gamma_2(C,\mu)\leq 1$ and the last inequality follows from \eref{eq:gammaCmuLB} in \lref{lem:gCmuLB}. 
	
	\Esref{eq:gammarhoProof1} and \eqref{eq:gammarhoProof2} together imply that
	\begin{align}
	\gamma(\rho,\mu)\leq \gamma_2(C(\rho),\mu).  \label{eq:gammarhoProof3}
	\end{align}
	Note that this equation holds even if $q=0$ in which case $\gamma(\rho,\mu)=0$. As a corollary of \eref{eq:gammarhoProof3}, we can deduce
	\begin{align}
	&\gamma(C,\mu)=\max_{\rho}\{\gamma(\rho,\mu)| C(\rho)\leq C, \rho^2=\rho\}\nonumber\\
	&\leq \max_{\rho}\{\gamma_2(C(\rho),\mu)| C(\rho)\leq C\}=\gamma_2(C,\mu). \label{eq:gammarhoproof3}
	\end{align}
	Here the last equality follows from the monotonicity of $\gamma_2(C,\mu)$ in $C$ (cf. \lref{lem:Convex}).
	Since $\gamma(C,\mu)\geq \gamma_2(C,\mu)$ by definition, \eref{eq:gammarhoproof3} implies that
	$\gamma(C,\mu)=\gamma_2(C,\mu)$ for $0\leq C\leq 1$, which confirms \lref{lem:gammaCHighDim}.
\end{proof}

\section*{\label{asec:Mix}SUPPLEMENTARY Note 4: Proof of \thref{thm:Mixed}}
Here we prove \thref{thm:Mixed} presented in the main text, which determines the maximum guessing probability for mixed states.

\begin{proof}[Proof of \thref{thm:Mixed}]
	The third equality in \eref{eq:gammahat}  in the theorem follows from \eref{eq:gammaC0}  in the main text and the equality $\gamma^\ast(\mu)=\gamma_2^\ast(\mu)$.

	The second equality in \eref{eq:gammahat} is equivalent to the following equality
	\begin{align}
	\hat{\gamma}_2(C,\mu)=
	(1-C)\gamma_2^\ast(\mu) +C  \label{eq:gammahat2}
	\end{align}
	given that $\gamma^\ast(\mu)=\gamma_2^\ast(\mu)$. 
	Suppose $\rho$ is the state prepared by Bob and $\rho_A=\tr_B(\rho)$ is supported in the support of $P_A$ defined in \eref{eq:PA}. Let $\rho=\sum_j q_j \rho_j$ be any convex decomposition of $\rho$ into pure states such that $C(\rho)=\sum_j q_j C_j$, where $C_j=C(\rho_j)$ is the concurrence of $\rho_j$.  Note that $\tr_B(\rho_j)$ is supported in the support of $P_A$ and
	$C_j\leq 1 $ for each $j$. Then the guessing probability of Bob satisfies
	\begin{align}
	\hat{\gamma}(\rho,\mu)&\leq \sum_j q_j \gamma(\rho_j,\mu)\leq \sum_j  q_j \gamma_2(C_j,\mu )\nonumber\\
	&\leq [1-C(\rho)]\gamma_2(0,\mu) +C(\rho)\gamma_2(1,\mu)\nonumber\\
	&= [1-C(\rho)]\gamma_2^\ast(\mu) +C(\rho).
	\label{eq:gammarhoMixProof1}
	\end{align}
	Here the third inequality follows from the convexity of $\gamma_2(C,\mu)$ in $C$ (cf. \lref{lem:Convex}); the last equality follows from the facts that $\gamma_2(1,\mu)=1$ and $\gamma_2^\ast(\mu)=\gamma_2(0,\mu)$.
	As an implication of \eref{eq:gammarhoMixProof1}, we have
	\begin{align}
	\hat{\gamma}_2(C,\mu)&=\max_{\rho}\{\gamma(\rho,\mu)| C(\rho)\leq C, P_A\rho_A=\rho_A\}\nonumber\\
	&\leq \max_{\rho}\{[1-C(\rho)]\gamma_2^\ast(\mu) +C(\rho)| C(\rho)\leq C\}\nonumber\\
	&=(1-C)\gamma_2^\ast(\mu) +C. \label{eq:gammarhoMixProof2}
	\end{align}
	
	To prove the second equality in \eref{eq:gammahat}, it remains to prove the opposite inequality to  \eref{eq:gammarhoMixProof2}, that is,
	\begin{equation}
	\hat{\gamma}_2(C,\mu)\geq(1-C)\gamma_2^\ast(\mu) +C. \label{eq:gammarhoMixProof3}
	\end{equation}
	Let $\rho_1$ be a pure product state such that
	$\tr_B(\rho_1)$ is supported in the support of $P_A$ and that
	$\gamma(\rho_1,\mu)=\gamma_2^\ast(\mu)$. Let $\rho_2$ be a Bell state such that $\tr_B(\rho_2)$ is supported in the support of $P_A$ and that the support of  $\tr_A\rho_2$ is orthogonal to that of  $\tr_A\rho_1$. Then  $\gamma(\rho_2,\mu)=1$ and $C(\rho_2)=1$. Let $\rho=(1-p)\rho_1+p\rho_2$ with $0\leq p\leq 1$; then
	\begin{align}
	C(\rho)&=(1-p)C(\rho_1)+p C(\rho_2)=p,\\
	\gamma(\rho,\mu)&=(1-p)\gamma(\rho_1,\mu)+p\gamma(\rho_2,\mu)\nonumber\\
	&=(1-p)\gamma_2^\ast(\mu)+p.
	\end{align}
	By setting $p=C$, we can deduce \eref{eq:gammarhoMixProof3}, which implies 
	\eref{eq:gammahat2} and the second equality in \eref{eq:gammahat}.

	Finally, let us prove the first equality in \eref{eq:gammahat}. Since $\hat{\gamma}(C,\mu)\geq \hat{\gamma}_2(C,\mu)$ by definition, it remains to prove the opposite inequality.
	
	Suppose $\rho$ is the state prepared by Bob; here the support of $\rho_A=\tr_B(\rho)$ is not restricted in contrast to the above discussions.
	Let $q=\tr[(P_A\otimes \mathbb{I}_B)\rho]$ and
	\begin{equation}
	\rho'=\frac{1}{q}(P_A\otimes \mathbb{I}_B)\rho(P_A\otimes \mathbb{I}_B),\quad q>0;
	\end{equation}
	then $C(\rho')\leq \min\{C(\rho)/q,1\}$.
	When $q>0$ and $C(\rho)\leq q$, the guessing probability of Bob satisfies
	\begin{align}
	\gamma(\rho,\mu)&=q\gamma(\rho',\mu)\leq
	q\hat{\gamma}_2(C(\rho'),\mu)\leq q\hat{\gamma}_2(C(\rho)/q,\mu)\nonumber\\
	&\leq \hat{\gamma}_2(C(\rho),\mu). \label{eq:gammarhoMixProof5}
	\end{align}
	Here the first inequality follows from the definition of $\hat{\gamma}_2(C,\mu)$; the second inequality follows from the inequality $C(\rho')\leq C(\rho)/q$ and the monotonicity of $\hat{\gamma}_2(C,\mu)$ in $C$ according to \eref{eq:gammahat2};  the  third inequality also follows from \eref{eq:gammahat2}.
	
	When $0<q\leq C(\rho)\leq  1$, the guessing probability of Bob satisfies
	\begin{align}
	\gamma(\rho,\mu)&=q\gamma(\rho',\mu)\leq q\leq C(\rho)\leq \hat{\gamma}_2(C(\rho),\mu), \label{eq:gammarhoMixProof6}
	\end{align}
	where the last inequality follows from \eref{eq:gammahat2} again.
	
	\Esref{eq:gammarhoMixProof5} and \eqref{eq:gammarhoMixProof6} together imply that
	\begin{align}
	\gamma(\rho,\mu)\leq \hat{\gamma}_2(C(\rho),\mu). \label{eq:gammarhoMixProof7}
	\end{align}  
	This equation holds even when $q=0$, in which case we have $\gamma(\rho,\mu)=0$. As a corollary of \eref{eq:gammarhoMixProof7} we can deduce 
	\begin{align}
	&\hat{\gamma}(C,\mu)=\max_{\rho}\{\gamma(\rho,\mu)| C(\rho)\leq C\}\nonumber\\
	&\leq \max_{\rho}\{\hat{\gamma}_2(C(\rho),\mu)| C(\rho)\leq C\}=\hat{\gamma}_2(C,\mu).
	\end{align}
	Here  the last equality follows from the monotonicity of $\hat{\gamma}_2(C,\mu)$ in $C$ by \eref{eq:gammahat2}. Given the opposite inequality by definition, we  conclude that $\hat{\gamma}(C,\mu)=\hat{\gamma}_2(C,\mu)$, which completes the proof of \eref{eq:gammahat} and \thref{thm:Mixed}.
\end{proof}

To complement \fref{fig:Concurrence-Pure} in the main text, the relation between the guessing probability and concurrence in the mixed-state scenario is illustrated in Supplementary Figure 1.

\begin{figure}
	\begin{center}
		\includegraphics[width=8cm]{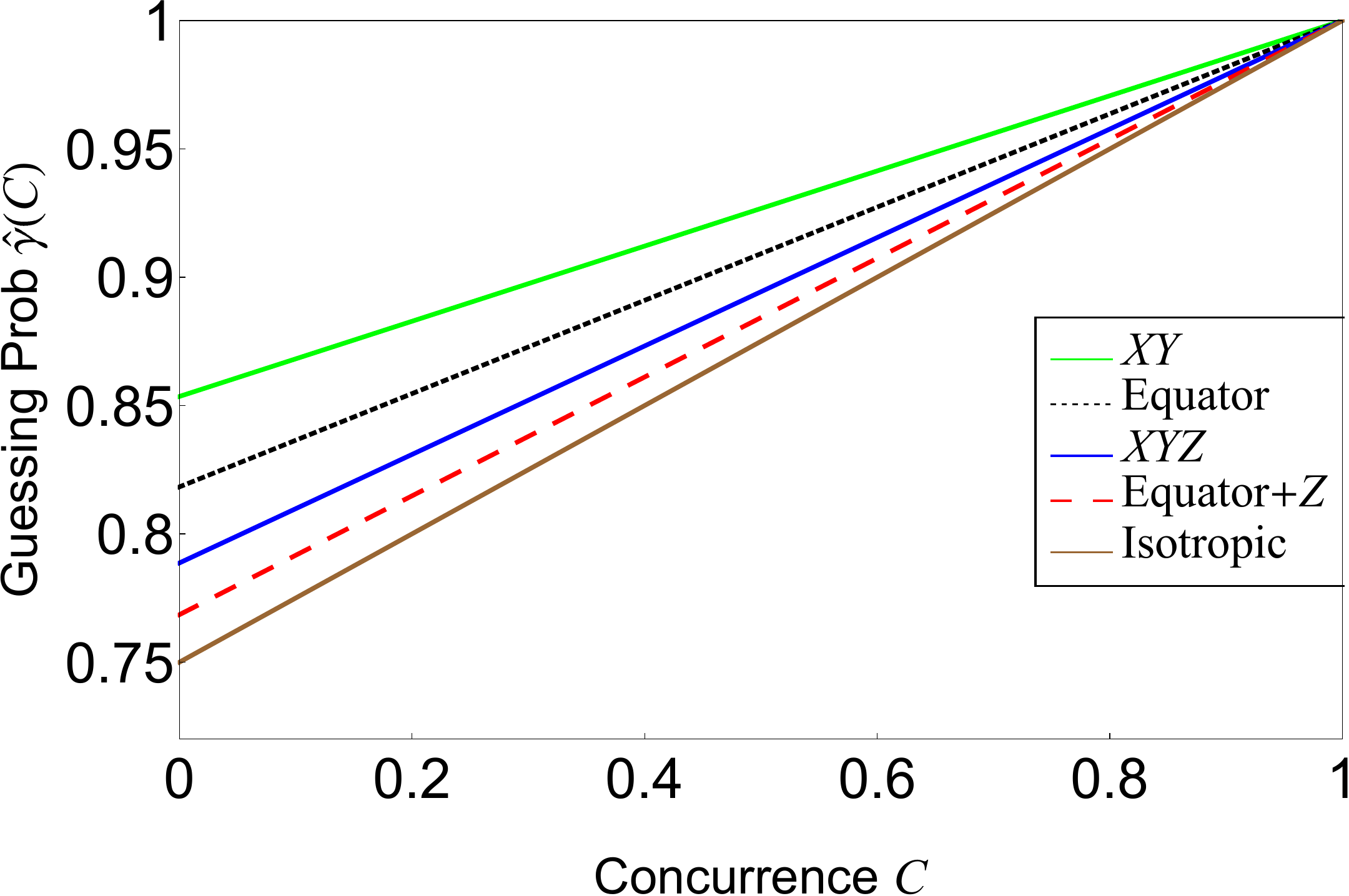}
		\caption{\label{fig:Concurrence-Mixed}
			The  guessing probability $\hat{\gamma}(C)=\hat{\gamma}_2(C)$ as a function of the concurrence $C$ in the mixed-state scenario and for various verification protocols of the Bell state. Here the $XY$ protocol and isotropic protocol are introduced in the main text, while  other protocols are proposed in Supplementary Note 9.    
		}
	\end{center}
\end{figure}

\section*{SUPPLEMENTARY Note 5: Fidelity as the figure of merit}
Here we discuss in more details the properties of the guessing probabilities $\gamma_2^\rmF(F,\mu)$ and $\gamma^\rmF(F,\mu)$ and then prove \thref{thm:gammaF} presented in the main text.

To start with we give formal mathematical definitions of   $\gamma_2^\rmF(F,\mu)$ and $\gamma^\rmF(F,\mu)$:
\begin{align}
\gamma_2^\rmF(F,\mu)&=\max_\rho \{\gamma(\rho,\mu)|F_B(\rho)\leq F, \rho^2=\rho, P_A\rho_A=\rho_A\}, \\
\gamma^\rmF(F,\mu)&=\max_\rho \{\gamma(\rho,\mu)|F_B(\rho)\leq F, \rho^2=\rho\},
\end{align}
where $\rho_A=\tr_B(\rho)$ and $P_A$ is the projector onto the local support of the target Bell state as defined in \eref{eq:PA}. Note that $F_B(\rho)\geq 1/2$ for any two-qubit pure state $\rho$ that satisfies $P_A\rho=\rho$, so $\gamma_2^\rmF(F,\mu)$ is defined only when $1/2\leq F\leq 1$, while $\gamma^\rmF(F,\mu)$ is defined for $0\leq F\leq 1$.

\subsection*{Auxiliary results}

\begin{lemma}\label{lem:gamma2F}
	Suppose $1/2\leq F\leq 1$; then
	\begin{align}
	&\gamma_2^\rmF(F,\mu)=\gamma_2(2F-1,\mu)=\frac{1}{2}+\frac{1}{2}g(2F-1,\mu)  \label{eq:gammaFtilde1}\\
	&=\frac{1}{2}+\frac{1}{2}\max_{\bm{v}}\int\rmd\mu(\bm{r}) \sqrt{(2F-1)^2+4F(1-F)(\bm{r}\cdot\bm{v})^2}.\label{eq:gammaFtilde2}
	\end{align}
	\Lref{lem:gamma2F} in particular implies that
	\begin{align}
	\gamma_2^\rmF(1/2,\mu)&=\gamma_2(0,\mu)=\gamma^\ast(\mu)=\frac{1}{2}+\frac{1}{2}g^\ast(\mu),\\
	\gamma_2^\rmF(1,\mu)&=\gamma_2(1,\mu)=\frac{1}{2}+\frac{1}{2}g(1,\mu)=1.
	\end{align}
\end{lemma}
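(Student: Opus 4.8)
The plan is to reduce the fidelity-based quantity $\gamma_2^\rmF(F,\mu)$ directly to the concurrence-based quantity $\gamma_2(C,\mu)$ of \thref{thm:MGP} by exploiting the exact relation between the reduced fidelity and the concurrence for pure states whose local support for Alice lies in the target subspace. The crucial input is the identity $F_B(\rho)=[1+C(\rho)]/2$, valid for every two-qubit pure state $\rho$ with $P_A\rho_A=\rho_A$ (from \rcite{Verstraete2002}, quoted in the main text). First I would observe that, precisely because both optimizations range over the same class of states (pure, with $P_A\rho_A=\rho_A$), this identity converts the fidelity constraint into an equivalent concurrence constraint: $F_B(\rho)\leq F \Longleftrightarrow C(\rho)\leq 2F-1$. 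Since $F$ ranges over $[1/2,1]$, the image $2F-1$ ranges over $[0,1]$, which is exactly the domain on which $\gamma_2(C,\mu)$ and \thref{thm:MGP} are defined, so no boundary issues arise.

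With the two feasible sets identified and the objective $\gamma(\rho,\mu)$ being the same in both problems, the two maxima coincide, yielding the first equality $\gamma_2^\rmF(F,\mu)=\gamma_2(2F-1,\mu)$. The second equality in \eref{eq:gammaFtilde1} is then immediate from \eref{eq:gammaCmu} of \thref{thm:MGP}, namely $\gamma_2(C,\mu)=\tfrac12[1+g(C,\mu)]$ evaluated at $C=2F-1$.

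For the explicit formula \eref{eq:gammaFtilde2} I would substitute $C=2F-1$ into the definition \eref{eq:gCmu} of $g(C,\mu)$. The only computation is the algebraic simplification $1-C^2=1-(2F-1)^2=4F(1-F)$ together with $C^2=(2F-1)^2$, which turns the integrand $\sqrt{C^2+(1-C^2)(\bm{r}\cdot\bm{v})^2}$ into $\sqrt{(2F-1)^2+4F(1-F)(\bm{r}\cdot\bm{v})^2}$ and completes \eref{eq:gammaFtilde2}. The two special cases then follow by evaluating at the endpoints: at $F=1/2$ we get $C=0$, so $\gamma_2^\rmF(1/2,\mu)=\gamma_2(0,\mu)=\gamma^\ast(\mu)=\tfrac12+\tfrac12 g^\ast(\mu)$ by the definitions \eref{eq:gC0} and \eref{eq:gammaC0}; at $F=1$ we get $C=1$, so $\gamma_2^\rmF(1,\mu)=\gamma_2(1,\mu)=\tfrac12+\tfrac12 g(1,\mu)=1$, using $g(1,\mu)=1$ (the $C=1$ case of \thref{thm:MGP}, where the correlation ellipsoid degenerates to a sphere).

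There is no genuine obstacle here: the entire argument rests on the clean fidelity--concurrence identity, and the only point requiring care is recognizing that this identity makes the two constrained optimizations term-by-term \emph{identical} rather than merely comparable—once the feasible sets are seen to coincide, the remainder is pure substitution. A minor bookkeeping remark worth including in the write-up is that one should confirm $F_B(\rho)\ge 1/2$ on the relevant domain (so that $\gamma_2^\rmF$ is well defined exactly for $F\in[1/2,1]$), which is again guaranteed by the same identity together with $C(\rho)\ge 0$.
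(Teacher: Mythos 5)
Your proposal is correct and follows essentially the same route as the paper: both rest on the identity $F_B(\rho)=[1+C(\rho)]/2$ for two-qubit pure states with $P_A\rho_A=\rho_A$, which converts the fidelity constraint into the equivalent concurrence constraint $C(\rho)\leq 2F-1$, after which \eref{eq:gammaFtilde1} and \eref{eq:gammaFtilde2} follow from \thref{thm:MGP} by direct substitution. The only cosmetic difference is that the paper rederives the identity from the correlation matrix as $F_B(\rho)=(1+\|T\|_1)/4$ rather than quoting it, and your explicit remark that the two feasible sets coincide (rather than merely being comparable) is a useful clarification of a step the paper leaves implicit.
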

\begin{proof}
	If $\rho$ is a pure state that satisfies the condition $P_A\rho_A=\rho_A$, then $\rho$ is a two-qubit state and has the form in \eref{eq:twoqubitApp}. In addition, we have
	\begin{equation}
	F_B(\rho)=\frac{1+\|T\|_1}{4}=\frac{1+C}{2}\geq \frac{1}{2},
	\end{equation}
	where $T$ is the correlation matrix appearing in \eref{eq:twoqubitApp} and $\|T\|_1=\tr\sqrt{TT^\rmT}$. This equation implies \eref{eq:gammaFtilde1} in \lref{lem:gamma2F}. \Eref{eq:gammaFtilde2} follows from \eref{eq:gammaFtilde1} above and \thref{thm:MGP} in the main text.
\end{proof}

\begin{lemma}\label{lem:gammaFconvex}
	The guessing probability  $\gamma_2^\rmF(F,\mu)$ is nondecreasing in $F$ for $1/2\leq F\leq 1$; in addition, $\gamma_2^\rmF(F,\mu)$ is convex in $F$ and $\mu$, respectively.
\end{lemma}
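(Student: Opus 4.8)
The plan is to reduce every assertion to the corresponding property of $\gamma_2(C,\mu)$ already established in \lref{lem:Convex}, by exploiting the identity $\gamma_2^\rmF(F,\mu)=\gamma_2(2F-1,\mu)$ supplied by \lref{lem:gamma2F}. The key observation is that the substitution $C=2F-1$ is an increasing affine bijection from $[1/2,1]$ onto $[0,1]$; composing $\gamma_2(\cdot,\mu)$ with this map therefore preserves both monotonicity and convexity, while the image of $F\in[1/2,1]$ is exactly the interval $0\leq C\leq 1$ on which \lref{lem:Convex} is valid.

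First I would establish monotonicity in $F$. Since $F\mapsto 2F-1$ is strictly increasing and $\gamma_2(C,\mu)$ is nondecreasing in $C$ by \lref{lem:Convex}, the composition $\gamma_2^\rmF(F,\mu)=\gamma_2(2F-1,\mu)$ is nondecreasing in $F$ on $[1/2,1]$.

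Next I would treat convexity in $F$, which is the one genuinely computational step and amounts to the standard fact that a convex function composed with an affine map stays convex. Writing $F=p_1F_1+p_2F_2$ with $p_1,p_2\geq 0$ and $p_1+p_2=1$, linearity gives $2F-1=p_1(2F_1-1)+p_2(2F_2-1)$, and hence
\begin{align}
\gamma_2^\rmF(F,\mu)&=\gamma_2\bigl(p_1(2F_1-1)+p_2(2F_2-1),\mu\bigr)\nonumber\\
&\leq p_1\gamma_2(2F_1-1,\mu)+p_2\gamma_2(2F_2-1,\mu)\nonumber\\
&=p_1\gamma_2^\rmF(F_1,\mu)+p_2\gamma_2^\rmF(F_2,\mu),
\end{align}
where the inequality invokes convexity of $\gamma_2(\cdot,\mu)$ in $C$ from \lref{lem:Convex}. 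Convexity in $\mu$ is even more immediate: for fixed $F$ the argument $C=2F-1$ is held constant, so convexity of $\gamma_2^\rmF(F,\cdot)$ is inherited verbatim from the convexity of $\gamma_2(C,\cdot)$ in $\mu$ asserted in \lref{lem:Convex}.

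There is no substantive obstacle here; the whole statement is a corollary of \lsref{lem:Convex} and \ref{lem:gamma2F}. The only point demanding care is checking that the affine substitution keeps its argument inside $[0,1]$ so that \lref{lem:Convex} applies, which holds precisely because $F$ is restricted to $[1/2,1]$.
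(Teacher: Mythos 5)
Your proposal is correct and follows exactly the paper's route: the paper likewise dispatches this lemma as an immediate consequence of \lref{lem:Convex} and the identity $\gamma_2^\rmF(F,\mu)=\gamma_2(2F-1,\mu)$ from \lref{lem:gamma2F}, with your write-up merely spelling out the standard affine-substitution details that the paper leaves implicit.
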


\begin{lemma}\label{lem:gammaFmuLB}
	Suppose $1/2\leq F\leq 1$. Then
	\begin{align}
	\gamma_2^\rmF(F,\mu)&\geq \|\Xi\|+(1-\|\Xi\|)F\geq \frac{1+2F}{3}\geq F, \label{eq:gammaFCmuLB}
	\end{align}
	where $\Xi=\Xi(\mu)$ is the verification matrix defined in \eref{eq:Ximu}.
\end{lemma}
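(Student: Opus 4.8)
The plan is to obtain this lemma as an immediate corollary of the lower bound on $\gamma_2(C,\mu)$ already established in \lref{lem:gCmuLB}, transported to the fidelity variable through the identification $C=2F-1$ supplied by \lref{lem:gamma2F}.

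First I would invoke the equality $\gamma_2^\rmF(F,\mu)=\gamma_2(2F-1,\mu)$ from \eref{eq:gammaFtilde1}. Since the hypothesis $1/2\leq F\leq 1$ forces $0\leq 2F-1\leq 1$, the bound \eref{eq:gammaCmuLB} applies verbatim with $C=2F-1$, giving
\begin{equation}
\gamma_2^\rmF(F,\mu)\geq \frac{1+\|\Xi\|+(1-\|\Xi\|)(2F-1)}{2}.
\end{equation}
Expanding the numerator collapses it to $2\|\Xi\|+2F(1-\|\Xi\|)$, so the right-hand side equals $\|\Xi\|+(1-\|\Xi\|)F$, which is precisely the first inequality claimed in \eref{eq:gammaFCmuLB}.

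For the second inequality I would rewrite $\|\Xi\|+(1-\|\Xi\|)F$ as $F+\|\Xi\|(1-F)$, which is nondecreasing in $\|\Xi\|$ because $1-F\geq 0$. Minimizing over the admissible range using the bound $\|\Xi\|\geq 1/3$ (noted just after \eref{eq:Ximu}) then yields $F+\tfrac{1}{3}(1-F)=(1+2F)/3$. The final inequality $(1+2F)/3\geq F$ is equivalent to $1\geq F$, which holds by hypothesis.

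There is no substantive obstacle here: the entire statement is an algebraic repackaging of \eref{eq:gammaCmuLB} after the change of variable $C=2F-1$, paralleling exactly how \lref{lem:gamma2F} mirrors the concurrence picture in the fidelity picture. The only points requiring minor care are confirming that the numerator simplification produces the tight form $\|\Xi\|+(1-\|\Xi\|)F$ rather than a weaker constant, and tracking the sign of $1-F$ so that the lower bound $\|\Xi\|\geq 1/3$ is applied in the correct direction of monotonicity.
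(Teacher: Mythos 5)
Your proposal is correct and follows exactly the route the paper takes: the paper's own proof is the one-line remark that the lemma ``follows from \lref{lem:gCmuLB} and \lref{lem:gamma2F},'' i.e., substituting $C=2F-1$ into \eref{eq:gammaCmuLB} via the identity $\gamma_2^\rmF(F,\mu)=\gamma_2(2F-1,\mu)$. Your algebraic simplification to $\|\Xi\|+(1-\|\Xi\|)F$ and the subsequent use of $\|\Xi\|\geq 1/3$ with $1-F\geq 0$ are both accurate.
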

\Lref{lem:gammaFconvex} follows from \lsref{lem:Convex} and \ref{lem:gamma2F}.
\Lref{lem:gammaFmuLB} follows from \lsref{lem:gCmuLB} and \ref{lem:gamma2F}

\begin{lemma}\label{lem:qgammaMono}
	Suppose $0< F\leq 1$. Then $q\gamma_2^\rmF(F/q,\mu)$ is nondecreasing in $q$ for $F\leq q\leq \min\{1,2F\}$.
\end{lemma}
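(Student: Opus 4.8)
The plan is to reduce the claimed monotonicity in $q$ to a single-variable monotonicity statement about the function
\begin{equation}
\psi(C):=\frac{1+g(C,\mu)}{1+C},
\end{equation}
and then settle the latter using the convexity of $g$ in $C$ together with the bound $g(C,\mu)\geq C$ already available from \lref{lem:gCmuLB}. First I would invoke \lref{lem:gamma2F}, which gives $\gamma_2^\rmF(F',\mu)=\gamma_2(2F'-1,\mu)=\tfrac12[1+g(2F'-1,\mu)]$ for $1/2\leq F'\leq1$. Setting $C=2F/q-1$ and noting that $F/q\in[1/2,1]$ throughout the interval $F\leq q\leq\min\{1,2F\}$ (so that $\gamma_2^\rmF(F/q,\mu)$ is indeed well defined), one finds
\begin{equation}
q\,\gamma_2^\rmF(F/q,\mu)=F\cdot\frac{1+g(C,\mu)}{1+C}=F\,\psi(C).
\end{equation}
As $q$ increases over $[F,\min\{1,2F\}]$ the variable $C$ decreases over $[\max\{2F-1,0\},1]\subseteq[0,1]$, so it suffices to prove that $\psi(C)$ is nonincreasing in $C$ on $[0,1]$.

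Next I would establish this monotonicity directly. For $0\leq C_1<C_2\leq1$, cross-multiplying (both denominators are positive) shows that $\psi(C_1)\geq\psi(C_2)$ is equivalent to
\begin{equation}
\frac{g(C_2,\mu)-g(C_1,\mu)}{C_2-C_1}\leq\frac{1+g(C_1,\mu)}{1+C_1}.
\end{equation}
I would bound the left-hand secant slope from above: since $g(\cdot,\mu)$ is convex in $C$ (\lref{lem:Convex}) and $g(1,\mu)=1$, the secant slope over $[C_1,C_2]$ is at most the secant slope over $[C_1,1]$, that is,
\begin{equation}
\frac{g(C_2,\mu)-g(C_1,\mu)}{C_2-C_1}\leq\frac{1-g(C_1,\mu)}{1-C_1}.
\end{equation}
It then remains to verify $\tfrac{1-g(C_1,\mu)}{1-C_1}\leq\tfrac{1+g(C_1,\mu)}{1+C_1}$, which after cross-multiplication collapses to $C_1\leq g(C_1,\mu)$; this is exactly the bound supplied by \lref{lem:gCmuLB}. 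Chaining the last two displayed inequalities gives the required relation and hence the monotonicity of $\psi$.

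The algebraic reductions are routine; the step carrying the real content is the convexity bound on the secant slope, where the endpoint value $g(1,\mu)=1$ is used to pin the comparison interval. I expect the main care to be needed there and in tracking the boundary cases $C_1=1$ or $C_2=1$, where the one-sided quotients must be interpreted correctly (the case $C_2=1$ is precisely the bound one applies, and $C_1=1$ cannot occur strictly below $C_2\leq1$). Everything else follows from inequalities already established in Supplementary Notes 1 and 2.
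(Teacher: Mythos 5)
Your proof is correct and is essentially the paper's own argument transported to the concurrence variable: the paper writes $F/q_1$ as a convex combination of $F/q_2$ and $1$, applies the convexity of $\gamma_2^\rmF$ with the endpoint value $\gamma_2^\rmF(1,\mu)=1$, and then invokes $\gamma_2^\rmF(F,\mu)\geq F$, which under the substitution $C=2F'-1$ is exactly your secant-slope comparison against $g(1,\mu)=1$ followed by the bound $g(C,\mu)\geq C$ from \lref{lem:gCmuLB}. The only difference is the packaging as monotonicity of $\psi(C)=[1+g(C,\mu)]/(1+C)$, which cleanly exhibits the $F$-independence of the claim but rests on the same three ingredients; all of your algebraic reductions and boundary cases check out.
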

\begin{proof}[Proof of \lref{lem:qgammaMono}]
	When  $F=1$,  \lref{lem:qgammaMono} holds trivially since $q$ can only take on the value 1. To prove \lref{lem:qgammaMono} when $0<F<1$, it suffices to prove the following inequality
	\begin{align}
	&q_1\gamma_2^\rmF\Bigl(\frac{F}{q_1},\mu\Bigr)\leq  q_2\gamma_2^\rmF\biggl(\frac{F}{q_2},\mu\biggr)\quad \label{eq:qgammaMono}
	\end{align}
	for $ F\leq q_1< q_2\leq \min\{1,2F\}$.

	Let
	\begin{equation}
	x=\frac{q_2(q_1-F)}{q_1(q_2-F)};
	\end{equation}
	then $0\leq x<1$ and
	\begin{align}
	\frac{F}{q_1}=x\frac{F}{q_2}+1-x,
	\end{align}
	which implies that
	\begin{align}
	\gamma_2^\rmF\biggl(\frac{F}{q_1},\mu\biggr)\leq x \gamma_2^\rmF\biggl(\frac{F}{q_2},\mu\biggr)+1-x,
	\end{align}
	given that $\gamma_2^\rmF(F,\mu)$ is convex in $F$ and $\gamma_2^\rmF(F=1,\mu)=1$
	As a consequence, we have
	\begin{align}
	&q_1\gamma_2^\rmF\Bigl(\frac{F}{q_1},\mu\Bigr)\leq q_1x \gamma_2^\rmF\Bigl(\frac{F}{q_2},\mu\biggr)+q_1(1-x)\nonumber\\
	&=\frac{(q_1-F)q_2}{(q_2-F)}\gamma_2^\rmF\biggl(\frac{F}{q_2},\mu\biggr)+\frac{(q_2-q_1)q_2}{(q_2-F)}\frac{F}{q_2}\nonumber\\
	&\leq\frac{(q_1-F)q_2}{(q_2-F)}\gamma_2^\rmF\biggl(\frac{F}{q_2},\mu\biggr)+\frac{(q_2-q_1)q_2}{(q_2-F)}\gamma_2^\rmF\biggl(\frac{F}{q_2},\mu\biggr)\nonumber\\ &=q_2\gamma_2^\rmF\biggl(\frac{F}{q_2},\mu\biggr),
	\end{align}
	which confirms \eref{eq:qgammaMono} and  implies \lref{lem:qgammaMono}. Here
	the second inequality follows from \eref{eq:gammaFCmuLB}.
\end{proof}

\begin{lemma}\label{lem:gammarhoFUB}
	Suppose $\rho$ is a bipartite state shared between Alice and Bob. Then
	\begin{align}
	\!\!\gamma(\rho,\mu)\leq
	\begin{cases}
	2F_B(\rho)\gamma_2^\rmF(1/2,\mu)  & 0\leq F_B(\rho)\leq  1/2,\\[0.5ex]
	\gamma_2^\rmF(F_B(\rho),\mu) & 1/2\leq F_B(\rho)\leq 1.
	\end{cases}   \label{eq:gammarhoFUB}
	\end{align}
\end{lemma}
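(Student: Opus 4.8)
The plan is to reuse the projection trick from the proofs of \lref{lem:gammaCHighDim} and \thref{thm:Mixed}, but now tracking the reduced fidelity rather than the concurrence. First I would have Alice perform the auxiliary measurement $\{P_A,\mathbb{I}-P_A\}$ with $P_A=|0\rangle\langle0|+|1\rangle\langle1|$, put $q=\tr[(P_A\otimes\mathbb{I}_B)\rho]$, and (for $q>0$) pass to $\rho'=(P_A\otimes\mathbb{I}_B)\rho(P_A\otimes\mathbb{I}_B)/q$, which has its Alice-marginal supported in $P_A$ and satisfies $\gamma(\rho,\mu)=q\,\gamma(\rho',\mu)$ exactly as before; the degenerate case $q=0$ gives $\gamma(\rho,\mu)=0$ and the bound is trivial. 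The genuinely new input I need is how $F_B$ transforms: because $(P_A\otimes\mathbb{I}_B)|\Phi\rangle=|\Phi\rangle$, I can insert $P_A$ on either side of the fidelity functional to get $\langle\Phi|(\mathbb{I}_A\otimes U_B)\rho(\mathbb{I}_A\otimes U_B)^\dag|\Phi\rangle=q\,\langle\Phi|(\mathbb{I}_A\otimes U_B)\rho'(\mathbb{I}_A\otimes U_B)^\dag|\Phi\rangle$ for every $U_B$, and taking the maximum over $U_B$ yields the clean identity $F_B(\rho)=q\,F_B(\rho')$.

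Writing $F:=F_B(\rho)$, I would then assemble the bound. The identity gives $F_B(\rho')=F/q$; since fidelities are at most $1$ this forces $q\ge F$, and since $\rho'$ is effectively two-qubit with Alice-support in $P_A$ we have $F_B(\rho')\ge1/2$ (\lref{lem:gamma2F}, cf.~\cite{Verstraete2002}), hence $F/q\ge1/2$, i.e.~$q\le2F$; with $q\le1$ this confines $q$ to $F\le q\le\min\{1,2F\}$. By the definition of $\gamma_2^\rmF$ applied to $\rho'$ I get $\gamma(\rho',\mu)\le\gamma_2^\rmF(F/q,\mu)$, so $\gamma(\rho,\mu)\le q\,\gamma_2^\rmF(F/q,\mu)$. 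The finishing move is \lref{lem:qgammaMono}, tailored exactly to this expression: it says $q\,\gamma_2^\rmF(F/q,\mu)$ is nondecreasing in $q$ on $[F,\min\{1,2F\}]$, so the right-hand side is largest at $q=\min\{1,2F\}$. For $F\ge1/2$ the maximizer is $q=1$, giving $\gamma(\rho,\mu)\le\gamma_2^\rmF(F,\mu)$, and for $F<1/2$ it is $q=2F$, giving $\gamma(\rho,\mu)\le2F\,\gamma_2^\rmF(1/2,\mu)$; using $\gamma_2^\rmF(1/2,\mu)=\gamma^\ast$ (\lref{lem:gamma2F}) these are precisely the two branches of \eref{eq:gammarhoFUB}.

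The step I expect to be the real obstacle is justifying $F_B(\rho')\ge1/2$, which is what both delimits the range $q\le2F$ and makes the pure-state inequality $\gamma(\rho',\mu)\le\gamma_2^\rmF(F_B(\rho'),\mu)$ legitimate. This is automatic once $\rho$ (and hence $\rho'$) is pure, which is the relevant situation here: in this network the adversary controls every system except Alice's qubit, so Bob may be assumed to hold a purification and prepare a pure $\rho$. The subtle point is the fully mixed case, where $\rho'$ can have $F_B(\rho')<1/2$ and the inequality above is no longer granted by definition; one cannot simply decompose $\rho'$ into pure pieces because $F_B$ is convex, so averaging over a decomposition controls the fidelity from the wrong side. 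I would therefore treat the pure case as the substantive content and, if the mixed case is required in full generality, recover it through the correlation-ellipsoid description of $\gamma(\rho,\bm{r})$ together with the concurrence bound of \thref{thm:Mixed}, which is the part of the argument demanding the most care.
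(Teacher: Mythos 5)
Your proposal is correct and follows essentially the same route as the paper's proof: the projection onto $P_A$, the identity $F_B(\rho)=qF_B(\rho')$, the resulting constraint $F_B(\rho)\leq q\leq\min\{1,2F_B(\rho)\}$ coming from $1/2\leq F_B(\rho')\leq1$, and the monotonicity of $q\lsp\gamma_2^\rmF(F_B(\rho)/q,\mu)$ from \lref{lem:qgammaMono} to push $q$ to $\min\{1,2F_B(\rho)\}$. The purity caveat you flag at the end is real but also implicit in the paper's own proof, which likewise asserts that $\rho'$ is pure; this is harmless because the lemma is only invoked for pure $\rho$ (the definitions of $\gamma^\rmF$ and $\gamma_2^\rmF$ both carry the constraint $\rho^2=\rho$).
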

\begin{proof}[Proof of \lref{lem:gammarhoFUB}]
	Let $\rho_A=\tr_B(\rho)$,  $q=\tr(P_A \rho_A)$,  and
	\begin{equation}
	\rho'=\frac{1}{q}(P_A\otimes \mathbb{I}_B)\rho(P_A\otimes \mathbb{I}_B),\quad q>0.
	\end{equation}
	If  $q=0$, then
	\begin{align}
	F_B(\rho)&=0,\quad \gamma(\rho,\mu)=0.
	\end{align}
	If $0<q\leq 1$, then $\tr_B(\rho')$ is supported in the support of $P_A$. In addition, we have
	\begin{align}
	F_B(\rho)&=qF_B(\rho')\geq \frac{q}{2},\quad \gamma(\rho,\mu)=q\gamma(\rho',\mu), \label{eq:FBproof1}
	\end{align}
	where the inequality is due to the fact that  $F_B(\rho')\geq 1/2$ since $\rho'$ is a two-qubit pure state with the same local support for Alice as the target Bell state.
	
	If $F_B(\rho)=0$, then $q=0$ and $\gamma(\rho,\mu)=0$, so \eref{eq:gammarhoFUB} holds.

	If $F_B(\rho)>0$, then $q>0$, and \eref{eq:FBproof1} implies that
	\begin{align}
	&\gamma(\rho,\mu)=q\gamma(\rho',\mu)\leq q\gamma_2^\rmF(F_B(\rho'),\mu)=q\gamma_2^\rmF(F_B(\rho)/q,\mu)\nonumber\\
	&\leq
	\begin{cases}
	2F_B(\rho)\gamma_2^\rmF(1/2,\mu)  & 0\leq F_B(\rho)\leq  1/2,\\[0.5ex]
	\gamma_2^\rmF(F_B(\rho),\mu) & 1/2\leq F_B(\rho)\leq 1.
	\end{cases}
	\end{align}
	so \eref{eq:gammarhoFUB} still holds. Here the first inequality follows from the definition of
	the guessing probability $\gamma_2^\rmF(F,\mu)$, and the last inequality follows from
	\lref{lem:qgammaMono} together with the constraint $F_B(\rho)\leq q\leq \min\{1,2F_B(\rho)\}$, given that $F_B(\rho')\geq 1/2$.
\end{proof}

\subsection*{\label{asec:thm:gammaFproof}Proof of \thref{thm:gammaF}}
\begin{proof}[Proof of \thref{thm:gammaF}]
	First let us consider \eref{eq:gamma2F} in the theorem. 	
	The equality $\gamma_2^\rmF(F,\mu)=\gamma_2(2F-1,\mu)$ follows from the fact that $F_B(\rho)=[1+C(\rho)]/2$ for any two-qubit pure state that satisfies $P_A\rho=\rho$ (cf. \lref{lem:gamma2F}). The upper bound  in \eref{eq:gamma2F} follows from \eref{eq:gammaCmuLUB} in the main text. Note that this is the   best linear upper bound for $\gamma_2^\rmF(F,\mu)$.

	To prove \eref{eq:gammaF} in \thref{thm:gammaF}, note that
	\begin{align}
	\gamma^\rmF(F,\mu)&=\max_\rho \{\gamma(\rho,\mu)|F_B(\rho)\leq F, \rho^2=\rho\}\nonumber\\
	&\leq \begin{cases}
	2F\gamma_2^\rmF(1/2,\mu)  & 0\leq F\leq  1/2,\\
	\gamma_2^\rmF(F,\mu) & 1/2\leq F\leq 1,
	\end{cases}\label{eq:gammaFproof1}
	\end{align}
	where the inequality follows from \lref{lem:gammarhoFUB} and  the monotonicity of $\gamma_2^\rmF(F,\mu)$ in $F$ as shown in
	\lref{lem:gammaFconvex}.

	When $1/2\leq F\leq 1$, we have $\gamma^\rmF(F,\mu)\geq \gamma_2^\rmF(F,\mu)$ by definition, so \eref{eq:gammaFproof1} implies that
	\begin{align}
	&\gamma^\rmF(F,\mu)=
	\gamma_2^\rmF(F,\mu)=\gamma_2(2F-1,\mu)\nonumber\\ &=\frac{1}{2}+\frac{1}{2}\max_{\bm{v}}\int\rmd\mu(\bm{r}) \sqrt{(2F-1)^2+4F(1-F)(\bm{r}\cdot\bm{v})^2},
	\end{align}
	which confirms \eref{eq:gammaF}. Here the second
	equality follows from \eref{eq:gamma2F} in \thref{thm:gammaF} and the third equality follows from \thref{thm:MGP}.

	When $0\leq F\leq 1/2$,
	let $\rho'=|\Psi' \rangle\langle \Psi'|$ be a pure product state
	that satisfies $P_A\rho'=\rho'$, $F_B(\rho')=1/2$,  and $\gamma(\rho',\mu)=\gamma_2^\rmF(1/2,\mu)$.
	Let $\rho=|\Psi\rangle\langle \Psi|$ with
	\begin{equation}
	|\Psi\rangle=\sqrt{2F}|\Psi'\rangle +\sqrt{1-2F}|22\rangle;
	\end{equation}
	then we have
	\begin{align}
	F_B(\rho)&=2FF_B(\rho')=F, \\
	\gamma(\rho,\mu)&=2F\gamma(\rho',\mu)=2F\gamma_2^\rmF(1/2,\mu),
	\end{align}
	which imply that $\gamma^\rmF(F,\mu)\geq 2F\gamma_2^\rmF(1/2,\mu)$. In conjunction with \eref{eq:gammaFproof1}, we conclude that
	\begin{align}
	&\gamma^\rmF(F,\mu)= 2F\gamma_2^\rmF(1/2,\mu)=2F\gamma_2(0,\mu)=2\gamma_2^\ast(\mu)F\nonumber\\
	&=2\gamma^\ast(\mu) F. 
	\end{align}
	Here the second equality follows from
	\eref{eq:gamma2F} in \thref{thm:gammaF}; the third equality follows from the definition in \eref{eq:gammaC0} in the main text, and the last equality follows from \lref{lem:gammaCHighDim}. 
	By virtue of  \thref{thm:MGP} and \esref{eq:gC0} and \eqref{eq:gammaC0} in the main text, we can also derive a more explicit expression for $\gamma^\rmF(F,\mu)$,
	\begin{align}
	&\gamma^\rmF(F,\mu)=F+Fg^\ast(\mu)
	=F+F\max_{\bm{v}}\int\rmd\mu(\bm{r}) |\bm{r}\cdot\bm{v}|.
	\end{align}
	
	Finally, we can prove \eref{eq:gammaFLUB} in \thref{thm:gammaF}.  When $1/2\leq F\leq 1$,  \eref{eq:gammaFLUB} follows from \esref{eq:gamma2F} and \eqref{eq:gammaF} in \thref{thm:gammaF}, and it offers the best linear upper bound for the guessing probability $\gamma^\rmF(F,\mu)$. 
	When $0\leq F\leq 1/2$, \eref{eq:gammaFLUB} follows from \eref{eq:gammaF} and the simple inequality $\gamma^\ast(\mu)\geq 1/2$, which is clear from \eref{eq:gammaC0} in the main text. This observation completes the proof of \thref{thm:gammaF}.
\end{proof}

\section*{\label{asec:SDIeff}SUPPLEMENTARY Note 6: Sample efficiency of SDI QSV}
In this section we clarify the sample efficiency of SDI QSV over an untrusted quantum network, in which some parties, but not all,   are honest.
For simplicity, let us first consider the
verification of the two-qubit Bell state $|\Phi\rangle=(|00\>+|11\>)/\sqrt{2}$. To quantify the closeness between the actual state $\rho$  and the target state $|\Phi\rangle$, here we 
adopt the  reduced fidelity
\begin{align}
F_B(\rho)=\max_{U_B} \hspace*{0.05cm} \<\Phi|(\mathbb{I}_A\otimes U_B)\rho (\mathbb{I}_A\otimes U_B)^\dag |\Phi\>,
\end{align}
as defined in \eref{def:reducedfidelity} in the main text,  
where the maximization is taken over all local unitary transformations on $\caH_B$.

A key to determining the sample efficiency is the relation between the guessing probability and the reduced fidelity as presented in  Theorem~3 in the main text. The linear upper bound for $\gamma^\rmF(F,\mu)$ in  \eref{eq:gammaFLUB} is particularly important and is reproduced below,
\begin{equation}\label{eq:prob-fidelity}
\gamma^\rmF(F,\mu) \leq 1-2(1-\gamma^\ast)(1-F). 
\end{equation}
Given a verification protocol specified by the distribution $\mu$ and any state $\rho$ with $F_B(\rho)\leq 1-\epsilon$, then  the probability that $\rho$ can pass one  test  satisfies
\begin{align} \label{eq:steeringprob}
p \leq 1-2(1-\gamma^{\ast})\epsilon.
\end{align}

Now suppose  the states $\rho_1,\rho_2,\dots,\rho_N$ prepared in $N$ runs are  independent  of each other. Let $\epsilon_j=1-F_B(\rho_j)$;
then the probability that these states can pass all $N$ tests satisfies the following inequalities,
\begin{equation}\label{eq:ProbPassNtestSDI}
\prod_{j=1}^N p_j \leq \prod_{j=1}^N[ 1-2(1-\gamma^{\ast})\epsilon_j]\leq[1- 2(1-\gamma^{\ast})\bar{\epsilon}\lsp]^N,
\end{equation}
where $\bar{\epsilon}=\sum_j\epsilon_j/N$ is the average infidelity.
In order to ensure the condition $\bar{\epsilon}<\epsilon$ with significance level $\delta$ (confidence level $1-\delta$), that is, to ensure the condition
$\prod_{j}p_j\leq\delta$ when $\bar{\epsilon}\geq\epsilon$,
it suffices to perform 
\begin{equation}\label{aeq:NumberTestSDI}
N = \biggl\lceil\frac{ \ln \delta}{\ln[1-2(1-\gamma^{\ast})\epsilon]}\biggr\rceil\approx \frac{ \ln \delta^{-1}}{2(1-\gamma^{\ast})\epsilon}
\end{equation}
tests, which scale as $N=O(1/(1-\gamma^{\ast})\epsilon)$.
This equation demonstrates the importance of the threshold $\gamma^{\ast}$ in determining the verification efficiency. A small threshold $\gamma^{\ast}$ means a high efficiency.

\begin{figure}
	\begin{center}
		\includegraphics[width=8cm]{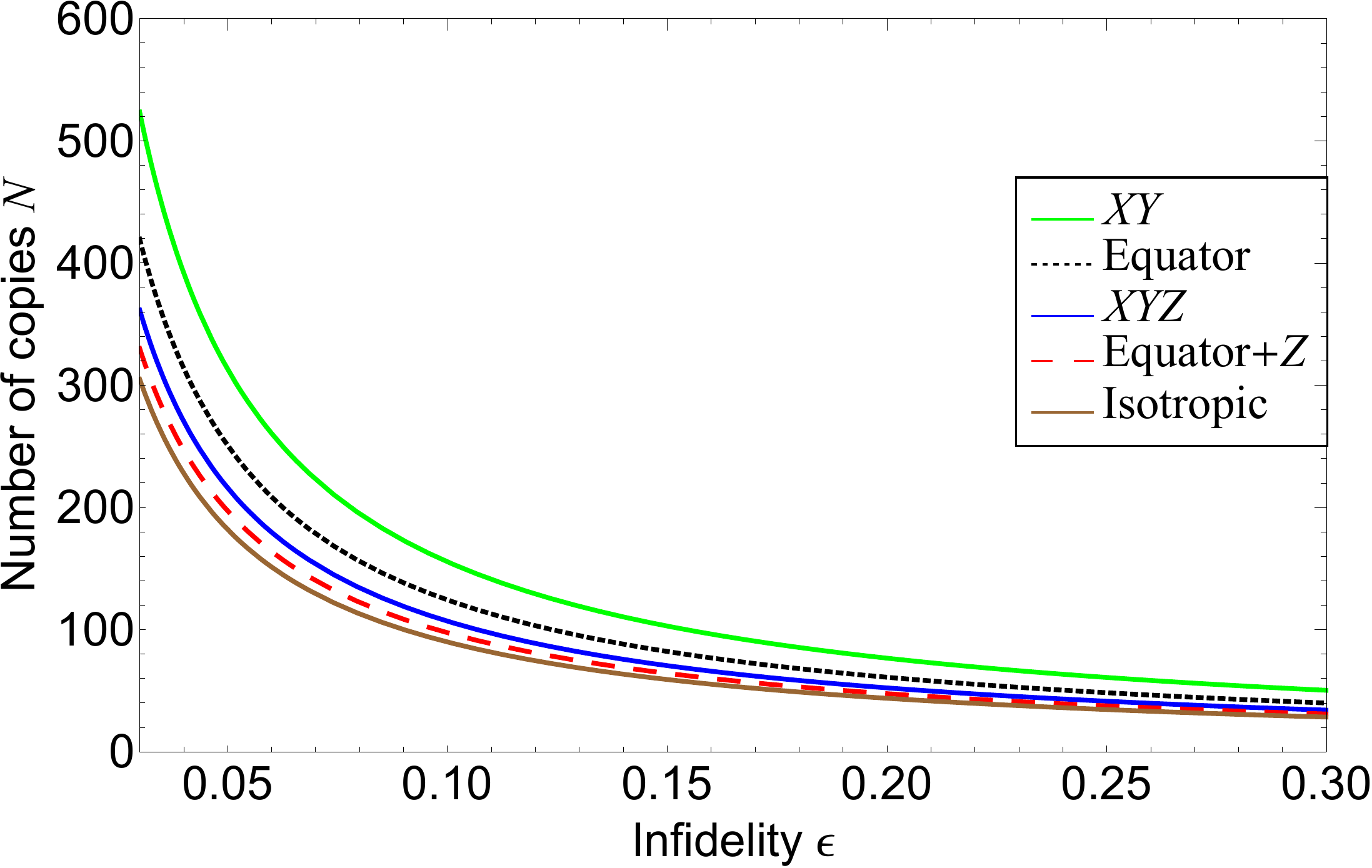}
		\caption{\label{fig:SampleCompare}
			Number of copies needed to verify  the Bell state in the SDI scenario 
			for various verification protocols. Here  the $XY$ protocol and isotropic protocol are introduced in the main text, while  other protocols are proposed in Supplementary Note 9. 
			The  significance level is chosen to be $\delta=0.01$.  
		}
	\end{center}
\end{figure}

The minimum threshold  $\gamma^{\ast}=3/4$ is achieved for the  isotropic protocol (cf. Supplementary Note 9 or \tref{tab:ProtocolGHZDH} in the main text), in which case we have
\begin{equation}
N \approx \frac{2 \ln \delta^{-1}}{\epsilon} .
\end{equation}
Surprisingly, this number is comparable to the number $(3\ln \delta^{-1})/(2\epsilon)$ required  in standard QSV \cite{pallister2018SM,ZhuH2019OSM}. Therefore, SDI verification of the Bell state is almost as efficient as standard QSV. The sample efficiencies of the isotropic protocol and  several other protocols in the SDI scenario are illustrated in Supplementary Figure 2.

The above results can be generalized to the verification of the GHZ state following a similar analysis as presented in the main text. Now the threshold is minimized 
at an  equator$+Z$ protocol with $\gamma^{\ast}\approx 0.769$ (cf. \tref{tab:ProtocolGHZDH}). Accordingly, the   number of tests required to achieve infidelity $\epsilon$ and significance level $\delta$ reads
\begin{equation}
N \approx \frac{2.16 \ln \delta^{-1}}{\epsilon}.
\end{equation}
As in the verification of the Bell state, this number is comparable to the number $(3\ln \delta^{-1})/(2\epsilon)$ required  in standard QSV \cite{li2020GHZSM}.

To understand the efficiency of our verification protocols proposed for the the SDI scenario, note that each  protocol is tied to a steering inequality whose quantum bound and algebraic bound are both equal to 1. This bound corresponds to the maximal probability of passing each  test on  average and can be attained by a quantum strategy. In addition, the guessing probability threshold is the maximal guessing probability that can be achieved by a classical strategy. As we shall see in Supplementary Note 7, such a high efficiency  cannot be achieved in the DI scenario in general unless there exists a suitable Bell inequality for which the quantum bound coincides with the algebraic bound.

\subsection*{Comparison with \rscite{Supic2016SM,Gheorghiu2017SM}}

Here we compare our results with previous results in \rscite{Supic2016SM,Gheorghiu2017SM}, which studied the self-testing of the Bell state in the one-sided DI scenario,  which corresponds to the SDI scenario considered in this work. Both works focused on the steering inequality \cite{Supic2016SM,Gheorghiu2017SM}
\begin{equation}
S=\langle\psi\vert Z\otimes Z'\vert\psi\rangle+\langle\psi\vert X\otimes X'\vert\psi\rangle\leq\sqrt{2},
\end{equation}
where  $Z$, $X$ are trusted Pauli measurements on Alice's side and $Z'$,  $X'$ are untrusted measurements on Bob's side. Both quantum bound and algebraic bound of this inequality are $S_Q=2$, which can be achieved by the quantum strategy with ideal measurements on the Bell state $|\Phi\rangle=(|00\>+|11\>)/\sqrt{2}$.  Verification protocol based on this steering inequality is essentially equivalent to the $XY$ protocol considered in this work, which can be regarded as a steering inequality with quantum bound 1 and classical bound $\gamma^\ast=\frac{1}{2}+\frac{1}{2\sqrt{2}}$.

Both works \rscite{Supic2016SM,Gheorghiu2017SM} provided robust self-testing statements in the one-sided DI  scenario. For the steering correlation $S\geq 2-\varepsilon$, \rcite{Supic2016SM} obtained an analytic bound of $13\sqrt{\varepsilon}$ and a numeric bound of $\sqrt{\varepsilon}$ on the trace distance with the target state, while \rcite{Gheorghiu2017SM} obtained an analytic bound of $(3\sqrt{2}+8)\sqrt{\varepsilon} + \varepsilon/2$.
For comparison, here we analyze the performance of the $XY$ protocol with respect to the trace distance. Since the guessing probability in the $XY$ protocol can be rewritten as $\gamma=\frac{1}{2}+\frac{S}{2S_{Q}}$, combining \eref{eq:prob-fidelity},  the reduced infidelity can be bounded from above as    $1-F \leq \varepsilon/[2(2-\sqrt{2})]$, so the trace distance can be upper bounded by $\sqrt{\varepsilon}/\sqrt{2(2-\sqrt{2})}\approx0.924\sqrt{\varepsilon}$, which is tighter than previous results.

\rcite{Supic2016SM} also studied the verification of the  tripartite GHZ state in both 1-trusted setting and 2-trusted setting, which correspond to different types of multipartite EPR-steering. Numerical results are obtained for both settings based on steering inequalities derived from the Mermin inequality.  The scenario considered in this work is more similar to the 2-trusted setting since we assume that each party can only perform local projective measurements.
Our result shows that the robustness in the SDI scenario is higher than that in the DI scenario, which is consistent with the result in \rcite{Supic2016SM}.

\subsection*{Comparison with \rscite{Pappa2012,McCut2016}}
Next we compare our results with previous results in \rscite{Pappa2012SM,McCut2016SM}, which studied the verification of GHZ states in the SDI scenario as considered in this work. 
The $XY$ protocol considered in this work is equivalent to the protocol proposed in \rcite{Pappa2012SM}, and the equator protocol  is equivalent to the $\theta$-protocol proposed in \rcite{McCut2016SM}. The authors of \rscite{Pappa2012SM,McCut2016SM} derived the guessing probability thresholds  $\gamma^{\ast}=\frac{1}{2}+\frac{1}{2\sqrt{2}}\approx 0.854$ for the $XY$ protocol and $\gamma^{\ast}=\frac{1}{2}+\frac{1}{\pi}\approx 0.818$
for the $\theta$-protocol. In addition, they derived the relation
\begin{equation}\label{eq:guessProbPre}
p \leq 1-\frac{\epsilon}{4}
\end{equation}
between the guessing probability and the reduced infidelity for both protocols. However, this  relation is suboptimal compared with our result in \eref{eq:steeringprob}, which offers the best linear upper bound for the guessing probability. In addition, 
\rscite{Pappa2012SM,McCut2016SM} did not consider the sample efficiency. 
If  \eref{eq:guessProbPre} were combined with our analysis above, then the number of tests required to achieve infidelity $\epsilon$ and significance level $\delta$ would be $N \approx 4 (\ln \delta^{-1})/\epsilon$. For comparison, by virtue of \esref{eq:steeringprob} and \eqref{aeq:NumberTestSDI},  we can derive $N \approx 3.41 (\ln \delta^{-1})/\epsilon$ for the $XY$ protocol and $N \approx 2.75 (\ln \delta^{-1})/\epsilon$ for the equator protocol.

In addition to the efficiency advantage mentioned above, the main merit of our work is to propose a simple approach for determining all potential tests of the GHZ state that are based on local projective measurements. In particular, we prove that only two types of tests  can be constructed from local projective measurements (cf. Supplementary Note 10). 
By virtue of  this result, we determine the optimal protocol for verifying the GHZ state and for certifying GME. Such optimality results are difficult to establish and are thus quite rare in the DI and SDI scenarios.  Meanwhile,  our $XYZ$ protocol, which requires only three measurement settings,  is more efficient than all protocols known in the literature even based on infinite measurement settings. The protocols in  \rscite{Pappa2012SM,McCut2016SM} are suboptimal because they   employ only the second type of tests and their theoretical analyses are suboptimal.

\section*{\label{app:SDIvsQSVvsDI}SUPPLEMENTARY Note 7: Comparison with standard QSV and device-independent QSV}
To put our work in perspective, in this section we compare SDI QSV with standard QSV \cite{pallister2018SM,ZhuH2019AdvSSM,ZhuH2019AdvLSM} and DI QSV \cite{Dimic2021SM} based on self-testing \cite{Mayers2004SM,Supic2020SM}. To start with, it is instructive to clarify the assumptions underlying these frameworks. 
In standard QSV, all parties that implement the verification protocol  are trustworthy, although the source of the quantum system is not necessarily trustworthy. 
In DI QSV, the source is not trustworthy and is usually  treated as a black-box; meanwhile, all the parties that implement the verification protocol (or the underlying measurement devices) are not trustworthy. 
In this regard, SDI QSV lies between standard QSV and DI QSV in that some parties are honest, but some others may be dishonest. In a word, the assumptions are strongest in standard QSV and weakest in DI QSV. Accordingly, it is easier to realize standard QSV in experiments than SDI QSV, which in turn is easier than DI QSV.

The verification problems in the three  scenarios can all be summarized as follows. A quantum device  is supposed to produce the target state $|\Psi\>\in\caH$, but actually produces the states $\rho_1,\rho_2,\dots,\rho_N$ in $N$ runs. For simplicity we assume that the states prepared in different runs are independent of each other. Our task is to verify  whether these states are sufficiently close to the target state on average. It should be pointed out that the quantification of closeness depends on the specific scenario under consideration.

\subsection*{Standard quantum state verification}\label{asec:qsv}

In standard QSV \cite{pallister2018SM,ZhuH2019AdvSSM,ZhuH2019AdvLSM}, all parties implementing the verification protocol are trustworthy; in other words, the measurement devices are trustworthy. A verification protocol for a given target state $|\Psi\rangle$  is composed of a number of binary tests represented by two-outcome  measurements  $\{E_l,\mathbb{I}-E_l\}$. 
Here the test operator  $E_l$ corresponds to passing the test and satisfies the condition  $E_l|\Psi\>=|\Psi\>$, so that the target state $|\Psi\>$ can  always pass the test.
Suppose the test $\{E_l,\mathbb{I}-E_l\}$ is performed with probability $p_l$; then the verification operator
is given by $\Omega=\sum_{l} p_l E_l$.
If  $\<\Psi|\rho_j|\Psi\>\leq1-\epsilon$, then the average probability that $\rho_j$ can pass each test satisfies \cite{pallister2018SM,ZhuH2019AdvSSM}
\begin{equation}\label{eq:ProbPass1test}
\tr(\Omega \rho_j)\leq 1- [1-\beta(\Omega)]\epsilon=1- \nu(\Omega)\epsilon,
\end{equation}
where $\beta(\Omega)$ denotes the second largest eigenvalue of $\Omega$, and $\nu(\Omega):=1-\beta(\Omega)$ is the spectral gap from the maximum eigenvalue. 

To guarantee that the average infidelity satisfies $\bar{\epsilon}<\epsilon$ with significance level $\delta$, it suffices to perform \cite{pallister2018SM,ZhuH2019AdvSSM}
\begin{equation}\label{eq:NumberTest}
N = \biggl\lceil\frac{ \ln \delta}{\ln[1-\nu(\Omega)\epsilon\lsp]}\biggr\rceil\approx \frac{ \ln \delta^{-1}}{\nu(\Omega)\epsilon}
\end{equation}
tests, which scale as $N=O(1/\nu (\Omega)\epsilon)$. 
To minimize the number of tests, we need to maximize the value of the spectral gap $\nu(\Omega)$ over all protocols based on LOCC, which is usually extremely difficult if not impossible. So far sample-optimal protocols have been found only for limited classes of states. Fortunately, this problem has been resolved for the Bell state \cite{HayaMT06SM,pallister2018SM,ZhuH2019OSM} and GHZ states recently \cite{li2020GHZSM}. In both cases, the maximum spectral gaps read $\nu(\Omega)=2/3$, and the numbers of required tests  are given by
\begin{equation}N \approx \frac{3 \ln \delta^{-1}}{2\epsilon}.
\end{equation}

\subsection*{\label{asec:DIQSV}Device-independent quantum state verification}

Device-independent QSV \cite{Dimic2021SM} can be viewed as a verification procedure in the  black-box scenario in  which measurement devices cannot be trusted. A key for constructing DI verification protocols is
self-testing \cite{Mayers2004SM,Supic2020SM}, by which quantum states can be certified up to local isometries using only the nonlocal statistics. For example, the maximum quantum bound $2\sqrt{2}$ of the Clauser-Horne-Shimony-Holt (CHSH) inequality \cite{CHSH1969SM} certifies the singlet \cite{McKague2012SM}.   The maximum quantum bound $4$ of the Mermin inequality \cite{Mermin1990SM} certifies the tripartite GHZ state \cite{Kaniewski2016SM}. 
Robustness is the main focus of most studies on 
self-testing when the correlations achieved deviate from the ideal ones.

On the other hand, to construct a practical DI verification protocol, a self-testing protocol is not enough by itself, unless practical issues like sample efficiency and confidence level can be clarified. Unfortunately, the sample complexity of DI QSV has received little attention until the recent work of Dimi\'c et al. \cite{Dimic2021SM} although there are numerous works on self-testing (see \rcite{Supic2020SM} for a review). Following \rscite{Kaniewski2016SM,Dimic2021SM},  here we discuss the sample efficiency of DI verification of the tripartite GHZ state based on self-testing results tied to the Mermin inequality. 

In self-testing, the \emph{extractability} \cite{Kaniewski2016SM} is used to quantify the closeness between the actual state $\rho$ and the target state $\ket{\Psi}$; it is defined as 
\begin{equation}
\Theta(\rho, \ket{\Psi}) := \max_{ \Lambda } F \big( \Lambda (\rho), \ket{\Psi} \big),
\end{equation}
where the maximization is taken over all local isometries. 
Suppose the Bell inequality $\mathcal{B}$ is used to self-test the target state $\ket{\Psi}$. Denote by $\beta_{C}$ and $\beta_{Q}$ the maximal classical bound and quantum bound of $\mathcal{B}$, respectively. To determine the sample complexity,  we need to clarify the relation between the
extractability and the Bell violation. The extractability-violation trade-off is characterized by the function $\mathscr{Q}_{\Psi, \mathcal{B}} : [\beta_{C}, \beta_{Q}] \to [0, 1]$ defined as \cite{Kaniewski2016SM}:
\begin{equation}
\mathscr{Q}_{\Psi, \mathcal{B}}(\beta) := \inf_{\rho_{AB} \in \mathcal{S}_{\mathcal{B}}(\beta)} \Theta(\rho_{AB}, \Psi),
\end{equation}
where $\mathcal{S}_{\mathcal{B}}(\beta)$ denotes the set of quantum states  that can achieve the violation $\beta$ for the Bell inequality $\mathcal{B}$. 
This function sets a lower bound on the extractability given the observed violation $\beta$.

To be concrete, let us consider self-testing of the tripartite GHZ state using the Mermin inequality. Recall that the Mermin operator reads
\begin{equation}\label{eq:Mermin}
\mathcal{B}_{\textnormal{Mermin}} = \sum_{j,k \in \{0, 1\}} (-1)^{jk} A_{j} \otimes B_{k} \otimes C_{j \oplus k},
\end{equation}
where $A_j$ are binary observables (with eigenvalues $\pm 1$) for Alice  (the observables for Bob and Charlie have the same structure). The quantum bound of the Mermin inequality is $\beta_Q=4$, which coincides with the algebraic bound. The extractability-violation function was determined in  \rcite{Kaniewski2016SM}, with the result
\begin{equation}
\label{eq:chsh}
\mathscr{Q}_{G^3, \mathcal{B}_{\textnormal{Mermin}}}(\beta) \geq \frac{1}{2} + \frac{1}{2} \cdot \frac{\beta - \beta^{*}}{\beta_{Q} - \beta^{*}},
\end{equation}
where  $\beta_{Q}=4$ is the quantum bound and $\beta^{*} = 2\sqrt{2}$ is the threshold violation \cite{Kaniewski2016SM}. Notably, for any state $\rho$ with $\Theta(\rho, \Psi)\leq 1-\epsilon$, the Bell violation $\beta$ it can achieve satisfies
\begin{equation}
\label{eq:Merminfidelity}
\frac{1}{2} + \frac{1}{2} \cdot \frac{\beta - \beta^{*}}{\beta_{Q} - \beta^{*}} \leq 1-\epsilon .
\end{equation}

Since the quantum bound coincides with the algebraic bound for the Mermin inequality, the self-testing procedure determined by \eref{eq:Mermin} can be turned into a DI verification protocol with four tests chosen with probability $1/4$ each. Note that the target GHZ state
can always pass each test. It remains  to establish the relation between the guessing probability and the  extractability. For a generic state $\rho$, the average probability of passing each test reads $p=\frac{1}{2}+\frac{\beta}{2\beta_{Q}}$, where $\beta$ is the Bell violation. If $\Theta(\rho, \Psi)\leq 1-\epsilon$, then \eref{eq:Merminfidelity} implies that
\begin{equation}
\label{eq:Merminprob}
p \leq 1-	\left(1-\frac{\beta^{\ast}}{\beta_Q}\right)\epsilon=1-	\frac{2-\sqrt{2}}{2}\epsilon.
\end{equation}

Now we are ready to estimate the number of copies needed to achieve a given extractability in DI verification of the tripartite GHZ state. Suppose  the states $\rho_1,\rho_2,\dots,\rho_N$ prepared in $N$ runs are  independent  of each other. 
Then the probability that these states can pass all $N$ tests is upper bounded by
\begin{equation}\label{eq:ProbPassNtestDI}
\biggl(1-\frac{2-\sqrt{2}}{2}\bar{\epsilon}\biggr)^N,
\end{equation}
where $1-\bar{\epsilon}$ denotes the average extractability.
In order to insure the condition $\bar{\epsilon}<\epsilon$ with significance level $\delta$,
it suffices to perform
\begin{equation}\label{eq:NumberTestMermin}
N = \Biggl\lceil\frac{ \ln \delta}{\ln\bigl[1-\frac{2-\sqrt{2}}{2}\epsilon\lsp\bigr]}\Biggr\rceil\approx \frac{2 \ln \delta^{-1}}{(2-\sqrt{2}\lsp)\epsilon}\approx \frac{ 3.41\ln \delta^{-1}}{\epsilon}
\end{equation}
tests. Surprisingly, the scaling behaviors of $N$ in $\epsilon$ and $\delta$ for DI QSV are the same as the counterparts for standard QSV and SDI QSV considered in this work; moreover, the constant coefficients  for the three scenarios are quite close to each other.

It should be pointed out  that the above analysis  is applicable only when the quantum bound of the underlying Bell inequality coincides with the algebraic bound. If this condition does not hold, then the situation gets  more  complicated \cite{Dimic2021SM}, and the optimal scaling behaviors shown in \eref{eq:NumberTestMermin} cannot be guaranteed. Now the number of tests required reads
\begin{equation}\label{eq:Non-algebraic}
N=O\left(\frac{\ln{\delta^{-1}}}{c^2 {\epsilon}^2}\right),
\end{equation}
where $c$ is a constant characterizing the linear dependence between
the extractability and the Bell violation. 
For example, \eref{eq:Non-algebraic} holds in  DI verification of the Bell state based on the CHSH inequality.
Here the sample efficiency is suboptimal  compared with the counterparts in  standard QSV and SDI QSV (cf. Supplementary Note~6).

\section*{SUPPLEMENTARY Note 8: Verification protocols based on discrete distributions}
Recall that each verification protocol of the Bell state is specified by a probability distribution on the Bloch sphere.  In practice, it is usually more convenient to choose a discrete distribution as specified by a weighted set $\{\bm{r}_j, p_j\}_j$, where $S=\{\bm{r}_j\}_j$ is  a set of unit vectors on the Bloch sphere, and $\{p_j\}_j$ is a probability distribution. This weighted set
means the projective measurement $\bm{r}_j$ is performed with probability $p_j$.  When all the probabilities $p_j$ are equal, the weighted set $\{\bm{r}_j, p_j\}_j$ is abbreviated as $S=\{\bm{r}_j\}_j$ to simplify the notation. All results presented in the main text and in this Supplementary Material  hold for discrete distributions as well as for continuous distributions.  For example, \thref{thm:MGP} implies that
\begin{equation}\label{eq:gCrp}
g(C, \{\bm{r}_j, p_j\}_j)=\max_{\bm{v}} \sum_j  p_j\sqrt{C^2+(1-C^2)(\bm{r}_j\cdot\bm{v})^2}.
\end{equation}
When $C=0$, \eref{eq:gCrp} reduces to 
\begin{equation}\label{eq:gC0rp}
g^\ast( \{\bm{r}_j, p_j\}_j)=\max_{\bm{v}} \sum_j  p_j|\bm{r}_j\cdot\bm{v}|.
\end{equation}
Incidentally, the verification matrix defined in \eref{eq:Ximu} now reduces to
\begin{equation}
\Xi(\mu)=\Xi(\{\bm{r}_j, p_j\}_j)=\sum_j p_j \bm{r}_j \bm{r}_j^\rmT.
\end{equation}
When all $p_j$ are equal, \eref{eq:gC0rp} further reduces to 
\begin{equation}\label{eq:gC0S}
g^\ast(S)=\frac{1}{|S|}\max_{\bm{v}} \sum_{\bm{r}\in S} |\bm{r}\cdot\bm{v}|,
\end{equation}
where $|S|$ denotes the cardinality of $S$. 

\begin{lemma}\label{lem:Transitive}
	Suppose the symmetry group of $\{\bm{r}_j\}_j$ acts transitively. Then $g(C, \{\bm{r}_j, p_j\}_j)$ for a given set  $\{\bm{r}_j\}_j$ is minimized when all $p_j$ are equal.
\end{lemma}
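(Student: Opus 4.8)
The plan is to combine the convexity of $g(C,\mu)$ in $\mu$ (\Lref{lem:Convex}) with the invariance of $g$ under orthogonal transformations of the distribution (noted in Supplementary Note~2) through a group-averaging argument. Fix the set $S=\{\bm{r}_j\}_j$ and treat the weights $\{p_j\}_j$ as the only free parameters, so that $\mu=\{\bm{r}_j,p_j\}_j$ ranges over all probability distributions supported on $S$; by \eref{eq:gCrp} we have $g(C,\mu)=\max_{\bm{v}}\sum_j p_j\sqrt{C^2+(1-C^2)(\bm{r}_j\cdot\bm{v})^2}$. Let $G$ be the symmetry group of $S$; since only its action on the finite set $S$ will matter, I may take $G$ to be the finite group of permutations of $S$ induced by orthogonal transformations preserving $S$, which acts transitively by hypothesis.

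First I would observe that each $R\in G$ merely permutes the points of $S$, so the pushforward $R_{*}\mu$ is again supported on $S$, namely $R_{*}\mu=\{\bm{r}_j,p_{R^{-1}(j)}\}_j$, where $R^{-1}(j)$ denotes the index $k$ with $R\bm{r}_k=\bm{r}_j$. Because $R_{*}\mu$ is obtained from $\mu$ by an orthogonal transformation, the invariance of $g$ gives $g(C,R_{*}\mu)=g(C,\mu)$ for every $R$. Setting $\bar{\mu}:=|G|^{-1}\sum_{R\in G}R_{*}\mu$ and invoking the convexity of $g$ in $\mu$ then yields
\begin{equation}
g(C,\bar{\mu})\leq\frac{1}{|G|}\sum_{R\in G}g(C,R_{*}\mu)=g(C,\mu),
\end{equation}
valid for every admissible $\mu$.

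The remaining and only nonroutine step, which I expect to be the crux, is to identify $\bar{\mu}$ with the uniform distribution on $S$. Writing $n=|S|$ and letting $\bar{p}_j$ be the weight $\bar{\mu}$ assigns to $\bm{r}_j$, one has $\bar{p}_j=|G|^{-1}\sum_{R\in G}p_{R^{-1}(j)}$. For each index $k$, the number of $R\in G$ with $R\bm{r}_k=\bm{r}_j$ is nonzero by transitivity and equals the common stabilizer order $|G|/n$ by the orbit--stabilizer theorem; hence $\bar{p}_j=(|G|/n)|G|^{-1}\sum_k p_k=1/n$, independent of $j$. Thus $\bar{\mu}$ is uniform, and the displayed inequality becomes $g(C,\mathrm{uniform})\leq g(C,\mu)$ for all $\mu$, proving that $g(C,\{\bm{r}_j,p_j\}_j)$ is minimized at equal weights; the corresponding statement for $\gamma_2$ follows at once from $\gamma_2(C,\mu)=[1+g(C,\mu)]/2$. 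The analytic content is entirely supplied by convexity and orthogonal invariance, so the subtlety is purely combinatorial—verifying that transitivity forces the group-averaged weights to coincide.
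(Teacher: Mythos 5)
Your proof is correct and follows exactly the route the paper intends: the paper dismisses this lemma with the one-line remark that it "follows from \lref{lem:Convex}," relying on the observation (stated just after that lemma) that $g(C,\mu)$ is invariant under orthogonal transformations and convex in $\mu$, so that averaging over the symmetry group can only decrease $g$. Your write-up simply supplies the details the paper leaves implicit—most usefully the orbit--stabilizer computation showing that transitivity forces the group-averaged weights to be uniform—so there is nothing to correct.
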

\Lref{lem:Transitive} follows from \lref{lem:Convex}.  It applies in particular when $\{\bm{r}_j \}_j$ forms a platonic solid or a regular polygon. 

Let $S$ be  a set    of  unit vectors on the Bloch sphere and let $\bm{v}$ be a unit vector on the Bloch sphere.  The set $S$ is center symmetric if $\bm{r}\in S$  means $-\bm{r}\in S$ and vice versa.
Define
\begin{align}
\bm{\eta}(S)&:=\sum_{\bm{r}\in S} \bm{r}, \label{eq:etaS}\\
S_{\bm{v}}&:=\{\bm{r}\in S \,|\,\bm{r}\cdot \bm{v}\geq 0\}. \label{eq:Sv}
\end{align}

\begin{lemma}\label{lem:CenterSym}
	Let  $S=\{\bm{r}_j\}_{j=1}^M$ be a set of $M$ unit vectors on the Bloch sphere that is center symmetric. Then
	\begin{align}
	g^\ast(S)
	&=\frac{1}{M}\max_{\bm{v}}\sum_{j=1}^M|\bm{r}_j\cdot\bm{v}| \label{eq:g*(mu)CSv1} \\
	&=\frac{2}{M}\max_{\bm{v}}|\bm{\eta}(S_{\bm{v}} )| \label{eq:g*(mu)CSv2}\\
	&= \frac{2}{M}\,\max_{S'\subseteq S}\,\left|\bm{\eta}(S') \right|. \label{eq:g*(mu)CSS}
	\end{align}
	If the maximum in \eref{eq:g*(mu)CSS} is attained at the set $S'$, then $S'$  contains exactly $M/2$ vectors; in addition,
	\begin{equation}\label{eq:etaSr}
	\bm{\eta}(S')\cdot \bm{r}\geq \frac{1}{2} \;\; \forall \bm{r}\in S',\quad  \bm{\eta}(S')\cdot \bm{r}\leq -\frac{1}{2} \;\;\forall \bm{r}\in \overline{S'},
	\end{equation}
	where $\overline{S'}$ is the complement of $S'$ in $S$. If the maximum in \eref{eq:g*(mu)CSv1} is attained at the unit vector $\bm{v}$, so that $\bm{v}$ is an intelligent direction at $C=0$, then 
	\begin{equation}
	\bm{v} \,\mathbin{\!/\mkern-5mu/\!}\, \bm{\eta}(S_{\bm{v}}), \quad \bm{v}\cdot \bm{r}_j\neq 0\;\;\forall j. 
	\end{equation}
\end{lemma}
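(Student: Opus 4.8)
The plan is to treat this as the discrete, center-symmetric counterpart of \lref{lem:g*(mu)=2max} and then to extract the extra structural information (the cardinality $|S'|=M/2$ and the margin inequalities) from a single-element exchange argument. First note that center symmetry makes the antipodal pairs $\{\bm{r},-\bm{r}\}$ cancel, so $\bm{\eta}(S)=0$; hence the uniform discrete measure $\mu=\frac{1}{M}\sum_j\delta_{\bm{r}_j}$ satisfies the hypothesis \eref{eq:CenterSymProt} of \lref{lem:g*(mu)=2max}, and the first equality \eref{eq:g*(mu)CSv1} is just \eref{eq:gC0S} specialized to equal weights.

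For the remaining two equalities I would establish, for every unit vector $\bm{v}$, the pointwise identity
\[
\sum_{j=1}^M|\bm{r}_j\cdot\bm{v}|=2\,\bm{\eta}(S_{\bm{v}})\cdot\bm{v}.
\]
This follows by splitting $S$ into the parts $\bm{r}\cdot\bm{v}>0$, $\bm{r}\cdot\bm{v}=0$, and $\bm{r}\cdot\bm{v}<0$, writing $|\bm{r}\cdot\bm{v}|=\pm\bm{r}\cdot\bm{v}$ on the outer parts, and using $\bm{\eta}(\{\bm{r}\cdot\bm{v}<0\})=-\bm{\eta}(\{\bm{r}\cdot\bm{v}>0\})$ from center symmetry; the orthogonal vectors contribute nothing, which is why $S_{\bm{v}}$ may freely include them. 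Next I would observe that the subset maximizing $\bm{v}\cdot\bm{\eta}(\cdot)$ is exactly the one collecting all $\bm{r}$ with $\bm{r}\cdot\bm{v}>0$, so $\bm{\eta}(S_{\bm{v}})\cdot\bm{v}=\max_{S'\subseteq S}\bm{v}\cdot\bm{\eta}(S')$; interchanging the maxima over $\bm{v}$ and over $S'$ and using $\max_{\bm{v}}\bm{v}\cdot\bm{\eta}(S')=|\bm{\eta}(S')|$ (Cauchy--Schwarz) yields \eref{eq:g*(mu)CSS}. The chain
\[
\max_{\bm{v}}\bm{\eta}(S_{\bm{v}})\cdot\bm{v}\le\max_{\bm{v}}|\bm{\eta}(S_{\bm{v}})|\le\max_{S'\subseteq S}|\bm{\eta}(S')|=\max_{\bm{v}}\bm{\eta}(S_{\bm{v}})\cdot\bm{v}
\]
then collapses to equalities and delivers \eref{eq:g*(mu)CSv2}.

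For the structural claims, let $S'$ attain $\max_{S''}|\bm{\eta}(S'')|$ and set $\bm{\eta}:=\bm{\eta}(S')$; since $g^\ast(S)\ge 1/M>0$ forces $\bm{\eta}\neq 0$, the argument is nondegenerate. Because $S'$ is a global maximizer, deleting any $\bm{r}\in S'$ cannot increase the norm, so $|\bm{\eta}-\bm{r}|^2\le|\bm{\eta}|^2$, which rearranges (using $|\bm{r}|=1$) to $\bm{\eta}\cdot\bm{r}\ge\tfrac12$; likewise adjoining any $\bm{r}\in\overline{S'}$ gives $|\bm{\eta}+\bm{r}|^2\le|\bm{\eta}|^2$, i.e.\ $\bm{\eta}\cdot\bm{r}\le-\tfrac12$. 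These are precisely \eref{eq:etaSr}, and in particular they exclude $\bm{\eta}\cdot\bm{r}=0$ for every $\bm{r}\in S$. Pairing each $\bm{r}$ with $-\bm{r}$, exactly one member of each antipodal pair has a positive $\bm{\eta}$-projection and therefore must lie in $S'$ (the other, with negative projection, must lie in $\overline{S'}$); hence $S'$ meets each of the $M/2$ pairs exactly once, giving $|S'|=M/2$. Finally, for the intelligent-direction statement, a maximizer $\bm{v}$ of \eref{eq:g*(mu)CSv1} forces equality in $\sum_j|\bm{r}_j\cdot\bm{v}|=2\bm{\eta}(S_{\bm{v}})\cdot\bm{v}\le 2|\bm{\eta}(S_{\bm{v}})|$, so $\bm{v}\parallel\bm{\eta}(S_{\bm{v}})$ and $S_{\bm{v}}$ itself maximizes $|\bm{\eta}(\cdot)|$; applying the margins just proved to $S'=S_{\bm{v}}$ gives $|\bm{\eta}(S_{\bm{v}})\cdot\bm{r}_j|\ge\tfrac12$, whence $\bm{v}\cdot\bm{r}_j=\bm{\eta}(S_{\bm{v}})\cdot\bm{r}_j/|\bm{\eta}(S_{\bm{v}})|\neq 0$ for all $j$.

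I expect the main obstacle to be the careful bookkeeping of the boundary vectors with $\bm{r}\cdot\bm{v}=0$: one must verify that they drop out of the pointwise identity and that the strict margins $\ge\tfrac12$, $\le-\tfrac12$ genuinely rule out zero projections, so that the antipodal pairing in the cardinality count is unambiguous. Everything else is a packaging of convexity/Cauchy--Schwarz estimates already available from \lref{lem:Convex} and \lref{lem:g*(mu)=2max}.
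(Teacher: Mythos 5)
Your proposal is correct and follows essentially the same route as the paper: the three equalities come from specializing Lemma~\ref{lem:g*(mu)=2max} (whose hypothesis \eqref{eq:CenterSymProt} holds by center symmetry) to the uniform discrete measure, and the margins \eqref{eq:etaSr}, the cardinality $|S'|=M/2$, and the intelligent-direction claims all follow from the same single-element exchange inequalities $|\bm{\eta}(S')\mp\bm{r}|^2\leq|\bm{\eta}(S')|^2$ that the paper uses. The only difference is cosmetic: you re-derive the discrete analogue of Lemma~\ref{lem:g*(mu)=2max} inline (including the careful treatment of boundary vectors with $\bm{r}\cdot\bm{v}=0$), whereas the paper simply cites it.
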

\Lref{lem:CenterSym} also implies that an optimal set $S'$ that maximizes $|\bm{\eta}(S')|$ contains exactly one vector in the set $\{\bm{r},-\bm{r}\}$ for each $\bm{r}\in S$.

\begin{proof}
	\Esref{eq:g*(mu)CSv1}-\eqref{eq:g*(mu)CSS} follow from \eref{eq:gC0S} and \lref{lem:g*(mu)=2max}; note that the condition in \eref{eq:CenterSymProt} is guaranteed by the assumption that $\{\bm{r}_j\}_{j=1}^M$ is center symmetric. 	
	
	Suppose the maximum in \eref{eq:g*(mu)CSS} is attained at the set $S'$. Then 
	\begin{align}
	|\bm{\eta}(S')-\bm{r}|^2&\leq |\bm{\eta}(S')|^2\quad \forall \bm{r}\in  S', \\
	|\bm{\eta}(S')+\bm{r}|^2&\leq |\bm{\eta}(S')|^2\quad \forall \bm{r}\in \overline{S'},
	\end{align}
	which imply  \eref{eq:etaSr}. 
	As a corollary, $S'$ contains exactly one vector in the set $\{\bm{r},-\bm{r}\}$ for each $\bm{r}\in S$. In particular, $S'$ has cardinality $M/2$.
	
	Next, suppose the maximum in \eref{eq:g*(mu)CSv1} is attained at the unit vector $\bm{v}$.  Then $\bm{v} \,\mathbin{\!/\mkern-5mu/\!}\, \bm{\eta}(S_{\bm{v}})$ by \lref{lem:g*(mu)=2max}, and the maximum in \eref{eq:g*(mu)CSS} is attained at the set $S_{\bm{v}}$. In addition, $\bm{\eta}(S_{\bm{v}})\cdot \bm{r}_j\neq 0$ for all $j$ according to \eref{eq:etaSr}, so $\bm{v}\cdot \bm{r}_j\neq 0$ for all $j$. 
\end{proof}

\section*{\label{app:discrete}SUPPLEMENTARY Note 9: Concrete  verification protocols}

Here we study various concrete protocols for verifying the Bell state in the presence of a dishonest party. In addition to the verification protocols listed in \tref{tab:ProtocolGHZDH}, we also consider protocols based on platonic solids and arbitrary regular polygons (with extension by including the $Z$ measurement). Given a verification protocol specified by a distribution $\mu$ on the Bloch sphere, our main goal is to determine $g(C,\mu)$, $\gamma(C,\mu)=\gamma_2(C,\mu)$,  $\gamma^\ast(\mu)=\gamma_2^\ast(\mu)=\gamma_2(0,\mu)$, and $ \hat{\gamma}(C,\mu)=\hat{\gamma}_2(C,\mu)$ (cf. \thsref{thm:MGP},\ref{thm:Mixed}, and \lref{lem:gammaCHighDim} in the main text). Furthermore, we determine the optimal protocol and the optimal two-setting protocol.
The main results are summarized in \tref{tab:ProtocolGHZDH} in the main text. 
Many protocols presented here can  easily be adapted for the verification of GHZ states.

\subsection*{\label{app:two}Two-setting protocols}
In the simplest verification protocol, Alice can perform two projective measurements $\bm{r}_1$ and $\bm{r}_2$ with probabilities $p_1$ and $p_2$, respectively. By replacing $\bm{r}_2$ with $-\bm{r}_2$ if necessary, we can assume that $\bm{r}_1\cdot \bm{r}_2\geq0$.
To minimize the guessing probability of Bob, Alice can choose $p_1=p_2=1/2$ according to \lref{lem:Transitive}; then the verification matrix reads
\begin{equation}
\Xi=\frac{1}{2}(\bm{r}_1 \bm{r}_1^\rmT+\bm{r}_2 \bm{r}_2^\rmT).
\end{equation}
Let $\alpha=\arccos(\bm{r}_1\cdot\bm{r}_2)$; then  $0\leq \alpha\leq \pi/2$ and
\begin{equation}
\|\Xi\|=\frac{1+\cos\alpha}{2}.
\end{equation}
In addition,  \lref{lem:gCUB} implies that
\begin{align}
g(C,\mu)&= \sqrt{\frac{1+\cos\alpha+C^2(1-\cos\alpha)}{2}}\nonumber\\
&=\sqrt{\frac{1+C^2+(1-C^2)\cos\alpha}{2}}; \label{eq:g2setting}
\end{align}
meanwhile, any unit vector parallel to $\bm{r}_1+\bm{r}_2$ represents an intelligent direction. 
Here  $g(C,\mu)$ is minimized when $\alpha=\pi/2$, so that $\bm{r}_1$ and $\bm{r}_2$ are orthogonal,  in which case we 
obtain the optimal two-setting protocol with
\begin{equation}\label{eq:ggammaXYapp}
g(C,\mu)=\sqrt{\frac{1+C^2}{2}},\;\; \gamma_2(C,\mu)=\frac{1}{2}+\frac{1}{2}\sqrt{\frac{1+C^2}{2}} .
\end{equation}
The special case $XY$ protocol was proposed  previously in  \rcite{Pappa2012SM}. However, \rcite{Pappa2012SM} neither derived the exact formula for the guessing probability  nor proved the optimality of the $XY$ protocol among all two-setting protocols.

When $C=0$, \eref{eq:ggammaXYapp} yields
\begin{equation}
g^\ast(\mu)=\frac{1}{\sqrt{2}},\quad \gamma_2^\ast(\mu)=\frac{1}{2}+\frac{1}{2\sqrt{2}}.
\end{equation}
By virtue of \thref{thm:Mixed} we can further deduce that
\begin{align}
\hat{\gamma}_2(C,\mu)
&=\frac{1+C}{2}+\frac{1-C}{2\sqrt{2}}  =\frac{1}{4}[2+\sqrt{2}+(2-\sqrt{2})C\lsp].
\end{align}

\begin{proposition}\label{pro:Opt2setting}
	Suppose $\mu$ is a two-setting protocol specified by the weighted set $\{\bm{r}_j, p_j\}_{j=1}^2$. Then 
	\begin{gather}
	g(C,\mu)\geq \sqrt{\frac{1+C^2}{2}},\;\; \gamma_2(C,\mu)\geq \frac{1}{2}+\frac{1}{2}\sqrt{\frac{1+C^2}{2}}, \label{eq:ggammaLB2setting}\\
	\hat{\gamma}_2(C,\mu)
	\geq \frac{1}{4}[2+\sqrt{2}+(2-\sqrt{2})C\lsp]. \label{eq:gammahatLB2setting}
	\end{gather} 
	When $0\leq C<1$, the three inequalities  are saturated iff $\bm{r}_1,\bm{r}_2$ are  orthogonal and $p_1=p_2=1/2$.
\end{proposition}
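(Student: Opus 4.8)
The plan is to start from the variational formula of \thref{thm:MGP}, which for a two-setting protocol reads $g(C,\{\bm{r}_j,p_j\})=\max_{\bm{v}}\sum_{j=1}^2 p_j\sqrt{C^2+(1-C^2)(\bm{r}_j\cdot\bm{v})^2}$. After replacing $\bm{r}_2$ by $-\bm{r}_2$ if necessary I may assume $\bm{r}_1\cdot\bm{r}_2=\cos\alpha\geq 0$ with $0\leq\alpha\leq\pi/2$, which leaves the measurement unchanged. The lower bound on $g$ then follows by evaluating the objective at the single test vector $\bm{v}\propto\bm{r}_1+\bm{r}_2$, the angle bisector, for which $\bm{r}_1\cdot\bm{v}=\bm{r}_2\cdot\bm{v}=\cos(\alpha/2)$. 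Since the two radicands coincide there and $p_1+p_2=1$, the $p$-weighting collapses, giving $g(C,\mu)\geq\sqrt{C^2+(1-C^2)\cos^2(\alpha/2)}$; using $\cos^2(\alpha/2)=(1+\cos\alpha)/2\geq 1/2$ for $\alpha\in[0,\pi/2]$ yields $g(C,\mu)\geq\sqrt{(1+C^2)/2}$. This holds for arbitrary $p_1,p_2$ and establishes the $g$ part of \eref{eq:ggammaLB2setting}; the bound on $\gamma_2$ is immediate from $\gamma_2=(1+g)/2$.

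For the mixed-state bound \eref{eq:gammahatLB2setting} I would invoke \thref{thm:Mixed}, which gives $\hat{\gamma}_2(C,\mu)=(1-C)\gamma^\ast(\mu)+C$ with $\gamma^\ast(\mu)=\tfrac12+\tfrac12 g^\ast(\mu)$ and $g^\ast(\mu)=g(0,\mu)$. Specialising the $g$-bound to $C=0$ gives $g^\ast(\mu)\geq 1/\sqrt{2}$, hence $\gamma^\ast(\mu)\geq\tfrac12+\tfrac1{2\sqrt2}$, and substituting into the affine expression for $\hat{\gamma}_2$ reproduces $\tfrac14[2+\sqrt2+(2-\sqrt2)C]$ once the coefficient of $C$ is collected.

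The equality analysis (for $0\leq C<1$) is the delicate part. The reverse implication is free: for orthogonal vectors with $p_1=p_2=1/2$ the value $\sqrt{(1+C^2)/2}$ is attained according to \eref{eq:ggammaXYapp}. For the forward implication, suppose $g(C,\mu)=\sqrt{(1+C^2)/2}$. The chain above forces $\cos^2(\alpha/2)=1/2$ in the last step, i.e. $\alpha=\pi/2$, so the vectors must be orthogonal. With orthogonality in hand I would fix coordinates $\bm{r}_1=\hat{x},\bm{r}_2=\hat{y}$, observe that the optimal $\bm{v}$ lies in the $xy$-plane (projecting $\bm{v}$ onto the plane and renormalising only enlarges each $|\bm{r}_i\cdot\bm{v}|$), and write $\bm{v}=(\cos\theta,\sin\theta,0)$, so that $g=\max_\theta[p_1 A(\theta)+p_2 B(\theta)]$ with $A(\theta)=\sqrt{C^2+(1-C^2)\cos^2\theta}$ and $B(\theta)=\sqrt{C^2+(1-C^2)\sin^2\theta}$. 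The key observation is that $\theta=\pi/4$ gives the constant value $\sqrt{(1+C^2)/2}$ for \emph{every} $p$; the derivative of $p_1A+p_2B$ at $\theta=\pi/4$ equals $(2p_1-1)A'(\pi/4)$ with $A'(\pi/4)\neq 0$ whenever $C<1$, so $\theta=\pi/4$ is a stationary maximizer only when $p_1=p_2=1/2$. Hence for $p_1\neq p_2$ the maximum strictly exceeds $\sqrt{(1+C^2)/2}$, and equality forces $p_1=p_2=1/2$. The saturation of the $\hat{\gamma}_2$ bound then follows because $1-C>0$ makes it saturate exactly when the $C=0$ bound on $g^\ast$ does.

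I expect the main obstacle to be precisely this strictness in the equality characterisation. The inequalities themselves drop out of a single well-chosen test direction, but $g$ is only a maximum of functions \emph{linear} in $(p_1,p_2)$, hence merely piecewise linear and not strictly convex in the probabilities, so one cannot simply cite strict convexity (or \lref{lem:Transitive}) to pin down $p_1=p_2=1/2$. The device that rescues the argument is that the bisector direction returns the same value $\sqrt{(1+C^2)/2}$ for all $p$, which reduces the question to whether that direction remains a maximizer — settled by the first-order test above, and valid only because $1-C^2>0$ when $C<1$.
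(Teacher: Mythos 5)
Your proof is correct, and it follows the same two-stage skeleton as the paper's (first force orthogonality of $\bm{r}_1,\bm{r}_2$ through the angle $\alpha$, then force $p_1=p_2=1/2$), but both technical sub-steps are carried out differently. For the lower bound, the paper invokes \lref{lem:Transitive} to reduce to $p_1=p_2=1/2$ and then evaluates $g$ exactly via \lref{lem:gCUB}, arriving at \eref{eq:g2setting}; you instead test the single bisector direction $\bm{v}\propto\bm{r}_1+\bm{r}_2$, where the two radicands coincide and the $p$-dependence cancels --- a slightly more self-contained route that avoids the convexity machinery entirely. For the strictness in $p$, the paper bounds $g(C,\mu)\geq\sqrt{C^2+(1-C^2)(p_1^2+p_2^2)}$ using convexity of $\sqrt{C^2+(1-C^2)x^2}$ together with Cauchy--Schwarz, and notes $p_1^2+p_2^2>1/2$ when $p_1\neq p_2$; you run a first-order test at $\theta=\pi/4$ and observe that the derivative $(2p_1-1)A'(\pi/4)$ is nonzero unless $p_1=1/2$, so the symmetric direction fails to be a maximizer and the maximum strictly exceeds $\sqrt{(1+C^2)/2}$. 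Both arguments are valid and use $1-C^2>0$ in the same essential way; your closing remark that \lref{lem:Transitive} cannot settle the equality case (because $g$ is only piecewise linear, not strictly convex, in the probabilities) correctly identifies why the paper, too, needs the separate convexity/Cauchy--Schwarz computation rather than a citation of that lemma. The reductions of the $\gamma_2$ and $\hat{\gamma}_2$ statements to the $g$ statement via $\gamma_2=(1+g)/2$ and \thref{thm:Mixed} match the paper.
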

\begin{proof}
	According to the above discussion, the three inequalities in \esref{eq:ggammaLB2setting} and \eqref{eq:gammahatLB2setting}  hold for all two-setting protocols. In addition they are saturated if $\bm{r}_1,\bm{r}_2$ are  orthogonal and $p_1=p_2=1/2$. 
	When $0\leq C<1$, if one of the 
	inequalities is saturated, then the other two  are also saturated thanks to \eref{eq:gammaCmu} and \thref{thm:Mixed} in the main text, so we have $g(C,\mu)= \sqrt{(1+C^2)/2}$. According to \eref{eq:g2setting} and the above discussion, $\bm{r}_1,\bm{r}_2$  must be orthogonal. By \eref{eq:gCrp} we have
	\begin{align}
	g(C,\mu)&=\max_{x^2+y^2\leq 1}\Bigl(p_1\sqrt{C^2+(1-C^2)x^2}\nonumber\\
	&\quad +p_2\sqrt{C^2+(1-C^2)y^2}\lsp\Bigr)\nonumber\\
	&\geq \max_{x^2+y^2\leq 1}\sqrt{C^2+(1-C^2)(p_1 x+p_2 y)^2}\nonumber\\
	&=\sqrt{C^2+(1-C^2)(p_1^2+p_2^2)}\geq  \sqrt{\frac{1+C^2}{2}},
	\end{align}
	where the first inequality follows from the fact that $\sqrt{C^2+(1-C^2)x^2}$ is convex in $x$. The last inequality is saturated iff $p_1=p_2=1/2$. This observation completes the proof of \pref{pro:Opt2setting}.
\end{proof}

\subsection*{\label{asec:XYZ}Three-setting protocols}
Here  we consider verification protocols based on three measurement settings. We are particularly interested in the case in which the Bloch vectors $\bm{r}_1, \bm{r}_2, \bm{r}_3$ that specify the three projective measurements are mutually orthogonal, so that the corresponding projective measurements are mutually unbiased. For example, the measurement bases can be chosen  to be the eigenbases of the three Pauli operators $X$, $Y$, $Z$, which lead to the $XYZ$ protocol. To minimize the guessing probability, the three measurements should be performed with the equal probability of $1/3$ according to \lref{lem:Transitive}.
In this case, the verification matrix reads $\Xi(\mu)=\mathbb{I}/3$, and the bound in \lref{lem:gCUB} can be saturated, so we have
\begin{align}
g(C,XYZ)
&=\sqrt{\frac{1+2C^2}{3}},\\
\gamma_2(C,XYZ)
&=\frac{1}{2}+\frac{1}{2}\sqrt{\frac{1+2C^2}{3}}.
\end{align}
When $0\leq C<1$, there are eight intelligent directions, namely, $(\pm1, \pm 1, \pm 1)^\rmT/\sqrt{3}$. 
When $C=0$, the above two equations reduce to
\begin{align}
g^\ast(XYZ)=\frac{1}{\sqrt{3}},\quad
\gamma_2^\ast(XYZ)=\frac{1}{2}+\frac{1}{2\sqrt{3}}.
\end{align}
By virtue of \thref{thm:Mixed} we can further deduce that
\begin{align}
\hat{\gamma}_2(C,XYZ)
&=\frac{1}{6}[3+\sqrt{3}+(3-\sqrt{3})C\lsp].
\end{align}
Thanks to \pref{pro:Opt3setting} below, the $XYZ$ protocol is optimal among all three-setting protocols with respect to $\gamma_2^\ast(\mu)$ and $\hat{\gamma}_2(C,\mu)$.

\begin{proposition}\label{pro:Opt3setting}
	Suppose $\mu$ is a three-setting protocol specified by the weighted set $\{\bm{r}_j, p_j\}_{j=1}^3$. Then 
	\begin{align}
	g^\ast(\mu)&\geq \frac{1}{\sqrt{3}},\quad
	\gamma_2^\ast(\mu)\geq \frac{3+\sqrt{3}}{6},    \label{eq:ggammaLB3setting}\\
	\hat{\gamma}_2(C,\mu)
	&\geq \frac{1}{6}[3+\sqrt{3}+(3-\sqrt{3})C\lsp].\label{eq:gammahatLB3setting}
	\end{align}
	The  inequalities in \eref{eq:ggammaLB3setting} are saturated iff 
	$\bm{r}_1,\bm{r}_2,\bm{r}_3$ are mutually orthogonal and $p_1=p_2=p_3=1/3$. When $0\leq C<1$, the  inequality in \eref{eq:gammahatLB3setting} is saturated iff the same conditions hold. 
\end{proposition}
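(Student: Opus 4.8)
The plan is to reduce the whole proposition to a single lower bound on $g^\ast(\mu)$. By \eref{eq:gammaC0} in the main text we have $\gamma_2^\ast(\mu)=\tfrac12[1+g^\ast(\mu)]$, and by \thref{thm:Mixed} we have $\hat{\gamma}_2(C,\mu)=(1-C)\gamma_2^\ast(\mu)+C=\tfrac{1+C}{2}+\tfrac{1-C}{2}g^\ast(\mu)$. Both expressions are affine and increasing in $g^\ast(\mu)$ (the coefficient $\tfrac{1-C}{2}$ is strictly positive precisely when $C<1$), so it suffices to prove $g^\ast(\mu)\geq 1/\sqrt{3}$ and to identify when equality holds; the bounds \eref{eq:ggammaLB3setting} and \eref{eq:gammahatLB3setting} together with their saturation conditions then follow at once, the degeneracy at $C=1$ being exactly what forces the equality statement for $\hat{\gamma}_2$ to be restricted to $0\leq C<1$.

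For the key bound I would exploit the sign-optimization form of $g^\ast$. Writing $\bm{w}_{\bm{\epsilon}}:=\sum_{j=1}^3 p_j\epsilon_j\bm{r}_j$ for a sign vector $\bm{\epsilon}\in\{\pm1\}^3$, the choice $\bm{v}=\bm{w}_{\bm{\epsilon}}/|\bm{w}_{\bm{\epsilon}}|$ in \eref{eq:gC0rp} gives $g^\ast(\mu)\geq\sum_j p_j\epsilon_j(\bm{r}_j\cdot\bm{v})=|\bm{w}_{\bm{\epsilon}}|$ for every $\bm{\epsilon}$ (the case $\bm{w}_{\bm{\epsilon}}=0$ being trivial). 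Hence $g^\ast(\mu)^2\geq\max_{\bm{\epsilon}}|\bm{w}_{\bm{\epsilon}}|^2\geq\frac{1}{8}\sum_{\bm{\epsilon}}|\bm{w}_{\bm{\epsilon}}|^2$. The averaging identity $\sum_{\bm{\epsilon}}\epsilon_j\epsilon_k=8\delta_{jk}$ collapses the right-hand side to $\sum_j p_j^2|\bm{r}_j|^2=\sum_j p_j^2$, and the power-mean (Cauchy--Schwarz) inequality gives $\sum_j p_j^2\geq(\sum_j p_j)^2/3=1/3$. This chain yields $g^\ast(\mu)\geq 1/\sqrt{3}$. The essential point is that the sign averaging produces the sharp constant $1/\sqrt{3}$, whereas the cruder verification-matrix estimate $g^\ast(\mu)\geq\|\Xi(\mu)\|\geq 1/3$ coming from \lref{lem:gCmuLB} is too weak.

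For the equality analysis I would trace back the chain. Saturation forces $\sum_j p_j^2=1/3$, hence $p_1=p_2=p_3=1/3$, and it forces $\max_{\bm{\epsilon}}|\bm{w}_{\bm{\epsilon}}|^2$ to equal the average, hence all $|\bm{w}_{\bm{\epsilon}}|^2$ to be equal. Since $|\bm{w}_{\bm{\epsilon}}|^2=\sum_j p_j^2+2\sum_{j<k}p_jp_k\epsilon_j\epsilon_k(\bm{r}_j\cdot\bm{r}_k)$, independence of $\bm{\epsilon}$ forces $p_jp_k(\bm{r}_j\cdot\bm{r}_k)=0$ for all $j<k$; with all $p_j=1/3>0$ this means $\bm{r}_1,\bm{r}_2,\bm{r}_3$ are mutually orthogonal. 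The converse, that orthogonality together with equal weights attains $g^\ast(\mu)=1/\sqrt{3}$, is exactly the direct computation for the $XYZ$ protocol carried out just above. Transporting these conclusions through the affine relations of the first paragraph then establishes \eref{eq:ggammaLB3setting}, \eref{eq:gammahatLB3setting}, and their saturation conditions.

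The main obstacle is obtaining the tight constant rather than merely some bound: the verification-matrix route only delivers $1/3$, so the real content lives in the sign-averaging step. The more delicate part is the equality analysis, where one must argue that equal magnitudes $|\bm{w}_{\bm{\epsilon}}|$ across all sign patterns force genuine mutual orthogonality of the three Bloch vectors, and keep track of the $C=1$ degeneracy that limits the saturation claim for $\hat{\gamma}_2$.
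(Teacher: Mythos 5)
Your proposal is correct, and it reaches the sharp constant by a route that differs in its technical core from the paper's. Both proofs reduce the proposition to $g^\ast(\mu)\geq 1/\sqrt{3}$ via the affine relations $\gamma_2^\ast=\tfrac12(1+g^\ast)$ and \thref{thm:Mixed}, and both ultimately pass through $\sqrt{p_1^2+p_2^2+p_3^2}\geq 1/\sqrt{3}$. The paper first disposes of the case where some $p_j=0$ (reducing to a two-setting protocol with $g^\ast\geq 1/\sqrt{2}$), then symmetrizes $\mu$ so as to invoke \lref{lem:g*(mu)=2max}, obtaining $g^\ast(\mu)$ as the exact maximum of $|p_1\bm{r}_1\pm p_2\bm{r}_2\pm p_3\bm{r}_3|$, and strips the cross terms sequentially using $\max\bigl\{\sqrt{A+B},\sqrt{A-B}\bigr\}\geq\sqrt{A}$; the saturation conditions are then read off step by step (equal weights from the last inequality, $\bm{r}_1\perp\bm{r}_2$ from the second, $\bm{r}_3\perp\bm{r}_1,\bm{r}_2$ from the first). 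You instead get $g^\ast(\mu)\geq|\bm{w}_{\bm{\epsilon}}|$ directly from \eref{eq:gC0rp} without any symmetrization lemma, and kill all cross terms at once by averaging over the eight sign vectors, with the equality analysis resting on "max equals average implies all equal" plus the linear independence of the parity characters $\epsilon_j\epsilon_k$. Your version buys two small simplifications: the degenerate case $p_j=0$ needs no separate treatment (saturation already forces $\sum_j p_j^2=1/3$ and hence all $p_j=1/3>0$), and the orthogonality conclusion drops out uniformly rather than through an ordered chain; the paper's version has the advantage of exhibiting $g^\ast(\mu)$ as an exact maximum over four vectors, which is reusable information. One point worth making explicit if you write this up: in the saturation argument the sandwich $1/3\geq\max_{\bm{\epsilon}}|\bm{w}_{\bm{\epsilon}}|^2\geq\tfrac18\sum_{\bm{\epsilon}}|\bm{w}_{\bm{\epsilon}}|^2=\sum_jp_j^2\geq 1/3$ is what pins every link to equality, so you should state that you are squeezing the whole chain rather than only the last two inequalities.
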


\begin{proof}
	The two inequalities in \eref{eq:ggammaLB3setting} are equivalent by \eref{eq:gammaCmu} in the main text and they imply 
	\eref{eq:gammahatLB3setting} thanks to \thref{thm:Mixed}.  In addition, $\hat{\gamma}_2(C=1,\mu)=1$, and the  inequality in \eref{eq:gammahatLB3setting} for any $0\leq C<1$ implies the two inequalities in \eref{eq:ggammaLB3setting}.
	So  it suffices to consider the first inequality $g^\ast(\mu)\geq 1/\sqrt{3}$. 
	If one of the three probabilities $p_1,p_2,p_3$ is equal to 0, then $\mu$ reduces to a two-setting protocol, in which case we have
	\begin{align}
	g^\ast(\mu)\geq\frac{1}{\sqrt{2}}> g^\ast(XYZ)=\frac{1}{\sqrt{3}}.
	\end{align}
	So the inequalities in \eref{eq:ggammaLB3setting} hold and cannot be saturated.

	Next, we consider the case in which $p_1,p_2,p_3>0$.
	Note that the protocol $\mu$ is equivalent to the protocol $\mu'$ specified by the following weighted set
	\begin{align}
	\{&\{\bm{r}_1,p_1/2\},\{-\bm{r}_1,p_1/2\},
	\{\bm{r}_2,p_2/2\},  \nonumber\\
	&\{-\bm{r}_2,p_2/2\},
	\{\bm{r}_3,p_3/2\},\{-\bm{r}_3,p_3/2\}\}, 
	\end{align}
	which satisfies  \eref{eq:CenterSymProt} with $\mu$ replaced by  $\mu'$.
	According to \lref{lem:g*(mu)=2max},  we have
	\begin{align}\label{eq:g0of3setting}
	&g^\ast(\mu)=g^\ast(\mu')   \nonumber\\
	&= \max\{
	|p_1\bm{r}_1+p_2\bm{r}_2+p_3\bm{r}_3|,\,
	|p_1\bm{r}_1+p_2\bm{r}_2-p_3\bm{r}_3|,\nonumber\\
	&\quad\quad\quad\;\;\,                 |p_1\bm{r}_1-p_2\bm{r}_2+p_3\bm{r}_3|,\,
	|p_1\bm{r}_1-p_2\bm{r}_2-p_3\bm{r}_3|
	\}\nonumber\\
	&= \max\Big\{
	\sqrt{(p_1\bm{r}_1+p_2\bm{r}_2)^2+p_3^2+2p_3\bm{r}_3\cdot(p_1\bm{r}_1+p_2\bm{r}_2)},\nonumber\\
	&\qquad\qquad\sqrt{(p_1\bm{r}_1+p_2\bm{r}_2)^2+p_3^2-2p_3\bm{r}_3\cdot(p_1\bm{r}_1+p_2\bm{r}_2)},\nonumber\\
	&\qquad\qquad\sqrt{(p_1\bm{r}_1-p_2\bm{r}_2)^2+p_3^2+2p_3\bm{r}_3\cdot(p_1\bm{r}_1-p_2\bm{r}_2)},\nonumber\\
	&\qquad\qquad\sqrt{(p_1\bm{r}_1-p_2\bm{r}_2)^2+p_3^2-2p_3\bm{r}_3\cdot(p_1\bm{r}_1-p_2\bm{r}_2)}\lsp
	\Big\}\nonumber\\
	&\geq \max\Big\{
	\sqrt{(p_1\bm{r}_1+p_2\bm{r}_2)^2+p_3^2},\,
	\sqrt{(p_1\bm{r}_1-p_2\bm{r}_2)^2+p_3^2}
	\lsp \Big\}\nonumber\\
	&= \max\Big\{
	\sqrt{p_1^2+p_2^2+p_3^2+2p_1p_2 (\bm{r}_1\cdot\bm{r}_2)},\nonumber\\
	&\qquad\qquad  \sqrt{p_1^2+p_2^2+p_3^2-2p_1p_2 (\bm{r}_1\cdot\bm{r}_2)}
	\lsp \Big\} \nonumber\\
	&\geq\sqrt{p_1^2+p_2^2+p_3^2}
	\geq\frac{p_1+p_2+p_3}{\sqrt{3}}
	=\frac{1}{\sqrt{3}},
	\end{align}
	which confirms \eref{eq:ggammaLB3setting}.
	Here the last inequality is saturated iff $p_1=p_2=p_3=1/3$; the second inequality is saturated iff $\bm{r}_1\perp\bm{r}_2$ given that $p_1, p_2, p_3>0$. 
	When $\bm{r}_1\perp\bm{r}_2$, the first inequality is saturated iff $\bm{r}_3\perp\bm{r}_1$ and $\bm{r}_3\perp\bm{r}_2$. 
	Therefore, for a three-setting protocol $\mu$, the inequality  $g^\ast(\mu)\geq 1/\sqrt{3}$ is saturated  iff $\bm{r}_1,\bm{r}_2,\bm{r}_3$ are mutually orthogonal and $p_1=p_2=p_3=1/3$.
\end{proof}

\subsection*{Optimal protocol}
Since $g(C,\mu)$ is convex in $\mu$ by \lref{lem:Convex} and is invariant under rotations on the Bloch sphere, it is minimized when $\mu$ is the uniform distribution on the Bloch sphere, which leads to the  isotropic protocol. Then  the integration in \eref{eq:gCmu} in the main text is independent of the unit vector $\bm{v}$, so any direction is an intelligent direction. For simplicity we can choose $\bm{v}=(0,0,1)^\rmT$, which yields
\begin{align}
&g(C, \mu )=\int\rmd\mu(\bm{r}) \sqrt{C^2+(1-C^2)(\bm{r}\cdot\bm{v})^2}\nonumber\\
&=\frac{1}{4\pi}\int_{\alpha=0}^\pi \rmd\alpha\int_{\phi=0}^{2\pi}\rmd\phi\sin\alpha \sqrt{C^2+(1-C^2)\cos^2\alpha}\nonumber\\
&=\int_{0}^1\rmd x \sqrt{C^2+(1-C^2)x^2}\nonumber\\
&=\frac{1}{2}+\frac{C^2\mathrm{arcsinh}(\frac{\sqrt{1-C^2}}{C})}{2\sqrt{1-C^2}}.
\end{align}
In the limit  $C\rightarrow 0$, this equation yields
\begin{align}
g^\ast(\mu)=\frac{1}{2},\quad \gamma_2^\ast( \mu )=\frac{3}{4}.
\end{align}
By virtue of \thref{thm:Mixed} we can further deduce that
\begin{align}
\hat{\gamma}_2(C,\mu)
&=\frac{1+C}{2}+\frac{1-C}{4}  =\frac{3+C}{4}.
\end{align}

The above discussion yields the following proposition.
\begin{proposition}
	Every verification protocol $\mu$ of the Bell state satisfies
	\begin{align}
	g(C, \mu)&\geq \frac{1}{2}+\frac{C^2\mathrm{arcsinh}(\frac{\sqrt{1-C^2}}{C})}{2\sqrt{1-C^2}}, \\
	\gamma_2(C, \mu)&\geq \frac{3}{4}+\frac{C^2\mathrm{arcsinh}(\frac{\sqrt{1-C^2}}{C})}{4\sqrt{1-C^2}},\\
	\hat{\gamma}_2(C,\mu)
	&\geq \frac{3+C}{4}.
	\end{align}
	All inequalities are saturated for the isotropic protocol. 
\end{proposition}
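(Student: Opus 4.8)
The plan is to reduce all three inequalities to the single claim that $g(C,\mu)$ is minimized by the uniform (isotropic) distribution, and then read off the $\gamma_2$ and $\hat{\gamma}_2$ bounds from identities already proved. The two structural inputs are the convexity of $g(C,\mu)$ in $\mu$ (\lref{lem:Convex}) and its invariance under rotations of the Bloch sphere. The latter holds because, by \eref{eq:gCmu}, $g(C,\mu)=\max_{\bm v}\int\rmd\mu(\bm r)\sqrt{C^2+(1-C^2)(\bm r\cdot\bm v)^2}$ depends on $\mu$ only through the rotation-covariant bilinear quantity $\bm r\cdot\bm v$; writing $R_*\mu$ for the pushforward of $\mu$ under $R\in SO(3)$ and substituting $\bm v\mapsto R^{-1}\bm v$ in the maximization gives $g(C,R_*\mu)=g(C,\mu)$.

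First I would set up the symmetrization. Writing $\mu_{\mathrm{iso}}$ for the uniform distribution, the observation is that $\mu_{\mathrm{iso}}=\int_{SO(3)}R_*\mu\,\rmd R$ is the Haar average of the rotated copies of any fixed $\mu$, since rotations act transitively on the unit sphere. Applying convexity of $g(C,\cdot)$ in the form of Jensen's inequality to this continuous mixture yields
\begin{equation}
g(C,\mu_{\mathrm{iso}})=g\Bigl(C,\int R_*\mu\,\rmd R\Bigr)\leq\int g(C,R_*\mu)\,\rmd R=g(C,\mu),
\end{equation}
where each integrand $g(C,R_*\mu)$ equals $g(C,\mu)$ by rotation invariance. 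The explicit value of the right-hand side at $\mu_{\mathrm{iso}}$ was computed in the preceding paragraph, namely $g(C,\mu_{\mathrm{iso}})=\tfrac12+C^2\mathrm{arcsinh}(\sqrt{1-C^2}/C)/(2\sqrt{1-C^2})$, which establishes the first inequality. The second follows immediately from $\gamma_2(C,\mu)=[1+g(C,\mu)]/2$ in \eref{eq:gammaCmu}.

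For the mixed-state bound I would invoke \thref{thm:Mixed}, which gives $\hat{\gamma}_2(C,\mu)=(1-C)\gamma^\ast(\mu)+C$ with $\gamma^\ast(\mu)=[1+g^\ast(\mu)]/2$ and $g^\ast(\mu)=g(0,\mu)$. Specializing the first inequality to $C=0$ gives $g^\ast(\mu)\geq g^\ast(\mu_{\mathrm{iso}})=1/2$, hence $\gamma^\ast(\mu)\geq 3/4$ and $\hat{\gamma}_2(C,\mu)\geq(1-C)\tfrac34+C=(3+C)/4$. Saturation of all three inequalities for $\mu_{\mathrm{iso}}$ is precisely the direct computation carried out just above the proposition.

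I expect the main obstacle to be the measure-theoretic upgrade of convexity from the finite convex combinations literally stated in \lref{lem:Convex} to the continuous Haar average used in the Jensen step: one must check that $R\mapsto R_*\mu$ is suitably measurable, that its Haar average is genuinely $\mu_{\mathrm{iso}}$, and that the convex functional $g(C,\cdot)$ interchanges correctly with the integral over $SO(3)$. These points are routine but constitute the only place where care beyond the cited results is needed.
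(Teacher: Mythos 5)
Your proposal is correct and follows essentially the same route as the paper: the paper likewise argues that $g(C,\mu)$ is convex in $\mu$ and rotation invariant, hence minimized by the uniform distribution (this symmetrization is spelled out right after \lref{lem:Convex} in Supplementary Note 2), and then reads off the $\gamma_2$ and $\hat{\gamma}_2$ bounds from \eref{eq:gammaCmu} and \thref{thm:Mixed} together with the explicit computation of $g(C,\mu_{\mathrm{iso}})$. Your only addition is making the Haar-average/Jensen step explicit, which the paper leaves implicit.
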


\subsection*{\label{asec:Equator}Equator protocol}
When the support of $\mu$ is contained in the equator, $g(C,\mu)$ is minimized when $\mu$ is the uniform distribution on the equator. The resulting protocol is
called the equator protocol,  which is equivalent to the $\theta$-protocol proposed in \rcite{McCut2016SM}. In this case, the integration in \eref{eq:gCmu}  in the main text is maximized when  $\bm{v}$ is any unit vector in the $xy$-plane. In other words, any direction in the $xy$-plane is an intelligent direction.  To simplify the calculation we can choose $\bm{v}=(1,0,0)^\rmT$, which yields
\begin{align}
g(C, \mu )&=\int\rmd\mu(\bm{r}) \sqrt{C^2+(1-C^2)(\bm{r}\cdot\bm{v})^2}\nonumber\\
&=\frac{1}{2\pi} \int_{0}^{2\pi}\rmd\phi \sqrt{C^2+(1-C^2)\cos^2\phi}\nonumber\\
&=\frac{2}{\pi} \int_{0}^{\pi/2}\rmd\phi \sqrt{1-(1-C^2)\sin^2\phi}\nonumber\\
&=\frac{2}{\pi}K\bigl(\sqrt{1-C^2}\lsp\bigr),
\end{align}
where $K(\sqrt{1-C^2}\lsp)$ is a complete elliptic integral of the second kind. In conjunction with \eref{eq:gammaCmu}  in the main text we get
\begin{align}
\gamma_2(C, \mu )&=\frac{1}{2}+\frac{1}{\pi}K\bigl(\sqrt{1-C^2}\lsp\bigr).
\end{align}
When $C=0$, we have $K(\sqrt{1-C^2}\lsp)=1$ and
\begin{align}
g^\ast( \mu )&=\frac{2}{\pi},\quad \gamma_2^\ast( \mu )=\frac{1}{2}+\frac{1}{\pi}.
\end{align}
By virtue of \thref{thm:Mixed} we can deduce that
\begin{align}
\hat{\gamma}_2(C,\mu)
&=\frac{1+C}{2}+\frac{1-C}{\pi}  =\frac{1}{2\pi}[\pi+2+(\pi-2)C\lsp].
\end{align}

The above discussion yields the following proposition. 
\begin{proposition}
	Suppose $\mu$ is supported on the equator of the Bloch sphere. Then 
	\begin{align}
	g(C, \mu )&\geq \frac{2}{\pi}K\bigl(\sqrt{1-C^2}\lsp\bigr),\\
	\gamma_2(C, \mu )&\geq\frac{1}{2}+\frac{1}{\pi}K\bigl(\sqrt{1-C^2}\lsp\bigr),\\
	\hat{\gamma}_2(C,\mu)
	&\geq\frac{1}{2\pi}[\pi+2+(\pi-2)C\lsp].
	\end{align}
	All inequalities are saturated for the equator protocol. 
\end{proposition}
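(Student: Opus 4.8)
The plan is to prove the three inequalities at once by showing that, among all probability distributions supported on the equator, the uniform distribution $\mu_{\mathrm{eq}}$ minimizes $g(C,\mu)$ (hence also $\gamma_2$ and $\hat{\gamma}_2$), and then to evaluate the minimizer explicitly. First I would record two facts about the functional $\mu\mapsto g(C,\mu)$: it is convex in $\mu$ by \lref{lem:Convex}, and it is invariant under any rotation applied to $\mu$, since for the pushforward $R_\ast\mu$ of $\mu$ by a rotation $R$ one has, after the substitution $\bm{v}\mapsto R^{-1}\bm{v}$ in \eref{eq:gCmu},
\[
g(C,R_\ast\mu)=\max_{\bm{v}}\int\rmd\mu(\bm{r})\sqrt{C^2+(1-C^2)(R\bm{r}\cdot\bm{v})^2}=g(C,\mu).
\]
Let $\{R_\theta\}_{\theta\in[0,2\pi)}$ be the rotations about the $z$-axis; each preserves the equator, so each $(R_\theta)_\ast\mu$ is again supported on the equator, and averaging over $\theta$ yields $\mu_{\mathrm{eq}}=\frac{1}{2\pi}\int_0^{2\pi}(R_\theta)_\ast\mu\,\rmd\theta$. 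Combining convexity with rotation invariance gives
\[
g(C,\mu_{\mathrm{eq}})\le\frac{1}{2\pi}\int_0^{2\pi}g\bigl(C,(R_\theta)_\ast\mu\bigr)\,\rmd\theta=g(C,\mu),
\]
so $\mu_{\mathrm{eq}}$ is the minimizer over all equator-supported distributions.

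Next I would evaluate $g(C,\mu_{\mathrm{eq}})$. By the rotational symmetry any vector in the $xy$-plane is an intelligent direction, so taking $\bm{v}=(1,0,0)^\rmT$ in \eref{eq:gCmu} reduces the maximization to a single integral,
\[
g(C,\mu_{\mathrm{eq}})=\frac{1}{2\pi}\int_0^{2\pi}\sqrt{C^2+(1-C^2)\cos^2\phi}\,\rmd\phi=\frac{2}{\pi}K\bigl(\sqrt{1-C^2}\bigr),
\]
after recognizing the complete elliptic integral of the second kind; this is the first displayed inequality. The bound for $\gamma_2$ then follows at once from $\gamma_2(C,\mu)=[1+g(C,\mu)]/2$ in \eref{eq:gammaCmu}. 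For $\hat{\gamma}_2$, \thref{thm:Mixed} gives the identity $\hat{\gamma}_2(C,\mu)=(1-C)\gamma^\ast(\mu)+C$, which is nondecreasing in $\gamma^\ast(\mu)=\gamma_2(0,\mu)=\tfrac12+\tfrac12 g^\ast(\mu)$ because $1-C\ge0$; applying the minimization at $C=0$, where $K(1)=1$ gives $g^\ast(\mu_{\mathrm{eq}})=2/\pi$ and hence $\gamma^\ast(\mu)\ge\tfrac12+\tfrac1\pi$, yields $\hat{\gamma}_2(C,\mu)\ge\frac{1+C}{2}+\frac{1-C}{\pi}=\frac{1}{2\pi}[\pi+2+(\pi-2)C]$. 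All three inequalities are saturated by $\mu_{\mathrm{eq}}$, i.e. the equator protocol, by construction.

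The step I expect to be the main obstacle is the passage from the two-term convexity of \lref{lem:Convex} to the continuous average $\frac{1}{2\pi}\int_0^{2\pi}(R_\theta)_\ast\mu\,\rmd\theta$ used above, which amounts to a Jensen-type inequality for the convex functional $\mu\mapsto g(C,\mu)$. I would justify this either by approximating $\mu_{\mathrm{eq}}$ by finite equal-weight averages $\frac{1}{n}\sum_{k}(R_{\theta_k})_\ast\mu$ over $\theta_k=2\pi k/n$, applying ordinary convexity, and passing to the limit using continuity of $g(C,\cdot)$, or more directly by noting that $g(C,\mu)=\max_{\bm{v}}\int f_{\bm{v}}\,\rmd\mu$ with $f_{\bm{v}}(\bm{r})=\sqrt{C^2+(1-C^2)(\bm{r}\cdot\bm{v})^2}$ is a supremum of affine functionals of $\mu$, hence convex and lower semicontinuous, so that Jensen's inequality applies verbatim. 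The elliptic-integral evaluation and the two derived bounds are then routine.
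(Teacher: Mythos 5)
Your proposal is correct and follows essentially the same route as the paper: convexity of $g(C,\cdot)$ in $\mu$ (\lref{lem:Convex}) plus rotation invariance to show the uniform equator distribution is the minimizer, explicit evaluation of the integral as $\frac{2}{\pi}K(\sqrt{1-C^2})$ with $\bm{v}=(1,0,0)^\rmT$, and then \eref{eq:gammaCmu} and \thref{thm:Mixed} for the other two bounds. Your explicit justification of the continuous (Jensen-type) averaging step is a detail the paper glosses over, but it does not change the argument.
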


\subsection*{Polygon protocols}
Here we consider verification protocols defined by regular polygons inscribed on the equator of the Bloch sphere. For simplicity, all  measurement settings are chosen with the same  probability,  which is the optimal choice according to  \lref{lem:Transitive}. The resulting protocols are called polygon protocols, which may be regarded as  discrete versions of the equator protocol.

Consider a polygon with $M$ vertices, without loss of generality, we can assume that the $M$ unit vectors defining the polygon have the form
\begin{equation}
\bm{r}_j=(\cos\theta_j,\sin\theta_j,0)^\rmT,
\end{equation}
where
\begin{equation}
\theta_j =\frac{2(j-1)\pi }{M},\quad  j=1,2,\ldots, M.
\end{equation}
Let $g(C,M):=g(C,\{r_j\}_{j=1}^M)$ and $g^\ast(M):=g(0,M)$. Then we have
\begin{align}
&g(C, M)
=\frac{1}{M}\max_{\bm{v}}\sum_{j=1}^M \sqrt{C^2+(1-C^2)(\bm{r}_j\cdot\bm{v})^2}\nonumber\\
&=\frac{1}{M}\max_{0\leq \varphi<2\pi}\sum_{j=0}^{M-1} \sqrt{C^2+(1-C^2)\cos^2{\left(\frac{ 2\pi j}{M}-\varphi\right)} }, \label{eq:gCM}
\end{align}
where the maximization in the first line is taken over all unit vectors on the Bloch sphere. When $M$ goes to infinity, the polygon protocol approaches the  equator protocol.

In general it is not easy  to derive an analytical formula
for $g(C,\mu)$. Here we focus on the special case $C=0$,
which yields the following proposition.
\begin{proposition}\label{pro:PolyC=0}
	For any integer $M\geq 3$ we have
	\begin{align}\label{eq:PolyC=0}
	g^\ast(M)=\begin{cases}
	2\left[M\sin\left(\frac{\pi}{M}\right)\right]^{-1}  & M \ \rm{ even},\\[0.5ex]
	\left[M\sin\left(\frac{\pi}{2M}\right)\right]^{-1}  & M \ \rm{ odd}.
	\end{cases}
	\end{align}
\end{proposition}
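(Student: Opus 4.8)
The plan is to reduce the whole proposition to a single geometric quantity — the length of the vector sum of $M/2$ consecutive vertices of a regular polygon — and to treat the two parities by using the center symmetry that is present when $M$ is even and can be manufactured when $M$ is odd. Throughout I start from $g^\ast(M)=\frac{1}{M}\max_{\bm v}\sum_{j=0}^{M-1}|\bm r_j\cdot\bm v|$ supplied by \eref{eq:gC0S}, and since every $\bm r_j=(\cos\frac{2\pi j}{M},\sin\frac{2\pi j}{M},0)^\rmT$ lies in the $xy$-plane, I may restrict the maximizing direction $\bm v$ to that plane.

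First I would settle the even case. For $M$ even the set $S=\{\bm r_j\}$ is center symmetric, since $\bm r_{j+M/2}=-\bm r_j$, so \lref{lem:CenterSym} applies and gives $g^\ast(M)=\frac{2}{M}\max_{S'\subseteq S}|\bm\eta(S')|$. The optimality condition \eref{eq:etaSr} forces a maximizing $S'$ to be exactly the set of vertices lying strictly on the positive side of the line perpendicular to $\bm\eta(S')$; for a regular $M$-gon with $M$ even this is a block of $M/2$ consecutive vertices, and by rotational symmetry every such block has the same $|\bm\eta(S')|$. Taking the block $\bm r_0,\dots,\bm r_{M/2-1}$ and summing the geometric series $\sum_{j=0}^{M/2-1}e^{2\pi ij/M}=(e^{i\pi}-1)/(e^{2\pi i/M}-1)$, whose modulus is $2/(2\sin(\pi/M))$, gives $|\bm\eta(S')|=1/\sin(\pi/M)$, whence $g^\ast(M)=\frac{2}{M\sin(\pi/M)}$, the claimed even-$M$ value.

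The second and more interesting step is to reduce odd $M$ to the even case just settled. For odd $M$ the sets $S$ and $-S$ are disjoint, since $\bm r_j=-\bm r_k$ would require $2(j-k)\equiv M\pmod{2M}$, which is impossible because the left side is even and $M$ is odd; their union consists of the unit vectors at all angles $k\pi/M$, $k=0,\dots,2M-1$, i.e.\ a regular $2M$-gon. Because $|\bm r\cdot\bm v|=|(-\bm r)\cdot\bm v|$, for every $\bm v$ one has $\sum_{\bm r\in S}|\bm r\cdot\bm v|=\frac12\sum_{\bm r\in S\cup(-S)}|\bm r\cdot\bm v|$. Applying the even-$M$ formula to $N=2M$ gives $\max_{\bm v}\sum_{\bm r\in S\cup(-S)}|\bm r\cdot\bm v|=N\,g^\ast(N)=2/\sin(\pi/(2M))$, so $\max_{\bm v}\sum_{\bm r\in S}|\bm r\cdot\bm v|=1/\sin(\pi/(2M))$; the prefactor $1/M$ from \eref{eq:gC0S} then yields $g^\ast(M)=\frac{1}{M\sin(\pi/(2M))}$, the claimed odd-$M$ value.

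The only genuinely delicate point is the even-case claim that a maximizing half-set $S'$ consists of consecutive vertices, and I expect this to be the main obstacle; however it is already handled by the characterization \eref{eq:etaSr} in \lref{lem:CenterSym}, which identifies the optimal $S'$ with the vertices in an open half-plane through the origin. Everything else — the geometric-series evaluation and the halving identity for odd $M$ — is routine, and the final formulas can be cross-checked against the $M=3$ and $M=4$ entries of \tref{tab:ProtocolGHZDH}, which give $g^\ast=2/3$ and $g^\ast=1/\sqrt2$, respectively.
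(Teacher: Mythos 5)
Your proposal is correct and follows essentially the same route as the paper's proof: both use \eref{eq:gC0S} together with \lref{lem:CenterSym} to reduce the even case to the geometric-series evaluation of $|\bm{\eta}(S')|$ for a block of $M/2$ consecutive vertices, and both handle odd $M$ by observing that the $M$-gon protocol is equivalent to the $2M$-gon protocol (since $\bm r$ and $-\bm r$ contribute the same $|\bm r\cdot\bm v|$). Your justification that the optimal half-set must be a consecutive block via \eref{eq:etaSr} is in fact slightly more explicit than the paper's, which simply asserts this; the computations and the cross-checks against \tref{tab:ProtocolGHZDH} are all accurate.
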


\Pref{pro:PolyC=0} above and \eref{eq:gammaCmu} in the main text together yield
\begin{align}
\gamma_2^\ast(M)
=\begin{cases}
\frac{1}{2}+\left[M\sin\left(\frac{\pi}{M}\right)\right]^{-1}  & M \ \rm{ even},\\[0.5ex]
\frac{1}{2}+\left[2M\sin\left(\frac{\pi}{2M}\right)\right]^{-1}  & M \ \rm{ odd}.
\end{cases}
\end{align}
By virtue of \thref{thm:Mixed} we can further deduce that
\begin{align}
\hat{\gamma}_2(C,M)
=\begin{cases}
\frac{1+C}{2}+\frac{1-C}{M \sin(\frac{\pi}{ M})}  & M \ \rm{ even},\\[1ex]
\frac{1+C}{2}+\frac{1-C}{2M\sin(\frac{\pi}{2M})}  & M \ \rm{ odd}.
\end{cases}
\end{align}

\begin{proof}[Proof of \pref{pro:PolyC=0}]
	When $C=0$, \eref{eq:gCM} yields 
	\begin{align}
	g^\ast(M)&= \frac{1}{M}\max_{\bm{v}}\sum_{j=1}^M |\bm{r_j}\cdot\bm{v}|\nonumber \\
	&=\frac{1}{M}
	\max_{0\leq \varphi<2\pi}  \sum_{j=0}^{M-1}
	\left|  \cos{\left(\frac{ 2\pi j}{M}-\varphi\right)} \right|. \label{g0Mproof2} 
	\end{align}
	When $M$ is even, let $S=\{\bm{r}_1, \bm{r}_2, \ldots, \bm{r}_{M/2}\}$ be a set of $M/2$ unit vectors, which correspond to $M/2$ consecutive vertices of the polygon. Then \lref{lem:CenterSym} implies that
	\begin{align}
	g^\ast(M)&=\frac{2}{M}|\bm{\eta}(S)|=\frac{2}{M}\left|\sum_{j=1}^{M/2} \bm{r}_j\right|=\frac{2}{M}\left|\sum_{j=1}^{M/2} \rme^{\rmi\theta_j}\right|\nonumber\\
	&=\frac{2}{M}\frac{2}{|1-\rme^{2\pi\rmi/M}|}=\frac{2}{M\sin\bigl(\frac{\pi}{ M}\bigr)},
	\end{align}
	which confirms \eref{eq:PolyC=0}. 
	Alternatively, this result can be derived directly by virtue of  \eref{g0Mproof2}.
	
	When $M$ is even but not divisible by 4, \lref{lem:CenterSym} also implies that each intelligent direction coincides with 
	some  $\bm{r}_j$. When $M$ is  divisible by 4, by contrast, each  intelligent direction passes through the middle point of some edge of the polygon.

	When $M$ is odd, the polygon protocol with $M$ measurement settings is equivalent to the polygon protocol  with $2M$ measurement settings. So each intelligent direction coincides with 
	some  $\bm{r}_j$ or $-\bm{r}_j$,  and we have
	\begin{align}
	g^\ast(M)=g^\ast(2M)
	=\left[M\sin\left(\frac{\pi}{2M}\right)\right]^{-1}.
	\end{align}
	This observation completes the proof of  \pref{pro:PolyC=0}.  
\end{proof}

When $C>0$,  numerical calculation indicates that each intelligent direction for $C=0$  is still an intelligent direction for $C>0$. 
To be concrete, when $M$ is not divisible by 4, any unit vector parallel or antiparallel to $\bm{r}_j$ is an intelligent direction.  When $M$ is divisible by 4, by contrast, any unit vector passing through the middle point of some edge of the polygon is an intelligent direction. This observation leads to the following conjecture.

\begin{widetext}
	\begin{conjecture}\label{con:gCM}
		Suppose $M\geq3$ is an integer and $0\leq C\leq1$; then we have
		\begin{align}\label{eq:PolygonM}
		g(C,M)=\begin{cases}
		\frac{1}{M}\sum_{j=0}^{M-1}\sqrt{C^2+(1-C^2)\cos^2\bigl(\frac{(2j-1)\pi}{M}\bigr)} &  4| M, \\[1ex]
		\frac{1}{M}\sum_{j=0}^{M-1}\sqrt{C^2+(1-C^2)\cos^2\bigl(\frac{2j\pi}{M}\bigr)}  & 4\nmid M. 
		\end{cases}
		\end{align}
	\end{conjecture}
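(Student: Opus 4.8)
The plan is to reduce the conjecture to a statement about where a single trigonometric sum attains its maximum, and then to a sign-and-convexity property of the Fourier coefficients of $h(\theta):=\sqrt{C^2+(1-C^2)\cos^2\theta}$. First I would rewrite \eqref{eq:gCM} as $g(C,M)=\frac1M\max_\varphi f(\varphi)$ with $f(\varphi)=\sum_{j=0}^{M-1}h(2\pi j/M-\varphi)$. Since replacing $\varphi\to\varphi+2\pi/M$ merely permutes the summands, $f$ has period $2\pi/M$; since the sample angles $\{2\pi j/M\}$ are symmetric under negation and $h$ is even, $f$ is even. Hence $\varphi=0$ and $\varphi=\pi/M$ are stationary points, the two candidate maximizers in the conjecture. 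To compare them globally I would pass to the Fourier expansion $h(\theta)=\sum_{n\ge0}a_n\cos(2n\theta)$ (legitimate as $h$ is even with period $\pi$, and real-analytic for $C>0$) and use the aliasing identity $\frac1M\sum_j\cos(2n(2\pi j/M-\varphi))=\cos(2n\varphi)$ when $M\mid2n$ and $0$ otherwise. This gives $f(\varphi)=M\sum_{n:\,M\mid2n}a_n\cos(2n\varphi)$, so only the harmonics $n\in\{kM/2\}$ (even $M$) or $n\in\{mM\}$ (odd $M$) survive.

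Next I would fold the signs into the coefficients. Writing $b_k=|a_{kM/2}|$ (respectively $|a_{mM}|$) and tracking the parity of the surviving indices, a short case check on $M\bmod4$ recasts $f$ in each case as $a_0-\Phi(\omega)$, where $\Phi(\chi):=\sum_{k\ge1}b_k\cos(k\chi)$ and $\omega$ equals $M\varphi$ (for $4\mid M$), $M\varphi+\pi$ (for $M\equiv2$), or $2M\varphi+\pi$ (for $M$ odd). In every case the claimed maximizer of $f$ corresponds exactly to $\chi=\pi$. Thus the conjecture reduces to the single clean assertion that $\Phi$ attains its minimum at $\chi=\pi$. I would establish this from a classical fact about cosine series with convex coefficients: if $(b_k)$ is positive, nonincreasing and convex with $b_k\to0$, then $\Phi$ is nonincreasing on $[0,\pi]$ (equivalently $\sum_k kb_k\sin k\chi\ge0$ there, of Fejér–Jackson type), so by evenness its global minimum is at $\chi=\pi$. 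Since subsampling a convex sequence along an arithmetic progression preserves monotonicity and convexity, it suffices to prove these properties for $(|a_n|)_{n\ge1}$ itself.

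The crux, and the step I expect to be the main obstacle, is therefore a purely analytic statement about the coefficients of $h$: that $\mathrm{sign}(a_n)=(-1)^{n+1}$ for $n\ge1$ and that $(|a_n|)$ is nonincreasing and convex, \emph{uniformly} in $C\in(0,1)$. Writing $h=\sqrt{A}\,\sqrt{1+k\cos2\theta}$ with $k=(1-C^2)/(1+C^2)\in(0,1)$, a small-$k$ expansion already yields $a_1>0,a_2<0,a_3>0,\dots$, and for fixed $C>0$ the coefficients decay geometrically (the branch points of $\sqrt{1+k\cos u}$ lie off the real axis), which makes both the sign pattern and the convexity plausible; I would try to prove them via an integral representation or a three-term recurrence for $a_n$. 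The delicate regime is $C\to0$ (i.e.\ $k\to1$), where the singularities approach the real axis and the decay degrades to $|a_n|\sim1/(4n^2-1)$. Reassuringly, there the series is explicit and gives $\Phi(\chi)=\frac12-\frac{\pi}{4}\sin(\chi/2)$ with minimum exactly at $\chi=\pi$, matching \pref{pro:PolyC=0}, so the conjecture is consistent at the endpoint; but controlling the sign and convexity of $(|a_n|)$ \emph{uniformly} down to $C=0$ is what makes a complete proof hard.
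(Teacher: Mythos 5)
The paper offers no proof of this statement: it is explicitly left as a conjecture, supported only by numerical evidence together with the proved special cases $C=0$ (\pref{pro:PolyC=0}) and $M=3$ [\eref{eq:Polygon3}]. Your Fourier reduction therefore goes beyond what the paper attempts, and that part is sound: $f(\varphi)=\sum_j h(2\pi j/M-\varphi)$ is even with period $2\pi/M$, the aliasing identity does kill every harmonic of $h(\theta)=\sum_{n\ge0}a_n\cos(2n\theta)$ except $n\in\{kM/2\}$ (resp.\ $\{mM\}$ for odd $M$), and your parity bookkeeping correctly matches the conjectured maximizers $\varphi=\pi/M$ (for $4\mid M$) and $\varphi=0$ (otherwise) to $\chi=\pi$ --- \emph{provided} the sign pattern $\operatorname{sign}(a_n)=(-1)^{n+1}$ holds. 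You rightly flag that sign-and-convexity claim as unproven; it is one genuine gap.

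There is, however, a second gap you did not flag: the ``classical fact'' you invoke to finish is false as stated. Positivity, monotonicity and convexity of a null sequence $(b_k)_{k\ge1}$ do not force $\Phi(\chi)=\sum_{k\ge1}b_k\cos(k\chi)$ to be nonincreasing on $[0,\pi]$, nor to attain its minimum at $\chi=\pi$. Take $b_1=1$, $b_2=\tfrac12$, $b_k=0$ for $k\ge3$: this is nonnegative, nonincreasing and convex, yet $\Phi(\chi)=\cos\chi+\tfrac12\cos2\chi$ satisfies $\Phi(2\pi/3)=-\tfrac34<\Phi(\pi)=-\tfrac12$ and is strictly increasing on $(2\pi/3,\pi)$. (The standard Fej\'er-kernel argument for convex coefficients yields nonnegativity of the series including the constant term, but the Fej\'er kernels are not themselves monotone on $[0,\pi]$, so no monotonicity of $\Phi$ follows.) Consequently, even granting the conjectured sign pattern and convexity of $(|a_n|)$, your argument does not close: one needs a stronger quantitative property of the subsampled coefficients $b_k$ --- at minimum something like $\sum_k(-1)^{k+1}k^2b_k\ge0$ so that $\chi=\pi$ is a local minimum, plus a global domination argument --- or a different mechanism entirely. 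The reduction is a useful reformulation, but the conjecture remains open after your proposal.
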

\end{widetext}

When $M=3$, the maximization over $\varphi$ in \eref{eq:gCM} can be solved directly by considering the derivative over $\varphi$, which yields 
\begin{equation}\label{eq:Polygon3}
g(C,3)=\frac{1+\sqrt{1+3C^2}}{3},\quad \gamma_2(C,3)=\frac{4+\sqrt{1+3C^2}}{6}
\end{equation}
and confirms \cref{con:gCM}.
When $C=0$,  this equation reduces to 
\begin{equation}
g^\ast(3)=\frac{2}{3},\quad \gamma_2^\ast(3)=\frac{5}{6},
\end{equation}
which agrees with \pref{pro:PolyC=0}. 

In the limit $M\rightarrow\infty$, the polygon protocol approaches the equator protocol; accordingly, the above results converge to the counterparts for the equator protocol.

\subsection*{\label{asec:EquatorZ}Equator$+Z$ protocol}
Equator$+Z$ protocol is another protocol of special interest, especially in the verification of GHZ states.  The corresponding distribution $\mu$ is concentrated on the equator together with the north and south poles. To minimize $g(C,\mu)$, the distribution $\mu$ should be uniform on the equator. Let $p$ be the total probability assigned to  the north and south poles.  Thanks to the rotation symmetry around the $z$-axis, \eref{eq:gCmu} can be simplified as follows, 
\begin{align}
&g(C, \mu )=\max_{0\leq \alpha\leq \pi/2} \biggl[ p\sqrt{C^2+(1-C^2)\cos^2\alpha}\nonumber\\
&+ \frac{1-p}{2\pi} \int_{0}^{2\pi}\rmd\phi \sqrt{C^2+(1-C^2)\sin^2\alpha\cos^2\phi}\,\biggr].
\end{align}
In general, it is not easy to derive an analytical formula for $g(C,\mu)$. Here we focus on the special case $C=0$, which yields
\begin{align}
&g^\ast( \mu )=\max_{0\leq \alpha\leq \pi/2} \biggl[ p\cos\alpha +\frac{2(1-p)}{\pi}\sin\alpha \biggr]\nonumber\\
&=\sqrt{p^2+\frac{4(1-p)^2}{\pi^2}}=\sqrt{\frac{(4+\pi^2)p^2-8p+4}{\pi^2}}. \label{eq:g*muEquatorZ}
\end{align}
The minimum of $g^\ast(\mu)$ is attained when $p=4/(4+\pi^2)$, in which case we have
\begin{equation}
g^\ast(\mu)=\frac{2}{\sqrt{4+\pi^2}}, \quad \gamma_2^\ast( \mu )=\frac{1}{2}+\frac{1}{\sqrt{4+\pi^2}}.
\end{equation}
This protocol is referred to as the equator$+Z$ protocol.
In addition, by virtue of \thref{thm:Mixed} we can deduce that
\begin{align}\label{eq:gahatEquatorZopt}
\hat{\gamma}_2(C,\mu)
&=\frac{1+C}{2}+\frac{1-C}{\sqrt{4+\pi^2}}.
\end{align}

The above discussion yields the following proposition. 
\begin{proposition}
	Suppose $\mu$ is supported on the equator of the Bloch sphere together with the north and south poles. Then 
	\begin{align}
	g^\ast(\mu)&\geq \frac{2}{\sqrt{4+\pi^2}}, \quad \gamma_2^\ast( \mu )\geq \frac{1}{2}+\frac{1}{\sqrt{4+\pi^2}},\\
	\hat{\gamma}_2(C,\mu)
	&\geq\frac{1+C}{2}+\frac{1-C}{\sqrt{4+\pi^2}}.
	\end{align}
	All inequalities are saturated for the equator$+Z$ protocol. 
\end{proposition}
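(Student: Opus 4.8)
The plan is to reduce an arbitrary admissible distribution to the symmetric equator$+Z$ protocol and then invoke the single‑variable optimization already carried out in \eref{eq:g*muEquatorZ}. Throughout I would work with the representation $g^\ast(\mu)=\max_{\bm v}\int\rmd\mu(\bm r)\,|\bm r\cdot\bm v|$ from \eref{eq:gC0}, the convexity of $g^\ast$ in $\mu$ (\lref{lem:Convex}), and the invariance of $g^\ast$ under orthogonal transformations of the Bloch sphere noted after \lref{lem:Convex}.

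First I would symmetrize $\mu$ by averaging it over all rotations about the $z$‑axis. This operation preserves the support (equator plus poles), keeps the total weight $p$ on the poles fixed, and turns the equatorial part into the uniform distribution on the equator. Writing $\bar\mu$ for the averaged distribution, convexity of $g^\ast$ together with its rotation invariance gives $g^\ast(\bar\mu)\le g^\ast(\mu)$; this is the continuous analogue of the inequality $g(C,p_1\mu_1+p_2\mu_2)\le g(C,\mu_1)$ recorded after \lref{lem:Convex}. It therefore suffices to bound $g^\ast$ from below over the symmetric distributions $\bar\mu$.

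Next I would compute $g^\ast(\bar\mu)$ explicitly. For a unit vector $\bm v$ at polar angle $\alpha$, both poles contribute $|\cos\alpha|$ to $|\bm r\cdot\bm v|$, so the pole part contributes $p\cos\alpha$ (with $0\le\alpha\le\pi/2$) irrespective of how $p$ is split between the two poles, while the uniform equatorial part contributes $\tfrac{2(1-p)}{\pi}\sin\alpha$. Hence the maximization collapses to
\begin{equation}
g^\ast(\bar\mu)=\max_{0\le\alpha\le\pi/2}\Bigl[p\cos\alpha+\tfrac{2(1-p)}{\pi}\sin\alpha\Bigr]=\sqrt{p^2+\tfrac{4(1-p)^2}{\pi^2}},
\end{equation}
exactly as in \eref{eq:g*muEquatorZ}. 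Minimizing the radicand over $p\in[0,1]$ gives the minimizer $p=4/(4+\pi^2)$ and the value $2/\sqrt{4+\pi^2}$, so $g^\ast(\mu)\ge g^\ast(\bar\mu)\ge 2/\sqrt{4+\pi^2}$ for every admissible $\mu$. The stated bounds for $\gamma_2^\ast(\mu)$ and $\hat\gamma_2(C,\mu)$ then follow at once from $\gamma_2^\ast(\mu)=\tfrac12[1+g^\ast(\mu)]$ (the case $C=0$ of \eref{eq:gammaCmu}) and from $\hat\gamma_2(C,\mu)=\tfrac{1+C}{2}+\tfrac{1-C}{2}g^\ast(\mu)$ (\thref{thm:Mixed}); saturation is immediate for the symmetric equator$+Z$ protocol with the optimal pole weight, for which every inequality becomes an equality.

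The step I expect to be the main obstacle is the rigorous justification of the rotation‑averaging: \lref{lem:Convex} is stated only for a two‑term convex combination, so I would need to upgrade it to a continuous mixture over the rotation group via a Jensen‑type argument, and to check the measurability needed to interchange the supremum over $\bm v$ with the integral over rotations. Everything after that reduces to the elementary one‑variable optimization of $p^2+4(1-p)^2/\pi^2$, which is routine.
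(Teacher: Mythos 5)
Your proposal is correct and follows essentially the same route as the paper: symmetrize over rotations about the $z$-axis using the convexity and rotation invariance of $g^\ast$ (exactly the argument the paper records after \lref{lem:Convex} and invokes with "the distribution $\mu$ should be uniform on the equator"), then evaluate $g^\ast=\max_\alpha[p\cos\alpha+\tfrac{2(1-p)}{\pi}\sin\alpha]=\sqrt{p^2+4(1-p)^2/\pi^2}$ and minimize over $p$, obtaining $p=4/(4+\pi^2)$ and the stated bounds via \eref{eq:gammaCmu} and \thref{thm:Mixed}. The symmetrization step you flag as a possible obstacle is the same standard Jensen-type argument the paper already uses for the isotropic protocol, so it poses no real difficulty.
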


By contrast, the protocol corresponding to  $p=1/3$  is referred to as the equator$+{Z}_{II}$-protocol. This protocol is interesting because it is balanced and  is optimal in the limit $C\rightarrow 1$ if we only consider pure states. However, it is not optimal when $C=0$, in which case we have
\begin{equation}
g^\ast(\mu)=\frac{\sqrt{16+\pi^2}}{3\pi}, \quad \gamma_2^\ast( \mu )=\frac{1}{2}+\frac{\sqrt{16+\pi^2}}{6\pi}.
\end{equation}
This example shows that the optimal equator$+Z$ protocol depends on $C$, and no choice of $p$ is optimal for all values of $C$. Compared with  $\hat{\gamma}_2(C,\mu)$,  the behavior of $\gamma_2(C,\mu)$ is more complicated, and it may be difficult to compare the performances of different verification protocols based on $\gamma_2(C,\mu)$. 
Nevertheless, the choices $p=4/(4+\pi^2)$ and $p=1/3$ are both nearly optimal for all values of $C$.

\bigskip

\subsection*{\label{asec:PolyZ}Polygon$+Z$ protocols}
The polygon$+Z$ protocol is constructed by replacing the uniform distribution on the equator in the  equator$+Z$ protocol with  the uniform distribution on the vertices of a regular polygon. Let $p$ be the total probability assigned to  the north and south poles and let $M$ be the number of vertices of the polygon. Then $g(C,\mu)$ is determined by the three parameters $C,M,p$; therefore, it is more informative to write   $g(C,\{M,p\})$ and 	$g^\ast(\{M,p\})$ in place of $g(C,\mu)$ and $g^\ast(\mu)$, respectively. The following proposition determines the value of $g^\ast(\{M,p\})$.

\begin{proposition}\label{pro:Poly+ZC=0}
	Suppose  $0\leq p\leq 1$ and $M\geq 3$ is an integer. Then 
	\begin{align}\label{eq:Poly+Zg(0)}
	g^\ast(\{M,p\})
	=\begin{cases}
	\sqrt{p^2+ \frac{4(1-p)^2}{M^2\sin^2(\frac{\pi}{M})} }  & M \ \rm{ even},\\[1ex]
	\sqrt{p^2+ \frac{(1-p)^2}{M^2\sin^2(\frac{\pi}{2M})} }  & M \ \rm{ odd}.
	\end{cases}
	\end{align}
\end{proposition}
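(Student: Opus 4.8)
The plan is to reduce the computation of $g^\ast(\{M,p\})$ to the already-established value $g^\ast(M)$ of \pref{pro:PolyC=0} together with one elementary one-variable maximization. The polygon$+Z$ distribution $\mu$ assigns weight $p/2$ to each of the two poles $\pm\bm{z}$, where $\bm{z}=(0,0,1)^\rmT$, and weight $(1-p)/M$ to each of the vertices $\bm{r}_j=(\cos\theta_j,\sin\theta_j,0)^\rmT$ with $\theta_j=2(j-1)\pi/M$. Since the poles cancel in pairs and the vertices of a regular polygon sum to zero, the centering condition \eref{eq:CenterSymProt} holds. Hence, invoking \eref{eq:gC0} together with the discrete version \eref{eq:gC0rp} and combining the two poles via $|\bm{z}\cdot\bm{v}|=|(-\bm{z})\cdot\bm{v}|$, I would write
\[
g^\ast(\{M,p\})=\max_{\bm{v}}\Bigl[p\,|\bm{z}\cdot\bm{v}|+\frac{1-p}{M}\sum_{j=1}^{M}|\bm{r}_j\cdot\bm{v}|\Bigr].
\]

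Next I would parametrize $\bm{v}=(\sin\alpha\cos\varphi,\sin\alpha\sin\varphi,\cos\alpha)^\rmT$ with $0\le\alpha\le\pi/2$, the restriction being harmless because every summand depends only on $|\bm{r}\cdot\bm{v}|$. Then the pole term is $p\cos\alpha$, depending on $\alpha$ alone, while $|\bm{r}_j\cdot\bm{v}|=\sin\alpha\,|\cos(\theta_j-\varphi)|$, so the objective factorizes as $p\cos\alpha+(1-p)\sin\alpha\,A(\varphi)$ with $A(\varphi):=\tfrac1M\sum_j|\cos(\theta_j-\varphi)|$. The crucial point is that the coefficient $(1-p)\sin\alpha$ is nonnegative, so for each fixed $\alpha$ the maximization over $\varphi$ acts on $A(\varphi)$ alone; by the in-plane analysis underlying \pref{pro:PolyC=0} one has $\max_\varphi A(\varphi)=g^\ast(M)$. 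This decouples the two maximizations and leaves $g^\ast(\{M,p\})=\max_{0\le\alpha\le\pi/2}\bigl[\,p\cos\alpha+(1-p)g^\ast(M)\sin\alpha\,\bigr]$, which is evaluated by the standard identity $\max_\alpha(a\cos\alpha+b\sin\alpha)=\sqrt{a^2+b^2}$ with $a=p$, $b=(1-p)g^\ast(M)$, exactly mirroring the equator$+Z$ computation \eref{eq:g*muEquatorZ}. Substituting the two branches $g^\ast(M)=2[M\sin(\pi/M)]^{-1}$ (even $M$) and $g^\ast(M)=[M\sin(\pi/(2M))]^{-1}$ (odd $M$) then yields the two cases of \eref{eq:Poly+Zg(0)}.

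The only genuine subtlety, and hence the step I would treat most carefully, is the decoupling claim. One must confirm that the optimum of the pure polygon term is attained for $\bm{v}$ in the $xy$-plane, so that $\max_\varphi A(\varphi)$ is truly $g^\ast(M)$ and not a strictly smaller quantity, and that adding the pole term—which biases the optimal $\bm{v}$ toward the $z$-axis—does not entangle the azimuthal and polar optimizations. Both facts follow purely from the signs of the coefficients (the polygon contribution carries the factor $\sin\alpha\le1$, maximal at $\alpha=\pi/2$, whereas the optimal $\varphi$ is independent of $\alpha$), so no new estimate beyond \pref{pro:PolyC=0} is required, and the remaining manipulations are routine trigonometry.
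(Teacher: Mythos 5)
Your proposal is correct and follows essentially the same route as the paper's proof: both decompose $\bm{v}$ into a polar angle $\alpha$ and an in-plane direction, observe that the objective factorizes as $p\cos\alpha+(1-p)\lsp g^\ast(M)\sin\alpha$ with the azimuthal maximization reducing to the polygon result of \pref{pro:PolyC=0}, and finish with $\max_\alpha(a\cos\alpha+b\sin\alpha)=\sqrt{a^2+b^2}$. The decoupling subtlety you flag is handled the same way in the paper, which simply notes that the inner maximum over unit vectors in the $xy$-plane is tied to the polygon-protocol counterpart.
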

In the special case $p=0$, the Polygon$+Z$ protocol reduces to the Polygon protocol, and \pref{pro:Poly+ZC=0} reduces  to \pref{pro:PolyC=0} as expected.

\begin{proof}
	By virtue of \eref{eq:gC0rp} we can deduce the following result,
	\begin{align}\label{eq:Poly+ZMaxgC=0Meven}
	&\!\!\! g^\ast(\{M,p\}) = \max_{\bm{v}}  \Biggl(p \left|\hat{z}\cdot\bm{v}\right| + \frac{1-p}{M}  \sum_{j=1}^{M}
	\left| \bm{r_j}\cdot\bm{v} \right| \Biggr) \nonumber\\
	&\!\!\! = \max_{0\leq \alpha\leq \frac{\pi}{2}}  \Biggl[p \cos\alpha  + \frac{1-p}{M} (\sin\alpha) \max_{\bm{u}}\sum_{j=1}^{M}
	\left| \bm{r_j}\cdot\bm{u} \right| \Biggr], \end{align}
	where the maximization in the brackets in the second line is taken over all unit vectors in the $xy$-plane, and the result is tied to the counterpart for  the polygon protocol. Therefore,
	\begin{align}
	g^\ast(\{M,p\})
	& = \max_{0\leq \alpha\leq \pi/2}  \left[\lsp p \cos\alpha  +(1-p)  g^\ast(M)\sin\alpha \lsp\right]\nonumber\\
	&=\sqrt{p^2 +(1-p)^2g^\ast(M)^2},
	\end{align}	
	where $ g^\ast(M)$ is presented in \eref{eq:PolyC=0}. By inserting the explicit expression for $g^\ast(M)$, we can derive
	\eref{eq:Poly+Zg(0)} immediately, which completes the proof of \pref{pro:Poly+ZC=0}.
\end{proof}

\bigskip

To determine the optimal Polygon$+Z$ protocol and the corresponding  probability $p_{\rm{opt}}$  of performing $Z$ measurement in the case $C=0$, we need to minimize $g^\ast(\{M,p\})$ over $p$. Simple calculation based on \pref{pro:Poly+ZC=0} shows that
\begin{align}
&\min_p \, g^\ast(\{M,p\})
= g^\ast(\{M,p_{\rm{opt}}\}) \nonumber\\
&=\begin{cases}
\left[1+ \frac{1}{4}M^2\sin^2\left(\frac{\pi}{M}\right)\right]^{-1/2}   & M \ \rm{ even},\\[0.5ex]
\left[1+ M^2\sin^2\left(\frac{\pi}{2M}\right)\right]^{-1/2}            & M \ \rm{ odd},
\end{cases}
\end{align}
where 
\begin{align}\label{eq:Poly+ZC=0optp}
p_{\rm{opt}}:=\begin{cases}
\left[1+ \frac{1}{4}M^2\sin^2\left(\frac{\pi}{M}\right)\right]^{-1}   & M \ \rm{ even},\\[0.8ex]
\left[1+ M^2\sin^2\left(\frac{\pi}{2M}\right)\right]^{-1}            & M \ \rm{ odd}, 
\end{cases}
\end{align}
is the desired  optimal probability.

\bigskip

In conjunction with \thref{thm:MGP}  in the main text, we can now determine the optimal threshold in the guessing probability, with the result
\begin{widetext}
	\begin{align}
	&\gamma_2^\ast(\{M,p_{\rm{opt}}\})=\begin{cases}
	\frac{1}{2}+\left[4+ M^2\sin^2\left(\frac{\pi}{M}\right)\right]^{-1/2}  & M \ \rm{ even},\\[0.5ex]
	\frac{1}{2}+\left[4+ 4M^2\sin^2\left(\frac{\pi}{2M}\right)\right]^{-1/2}   & M \ \rm{ odd}.
	\end{cases}
	\end{align}
	By virtue of \thref{thm:Mixed} we can further deduce that
	\begin{align}
	&\hat{\gamma}_2(C,\{M,p_{\rm{opt}}\})=\begin{cases}
	\frac{1+C}{2}+(1-C)\left[4+ M^2\sin^2\left(\frac{\pi}{M}\right)\right]^{-1/2}  & M \ \rm{ even},\\[0.5ex]
	\frac{1+C}{2}+(1-C)\left[4+ 4M^2\sin^2\left(\frac{\pi}{2M}\right)\right]^{-1/2}  & M \ \rm{ odd}.
	\end{cases}
	\end{align}

	In the limit $M\rightarrow\infty$, the polygon$+Z$ protocol approaches the equator$+Z$ protocol; accordingly, the above results converge to the counterparts of the equator$+Z$ protocol. For example, 
	\begin{gather}
	\lim_{M\rightarrow \infty} g^\ast(\{M,p\})=\sqrt{p^2+ \frac{4(1-p)^2}{\pi^2 }},  \qquad   \lim_{M\rightarrow \infty} p_{\rm{opt}}=\frac{4}{4+\pi^2}, \qquad  \lim_{M\rightarrow \infty} g^\ast(\{M,p_{\rm{opt}}\})=\frac{2}{\sqrt{4+\pi^2}},\\
	\lim_{M\rightarrow \infty}\gamma_2^\ast(\{M,p_{\rm{opt}}\}) =\frac{1}{2}+\frac{1}{\sqrt{4+\pi^2}}, \qquad 
	\lim_{M\rightarrow \infty}\hat{\gamma}_2(C,\{M,p_{\rm{opt}}\})=
	\frac{1+C}{2}+\frac{1-C}{\sqrt{4+\pi^2}}.
	\end{gather}
	All these limits coincide with the corresponding results for the equator$+Z$ protocol  presented in  \esref{eq:g*muEquatorZ}-\eqref{eq:gahatEquatorZopt} as expected.
\end{widetext}

\subsection*{Tetrahedron,  cube, and octahedron protocols}
In this and the following subsections we consider verification protocols based on  platonic solids. A platonic solid with $M$ vertices can be specified by a set $\{\bm{r}_j\}_{j=1}^M$ of  $M$ unit vectors that correspond to the vertices. Here we assume that all measurements associated with these unit vectors are chosen with the same probability of $1/M$, which is the optimal choice according to  \lref{lem:Transitive}. Note that the octahedron protocol is equivalent to the $XYZ$ protocol. In addition, the cube protocol is equivalent to the tetrahedron protocol, so it suffices to consider the tetrahedron protocol here.

For the tetrahedron protocol, the verification matrix reads $\Xi(\mu)=\mathbb{I}/3$, and the bound in \lref{lem:gCUB} is saturated if  $\bm{v}$ passes through the middle point of an edge. Therefore, we have
\begin{align}
g(C,\mu)
&=\sqrt{\frac{1+2C^2}{3}},
\quad \gamma_2(C,\mu)
=\frac{1}{2}+\frac{1}{2}\sqrt{\frac{1+2C^2}{3}},
\end{align} 
which are identical to the results on the $XYZ$ protocol. On the other hand, there are only six intelligent directions instead of eight, assuming  $0\leq C<1$.  When $C=0$, the above  equation reduces to 
\begin{align}
g^\ast(\mu)&=\frac{1}{\sqrt{3}},\quad
\gamma_2^\ast(\mu)=\frac{1}{2}+\frac{1}{2\sqrt{3}}.
\end{align}
By virtue of \thref{thm:Mixed} we can further deduce that
\begin{align}
\hat{\gamma}_2(C,\mu)
&=\frac{1}{6} [3+\sqrt{3}+(3-\sqrt{3})C\lsp] .
\end{align}
Here it is worth pointing out  that $\gamma_2^\ast(\mu)$ is larger than the counterpart of  the optimal polygon(3)$+Z$ protocol (cf.~\tref{tab:ProtocolGHZDH}). So  the  tetrahedron protocol is not the  optimal four-setting protocol. In general, the verification protocol based on a platonic solid is not necessarily optimal among protocols with the same number of measurement settings.

\subsection*{Icosahedron and dodecahedron protocols}
Here we consider verification protocols based on the two remaining platonic solids, namely, icosahedron and dodecahedron.
Both  platonic solids are center symmetric, so $g^\ast(\mu)$ can be determined by virtue of \lref{lem:CenterSym}.

To be concrete, the 12 vertices of the icosahedron are chosen to be
\begin{align}
\frac{(0,\pm1,\pm\tau)}{\sqrt{1+\tau^2}}^\rmT, \quad
\frac{(\pm\tau,0,\pm1)}{\sqrt{1+\tau^2}}^\rmT, \quad
\frac{(\pm1,\pm\tau,0)}{\sqrt{1+\tau^2}}^\rmT,
\end{align}
where $\tau=(1+\sqrt{5})/2$ is the golden ratio. According to \lref{lem:CenterSym}, to derive $g^\ast(\mu)$, it suffices to compare $\left|\bm{\eta}(S)\right|$ [cf. \eref{eq:etaS}]
for all subsets $S$ that contain six vertices on the same side of a plane passing through the origin.
Note that an optimal set $S$ contains one and only one vertex in each pair of antipodal vertices.
For any plane passing through the origin, there exists at least one face of the icosahedron on each side of the plane.
Therefore, we can choose three vertices on the same face as fixed elements of $S$ and then choose one
vertex from each of the remaining three pairs of antipodal vertices. There are $2^3=8$ different choices, but only three equivalent classes under
orthogonal transformations.
Moreover, in one of the  classes,
the vertices in each set are not on the same side of a plane passing through the origin. Therefore,
it suffices  to compute $\left|\bm{\eta}(S)\right|$ for the remaining two cases:
\begin{enumerate}
	\item[1.] $S$ contains six vertices, one of which is adjacent to the other five. In this case we have
	\begin{align}
	\left|\bm{\eta}(S)\right|
	&=2\sqrt{\frac{2\tau^2+2\tau+1}{\tau^2+1}}
	=1+\sqrt{5}
	\approx3.236.
	\end{align}
	
	\item[2.] The six vertices contained in $S$ are on one face (labeled by $A$) of the icosahedron and the three faces adjacent to $A$.
	In this case we have
	\begin{align}
	\left|\bm{\eta}(S)\right|
	&=2\sqrt{\frac{(\tau+1)^2+1}{\tau^2+1}}
	\approx2.947.
	\end{align}
\end{enumerate}
By comparison, $\left|\bm{\eta}(S)\right|$ attains the maximum in the first case, so we have
\begin{align}\label{eq:g0Icosahedron}
g^\ast(\mu)
&=\frac{2}{12}\max_{S}\,\left|\bm{\eta}(S) \right|
=\frac{1+\sqrt{5}}{6}
\approx0.539
\end{align}
by \eref{eq:g*(mu)CSS}.
Accordingly, a unit vector  is an intelligent direction at $C=0$ iff
it corresponds to one of the 12 vertices of the icosahedron.

As immediate corollaries of \eref{eq:g0Icosahedron}, we can derive
\begin{align}
\gamma_2^\ast(\mu)&=\frac{7+\sqrt{5}}{12} \approx 0.770 , \\
\hat{\gamma}_2(C,\mu)
&=\frac{1}{12}[7+\sqrt{5}+(5-\sqrt{5})C\lsp]  .
\end{align}
Incidentally,  these results are very close to the  counterparts of the equator+$Z_{II}$-protocol.

For the icosahedron protocol, numerical calculation indicates that  each intelligent direction at $C=0$  is still an intelligent direction for  $C>0$.
In other words, \eref{eq:gCmu} in the main text is maximized when $\bm{v}$ corresponds to one vertex of the icosahedron. This observation leads to the following conjecture. 
\begin{conjecture}\label{con:Icosahedron}
	The icosahedron protocol $\mu$ satisfies the following equations,
	\begin{align}
	g(C,\mu)
	&=\frac{1+\sqrt{5(1+4C^2)}}{6},\\
	\gamma_2(C,\mu)
	&=\frac{7+\sqrt{5(1+4C^2)}}{12}.
	\end{align}
	When $0\leq C<1$, any intelligent direction corresponds to a vertex of the icosahedron, and vice versa. 
\end{conjecture}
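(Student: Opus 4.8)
The plan is to treat the statement as a maximization on the sphere and to attack its hard part through the icosahedral symmetry and the spherical-design property of the vertices. By \thref{thm:MGP} together with the discrete form \eref{eq:gCrp}, the whole claim reduces to analyzing
\[
f(\bm v):=\sum_{j=1}^{12} h\bigl((\bm r_j\cdot\bm v)^2\bigr),\qquad h(t):=\sqrt{C^2+(1-C^2)t},
\]
so that $g(C,\mu)=\tfrac{1}{12}\max_{\bm v}f(\bm v)$, where $\{\bm r_j\}_{j=1}^{12}$ are the icosahedron vertices. The first, routine, step is to evaluate $f$ at a vertex direction $\bm v=\bm r_1$. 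The inner-product spectrum of the icosahedron is explicit: $\bm r_1$ and its antipode give $(\bm r_j\cdot\bm v)^2=1$, while the five neighbours and their antipodes give $(\bm r_j\cdot\bm v)^2=1/5$. Hence $f(\bm r_1)=2h(1)+10h(1/5)=2+2\sqrt{5(1+4C^2)}$, and dividing by $12$ reproduces the conjectured $g(C,\mu)$ and, through $\gamma_2=[1+g]/2$, the conjectured $\gamma_2(C,\mu)$. The endpoint checks $C=0$ and $C=1$ agree with \lref{lem:CenterSym} (cf. \eref{eq:g0Icosahedron}) and with $g(1,\mu)=1$.

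The substance of the proof is to show that this vertex value is the global maximum of $f$ for every $0\le C<1$, and that it is attained only at vertex directions. Here I would exploit that the twelve vertices form a spherical $5$-design. Consequently $\sum_j(\bm r_j\cdot\bm v)^2=4$ and $\sum_j(\bm r_j\cdot\bm v)^4=12/5$ are both independent of $\bm v$; more structurally, expanding $h(\cos^2\theta)$ in Legendre polynomials of $\cos\theta=\bm r_j\cdot\bm v$ and using the design identities $\sum_j P_k(\bm r_j\cdot\bm v)=0$ for $1\le k\le5$ collapses the objective to
\[
f(\bm v)=\text{const}+a_6\,H_6(\bm v)+\sum_{k\ge 10,\ \text{even}} a_k\,H_k(\bm v),
\]
where $H_k(\bm v):=\sum_j P_k(\bm r_j\cdot\bm v)$ and the $a_k$ are $C$-dependent Legendre coefficients. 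The leading nonconstant term $H_6$ is the lowest icosahedral harmonic, whose stationary points are exactly the five-, three- and two-fold axes and whose global extrema sit at the vertex and face directions; its structure therefore already singles out the high-symmetry directions.

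From here two complementary routes present themselves. The symmetry route invokes the principle of symmetric criticality: since $f$ is invariant under the full icosahedral group, the vertex ($5$-fold), face ($3$-fold) and edge ($2$-fold) axes are automatically critical points of $f$, so one would evaluate $f$ on all three families and verify that the vertex value exceeds the face and edge values for every $C\in[0,1)$ (at $C=0$ this is precisely the $\bm{\eta}(S)$ comparison already carried out before \eref{eq:g0Icosahedron}). The harmonic route instead tries to certify global optimality analytically by showing that the higher harmonics $H_k$ with $k\ge 10$ never overturn the ordering imposed by $a_6H_6$, bounding $\sum_{k\ge10}|a_k|\,\|H_k\|_\infty$ against the gap between the vertex value and the next-best stationary value.

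The main obstacle is exactly this global step, which is why the statement is posed as a conjecture. Neither the two fixed design moments nor the sign of $a_6$ alone suffices, because the moment relaxation is not tight: a dominating quadratic $q\ge h$ touching $h$ at both $t=1/5$ and $t=1$ cannot exist for concave $h$ (tangency at the interior point $1/5$ forces $q$ below $h$ before $t=1$), so the linear-programming bound built from the design moments overshoots the vertex value. A rigorous argument must therefore either (i) classify all critical points of $f$ on a fundamental spherical triangle and rule out any low-symmetry critical point with value above the vertex, or (ii) control the tail $\sum_{k\ge10}a_kH_k$ uniformly in $C$, which is delicate near $C=0$, where $h(t)=\sqrt t$ has slowly decaying Legendre coefficients, and near $C=1$, where $f$ is nearly constant. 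The uniqueness clause—that the intelligent directions are exactly the vertices—would follow from the same critical-point classification, by checking that the vertex critical points are strict (nondegenerate) local maxima and that no other stationary set attains the maximal value.
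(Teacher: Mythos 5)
This statement is posed in the paper as a \emph{conjecture}, not a theorem: the paper's own support for it consists of (i) a rigorous computation only at $C=0$, where \lref{lem:CenterSym} reduces $g^\ast(\mu)$ to comparing $|\bm{\eta}(S)|$ over the two admissible half-sets of vertices and yields $(1+\sqrt{5})/6$, and (ii) numerical evidence that the vertex directions remain intelligent for $C>0$. Your proposal is consistent with this. Your evaluation at a vertex direction is correct: the squared inner products of the twelve vertices with a fixed vertex are $1$ (twice) and $1/5$ (ten times), giving $f(\bm r_1)=2+2\sqrt{5(1+4C^2)}$ and hence the conjectured $g(C,\mu)$ and $\gamma_2(C,\mu)$ \emph{provided} the vertex is globally optimal. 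You correctly identify that the entire difficulty is this global optimality over all unit vectors for $0<C<1$, and you are right that the two design moments alone cannot certify it (your observation that a dominating quadratic tangent to the concave $h$ at both $t=1/5$ and $t=1$ cannot exist is a sound reason the naive LP bound is not tight).

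So there is a gap, but it is exactly the gap the paper itself leaves open by labelling the statement a conjecture: neither your symmetric-criticality route (evaluate $f$ on the $5$-, $3$- and $2$-fold axes and rule out low-symmetry critical points) nor your harmonic route (control the $k\geq 10$ icosahedral harmonics uniformly in $C$) is carried to completion, and the paper offers nothing beyond numerics here. Your outline goes somewhat further than the paper in diagnosing \emph{why} the global step is hard and in proposing concrete attack strategies, which is useful; but as a proof of the statement it is incomplete, and you say so yourself. One small point worth adding if you pursue the critical-point route: the comparison at $C=0$ between the vertex value $(1+\sqrt5)/6\approx 0.539$ and the face value $\approx 2.947/6\approx 0.491$ is already done in the paper before \eref{eq:g0Icosahedron}, so the remaining work is to extend that comparison to all $C\in(0,1)$ and to exclude non-axial critical points.
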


Next,  we consider the dodecahedron protocol.
To be concrete, the 10 pairs of  antipodal vertices of the  dodecahedron are chosen to be 
\begin{equation}
\begin{aligned}
& \frac{1}{\sqrt{3}}(0,\pm\tau,\pm\frac{1}{\tau})^\rmT, \quad
\frac{1}{\sqrt{3}}(\pm\frac{1}{\tau},0,\pm\tau)^\rmT, \\
&\frac{1}{\sqrt{3}}(\pm\tau,\pm\frac{1}{\tau},0)^\rmT, \quad
\frac{1}{\sqrt{3}}(\pm1,\pm1,\pm1)^\rmT.
\end{aligned}
\end{equation}
Thanks to \lref{lem:CenterSym} again,  to derive $g^\ast(\mu)$ for this protocol, it suffices to compare $\left|\bm{\eta}(S)\right|$
for all subsets $S$ that contain 10 vertices located on the same side of a plane passing through the origin.
In addition, any optimal  set $S$ contains one and only one vertex in each of the 10 pairs of antipodal vertices.
Note that for any plane that passes through the origin, there is at least one face of the dodecahedron on each side of the plane.
Therefore, we can choose five vertices on the same face as fixed elements of $S$, and then choose one
vertex from each of the remaining five pairs. There are $2^5=32$ different choices, but only  seven equivalent classes.
Moreover, in four of the seven classes, the vertices in each set are not  on the same side of a plane passing through the origin.
So it suffices  to compare $\left|\bm{\eta}(S)\right|$ for the remaining three classes:
\begin{enumerate}
	\item[1.] The 10 vertices contained in $S$ are on three faces of the dodecahedron which are mutually adjacent. In this case we have
	\begin{align}
	\left|\bm{\eta}(S)\right|
	&=2\sqrt{\frac{(\tau\!+\!2\!+\!1/\tau)^2+\tau^2}{3}}
	=3+\sqrt{5}
	\approx5.236.
	\end{align}
	
	\item[2.] $S$ contains 10 vertices, five of which are on one face (labeled by $A$) of the dodecahedron, and each of the other five vertices is adjacent to a vertex of $A$. 
	In this case we have
	\begin{align}
	\left|\bm{\eta}(S)\right|
	&=2\sqrt{\frac{(\tau\!+\!1/\tau)^2+(\tau\!+\!2)^2}{3}}
	\approx4.911.
	\end{align}
	
	\item[2.] $S$ contains 10 vertices. Five of them are on one face (labeled by $A$) of the dodecahedron; four of them are adjacent to four vertices of $A$, respectively.
	The last element of $S$ is the antipodal point of a vertex that is adjacent to $A$.
	In this case we have
	\begin{align}
	\left|\bm{\eta}(S)\right|
	&=2\sqrt{\frac{(\tau+2)^2+2\tau^2}{3}}
	\approx4.943.
	\end{align}
\end{enumerate}
By comparison, $\left|\bm{\eta}(S)\right|$ attains the maximum in the first case, so
we have 
\begin{align}\label{eq:g0Dodecahedron}
g^\ast(\mu)
&=\frac{2}{20}\max_{S}\,\left|\bm{\eta}(S) \right|
=\frac{3+\sqrt{5}}{10}
\approx0.524
\end{align}
by  \eref{eq:g*(mu)CSS}.
Accordingly, a unit vector  is an intelligent direction at $C=0$ iff
it corresponds to one of the 20 vertices of the dodecahedron.

As immediate corollaries of \eref{eq:g0Dodecahedron}, we can deduce
\begin{align}
\gamma_2^\ast(\mu)&=\frac{13+\sqrt{5}}{20}  \approx 0.762 , \\
\hat{\gamma}_2(C,\mu)
&=\frac{1}{20}[13+\sqrt{5}+(7-\sqrt{5})C\lsp]  .
\end{align}

Numerical calculation indicates that each intelligent direction at $C=0$  is still an intelligent direction for  $C>0$. In other words, \eref{eq:gCmu} in the main text is maximized when $\bm{v}$ corresponds to one vertex of the dodecahedron. This observation leads to the following conjecture in analogy to \cref{con:Icosahedron}. 
\begin{conjecture}
	The dodecahedron protocol $\mu$ satisfies the following equations,
	\begin{align}
	g(C,\mu)
	&=\frac{1+\sqrt{5+4C^2}+2\sqrt{1+8C^2}}{10},\\
	\gamma_2(C,\mu)
	&=\frac{11+\sqrt{5+4C^2}+2\sqrt{1+8C^2}}{20}.
	\end{align}
	When $0\leq C<1$, any intelligent direction corresponds to a vertex of the dodecahedron, and vice versa.
\end{conjecture}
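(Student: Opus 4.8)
The plan is to reduce the conjecture to a single statement about the smooth function
\begin{equation}
f(\bm{v})=\frac{1}{20}\sum_{j=1}^{20}\sqrt{C^2+(1-C^2)(\bm{r}_j\cdot\bm{v})^2},\qquad |\bm{v}|=1,
\end{equation}
namely that for every fixed $C\in[0,1)$ its maximum over the unit sphere is attained exactly at the twenty vertex directions $\bm{r}_j$. Once this is shown, the displayed formulas follow by a direct evaluation of $f$ at $\bm{v}=\bm{r}_1$ via \eref{eq:gCrp}: using $\tau^2=\tau+1$, $\tau+1/\tau=\sqrt5$, and $\tau-1/\tau=1$, the squared overlaps $(\bm{r}_j\cdot\bm{v})^2$ form the multiset $\{1\,(\times2),\,5/9\,(\times6),\,1/9\,(\times12)\}$, whence $20f(\bm{r}_1)=2+2\sqrt{5+4C^2}+4\sqrt{1+8C^2}$ and $\gamma_2=(1+g)/2$ yield the claim. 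First I would record that the case $C=0$ is already settled: \lref{lem:CenterSym} together with \eref{eq:g0Dodecahedron} shows the maximizers of $f$ at $C=0$ are precisely the vertices, so only the extension to $C>0$ is at issue, and the ``and vice versa'' part reduces to strict inequality at every non-vertex direction.

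The first substantive step is to exploit the full icosahedral symmetry group $I_h$, under which $\{\bm{r}_j\}$ and hence $f$ are invariant. This restricts the maximization to one M\"obius fundamental triangle on the sphere, whose corners are the $5$-fold (face-center), $3$-fold (vertex), and $2$-fold (edge-midpoint) axes. Each such axis is automatically a critical point of any $I_h$-invariant $f$, so the three corners are the forced symmetric candidates; I would evaluate $f$ at all three from \eref{eq:gCrp} and confirm that the $3$-fold value dominates for every $C\in[0,1)$, extending the $C=0$ ordering already visible from the three values of $|\bm{\eta}(S)|$ computed before \eref{eq:g0Dodecahedron}. Note that the generic upper bound of \lref{lem:gCUB} is useless here: since the dodecahedron gives $\Xi(\mu)=\mathbb{I}/3$, that bound collapses to the strictly larger $XYZ$ value $\sqrt{(1+2C^2)/3}$, and its saturation condition fails because $|\bm{r}_j\cdot\bm{v}|$ is not constant over the vertices. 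A tight argument must therefore use the specific geometry, not merely $\|\Xi(\mu)\|$.

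For that geometry I would bring in the fact that the dodecahedron vertices form a spherical $5$-design, so $\tfrac{1}{20}\sum_j(\bm{r}_j\cdot\bm{v})^{2k}$ is independent of $\bm{v}$ for $k\le2$, equalling $1/3$ and $1/5$; writing $f$ through these power sums and using that $s\mapsto\sqrt{C^2+(1-C^2)s}$ is concave with derivatives of alternating sign, the entire directional dependence of $f$ begins only at $k=3$, carried by the degree-$6$ icosahedral harmonic. The leading such term has coefficient proportional to the positive third derivative of the square root, so to this order maximizing $f$ is equivalent to maximizing $\sum_j(\bm{r}_j\cdot\bm{v})^6$, which at the vertex equals $3.0453\ldots$, exceeds the isotropic value $20/7$, and (one checks) is maximal over the sphere; this reproduces the conjectured maximizer, mirroring the icosahedral case of \cref{con:Icosahedron}. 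The main obstacle is upgrading this leading-order and symmetric-point evidence to a global statement valid for all $C\in[0,1)$: one must control the higher icosahedral harmonics (degrees $10,12,\dots$) against the alternating-sign remainder of the expansion, and rule out non-symmetric critical points, which are not forced by $I_h$ and are the real difficulty. I would attack this either by a Hessian computation at the vertex showing negative-definiteness transverse to the symmetry directions for all $C$, combined with a continuation argument in $C$ (the vertex is the strict global maximizer at $C=0$ and a critical point throughout, while $f\to1$ uniformly as $C\to1$) to preclude any bifurcation moving the global maximizer off the vertex; or, more ambitiously, by constructing a closed tight upper bound on $f$ that exploits the design moments together with the sign of the degree-$6$ harmonic and meets $f(\bm{r}_1)$ exactly. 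Establishing the absence of stray critical points uniformly in $C$ is where I expect the proof to be hardest.
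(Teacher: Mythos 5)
The first thing to say is that the paper itself does not prove this statement: it is explicitly labelled a \emph{conjecture}, supported only by the exact computation at $C=0$ (via \lref{lem:CenterSym} and the enumeration leading to \eref{eq:g0Dodecahedron}) plus the remark that ``numerical calculation indicates'' the vertex directions remain intelligent for $C>0$. So there is no proof in the paper to compare against. Your evaluation of $g$ at a vertex direction is correct and worth having on record: with $\bm{v}=(1,1,1)^\rmT/\sqrt{3}$ the squared overlaps are indeed $1$ (twice), $5/9$ (six times), $1/9$ (twelve times), which reproduces the conjectured formulas via \eref{eq:gCrp}, and your sixth-moment values at the three symmetry axes (vertex $\approx 3.045$, edge $\approx 2.963$, face $\approx 2.52$, isotropic $20/7\approx 2.857$) check out. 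Your observation that \lref{lem:gCUB} is useless here because $\Xi(\mu)=\mathbb{I}/3$ and its saturation condition fails is also correct and is presumably why the paper could not settle the question by its own general machinery.

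That said, the proposal is a plan rather than a proof, and the gap you yourself flag is the entire content of the conjecture: showing that the vertex directions are \emph{global} maximizers of $f(\bm{v})$ for every $C\in(0,1)$. The design argument only pins down the degree-$6$ leading term of the anisotropy, and the two closing strategies you sketch are not carried out. Moreover, the continuation argument as stated has a logical hole even if the Hessian computation were done: negative-definiteness of the transverse Hessian at the vertex for all $C$, plus the fact that the vertex is the strict global maximizer at $C=0$, does not preclude a \emph{distant} local maximum (say near the edge-midpoint axis, which is the closest competitor in your sixth-moment comparison) from overtaking the vertex value at some intermediate $C$ without any bifurcation occurring at the vertex; ruling that out requires a genuinely global bound, not a local one. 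So the proposal is consistent with, but does not go beyond, the paper's own status for this statement: the formulas are verified to be the correct candidate values, and the claim that they are the actual maxima remains open.
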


\section*{\label{app:GHZverify}SUPPLEMENTARY Note 10: Verification of GHZ states}
According to the main text, each verification protocol of the Bell state is determined by a probability distribution on the Bloch sphere, which specifies the probability of performing each projective measurement. In addition, each projective measurement is specified by a unit vector on the Bloch sphere. To verify the
$n$-qubit GHZ state
\begin{equation}
|G^n\>=\frac{1}{\sqrt{2}}(|0\>^{\otimes n}+| 1\>^{\otimes n}),\quad n\geq 3,
\end{equation}
we can simulate projective tests for the Bell state by local projective measurements on individual qubits. However, not all projective tests for the Bell state can be simulated in this way. To clarify this limitation, we need to introduce an additional concept.

\subsection*{\label{app:GHZcompatibleMeas}Compatible measurements}
Local projective measurements on the $n$ qubits of the  GHZ state $|G^n\>$ can be specified by  a set of $n$ unit vectors $\{\bm{r}_1, \bm{r}_2, \ldots, \bm{r}_n\}$, where $\bm{r}_j$ determines the local projective measurement on qubit $j$.  The set $\{\bm{r}_1, \bm{r}_2, \ldots, \bm{r}_n\}$ and the corresponding measurements are \emph{compatible} if each qubit $j$ has only two possible reduced states conditioned on the outcomes of all projective measurements on the other qubits when the target GHZ state is prepared; in addition,  the two possible reduced states happen to be the eigenstates of $\bm{r}_j\cdot\bm{\sigma}$, that is, $(\openone\pm\bm{r}_j\cdot\bm{\sigma})/2$.
Here the compatibility requirement  guarantees that the two reduced states of qubit $j$ can be distinguished with certainty by performing the  projective measurement associated with the Bloch vector $\bm{r}_j$. In addition, given the measurement outcomes of other qubits, measurement of the remaining qubit will yield one outcome with certainty, which corresponds to passing the test.
In this way, the local projective measurements specified by the set $\{\bm{r}_1, \bm{r}_2, \ldots, \bm{r}_n\}$ can be used to construct a nontrivial test for the  GHZ state $|G^n\>$ in which no projective measurement on any qubit is redundant.

Conversely, if the set $\{\bm{r}_1, \bm{r}_2, \ldots, \bm{r}_n\}$ is not compatible, then the measurement outcome of some party is unpredictable even if the measurement outcomes of all other parties are known. Consequently,  any outcome reported by this party would pass the test, and such a test is not useful in the presence of dishonest parties. To construct tests for verifying the GHZ state,  therefore, we can focus on compatible measurements, whose properties are clarified in the following lemma.
\begin{lemma}\label{lem:CompatibleMeas}
	The set $\{\bm{r}_1, \bm{r}_2, \ldots, \bm{r}_n\}$ of local projective measurements is compatible iff either one of the following two conditions holds:
	\begin{enumerate}
		\item $\bm{r}_j=(0,0,\pm 1)^\rmT$ for each $j$.
		
		\item $\bm{r}_j=(\cos \phi_j,\sin\phi_j,0)^\rmT$ with $\sum_j \phi_j=0\mod \pi$.
	\end{enumerate}
\end{lemma}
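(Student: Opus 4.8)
The plan is to reduce compatibility to a single statement about a complex amplitude ratio and then analyze that ratio separately in modulus and phase. First I would parametrize each $\bm{r}_k$ by its polar and azimuthal angles $(\theta_k,\phi_k)$ and write down the unnormalized conditional state of qubit $j$ after projecting $|G^n\rangle$ onto the outcomes $s_k$ of all qubits $k\neq j$. Denoting the overlaps of the selected eigenstate of $\bm{r}_k\cdot\bm{\sigma}$ with the computational basis by $a_k=\langle s_k|0\rangle$ and $b_k=\langle s_k|1\rangle$, the structure of $|G^n\rangle=(|0\rangle^{\otimes n}+|1\rangle^{\otimes n})/\sqrt{2}$ leaves qubit $j$ in the state proportional to $\bigl(\prod_{k\neq j}a_k\bigr)|0\rangle+\bigl(\prod_{k\neq j}b_k\bigr)|1\rangle$. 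Hence the conditional state is governed by the ratio $\rho_j:=\prod_{k\neq j}r_k$ with $r_k=b_k/a_k$, and each factor takes exactly two values as $s_k$ varies: moduli $\tan(\theta_k/2)$ and $\cot(\theta_k/2)$, arguments $-\phi_k$ and $-\phi_k+\pi$. Compatibility of qubit $j$ is then precisely the requirement that, over all admissible outcome strings, $\rho_j$ realize exactly the two values $e^{i\phi_j}\tan(\theta_j/2)$ and $-e^{i\phi_j}\cot(\theta_j/2)$ that correspond to the two eigenstates of $\bm{r}_j\cdot\bm{\sigma}$.

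The easy direction ($\Leftarrow$) is a direct check of the two listed configurations. For condition~1, all-$Z$ measurements collapse $|G^n\rangle$ onto $|0\rangle^{\otimes n}$ or $|1\rangle^{\otimes n}$, so qubit $j$ is left in $|0\rangle$ or $|1\rangle$, which are exactly the eigenstates of $\bm{r}_j=\pm\hat{z}$. For condition~2, with every $\theta_k=\pi/2$ one has $r_k=\pm e^{-i\phi_k}$ and $\rho_j=\pm\,e^{-i\sum_{k\neq j}\phi_k}$, where the overall sign ranges over both values because $n\geq3$; matching $\{\pm e^{-i\sum_{k\neq j}\phi_k}\}$ to the target $\{\pm e^{i\phi_j}\}$ gives $e^{-i\sum_k\phi_k}=\pm1$, i.e. $\sum_k\phi_k=0\mod\pi$.

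For the converse ($\Rightarrow$) I would split on whether any qubit is measured along $\pm\hat{z}$ (equivalently $\theta_k\in\{0,\pi\}$). If some qubit $k_1$ is, then its two $Z$-outcomes collapse $|G^n\rangle$ to $|0\rangle^{\otimes n}$ or $|1\rangle^{\otimes n}$, so \emph{every} other qubit $j$ is forced into the conditional state $|0\rangle$ or $|1\rangle$; since both occur, compatibility forces the two eigenstates of $\bm{r}_j$ to be $\{|0\rangle,|1\rangle\}$, i.e. $\bm{r}_j=\pm\hat{z}$. Applying this to each $j$ (and back to $k_1$ using any other already-fixed $Z$ qubit) yields condition~1. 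In the remaining case every $\theta_k\in(0,\pi)$, so all $a_k,b_k$ are nonzero, every outcome string of the other qubits has nonzero probability, and all factors $r_k$ are finite. Here the modulus analysis comes first: writing $\log|\rho_j|=\sum_{k\neq j}\epsilon_k\log t_k$ with $t_k=\tan(\theta_k/2)$ and $\epsilon_k=\pm1$ independent, this sign-symmetric set has exactly two elements only if at most one index $k_0$ has $t_{k_0}\neq1$. When $\theta_j=\pi/2$ the target moduli are both $1$, forcing all $t_k=1$; when $\theta_j\neq\pi/2$ the moduli are distinct, forcing exactly one non-equatorial neighbor $k_0$. Since this must hold for every $j$, the only surviving possibility is that all qubits are equatorial, and then the computation of the previous paragraph pins $\sum_k\phi_k=0\mod\pi$, giving condition~2.

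The main obstacle is closing off the borderline configuration in the last case, namely $\theta_j\neq\pi/2$ with a single non-equatorial neighbor $k_0$ and all other $k\neq j$ equatorial, which survives the modulus count. The key step is to restore the phase: the modulus of $\rho_j$ is fixed by the outcome on $k_0$, whereas the phase of $\rho_j$ is $-\sum_{k\neq j}\phi_k+\pi N$ with $N$ the number of ``minus'' outcomes, whose parity can be flipped independently by toggling any one of the equatorial neighbors (at least one exists because $n\geq3$). Consequently $\rho_j$ takes the four distinct values $\pm\tan(\theta_{k_0}/2)\,e^{-i\Phi}$ and $\pm\cot(\theta_{k_0}/2)\,e^{-i\Phi}$ with $\Phi=\sum_{k\neq j}\phi_k$, contradicting the requirement of exactly two. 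Getting this independence of modulus and phase parity airtight—and correctly using $n\geq3$—is the delicate part; everything else is bookkeeping on the two-valued factors $r_k$.
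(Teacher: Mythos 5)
Your proof is correct, and the converse direction follows a genuinely different route from the paper's. The paper splits on whether all Bloch vectors lie on the equator: if they do, it computes the two conditional states directly and reads off $\sum_j\phi_j=0\bmod\pi$; if some $\bm{r}_1$ has $\theta\neq\pi/2$ (polar or tilted), it conditions only on qubit~1's outcome and traces out the remaining qubits, so that qubit $j$ is left in the \emph{mixed} diagonal state $\cos^2\tfrac{\theta}{2}|0\>\<0|+\sin^2\tfrac{\theta}{2}|1\>\<1|$ with a nonzero $z$-component of its Bloch vector; since compatibility forces this to be a convex combination of $(\mathbb{I}\pm\bm{r}_j\cdot\bm{\sigma})/2$, whose Bloch vector lies on the line through $\pm\bm{r}_j$, it concludes $\bm{r}_j=\pm\hat{z}$ in one stroke for every tilted or polar $\bm{r}_1$. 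You instead split on polar versus non-polar, stay entirely with pure conditional states, and analyze the two-valued ratio product $\rho_j=\prod_{k\neq j}r_k$ by counting moduli and then phases. This is more elementary (no averaging over unobserved outcomes is needed) and makes the role of $n\geq 3$ explicit, but it costs you the extra phase-parity step to kill the configuration with a single non-equatorial neighbor $k_0$ — a case the paper's convexity argument absorbs automatically. Two small remarks: that borderline case can also be closed without the phase argument, by applying your modulus count at an equatorial qubit $m$ (which must exist if not all qubits are non-equatorial, and whose target moduli are both $1$, forcing every $k\neq m$ to be equatorial); and in the forward direction for condition~2 the sign of $\rho_j$ ranges over both values already because $j$ has at least one neighbor, so $n\geq3$ is not actually needed there. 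Neither affects correctness.
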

Here the condition $\sum_j \phi_j=0\mod \pi$ means $\sum_j \phi_j$ is an integer multiple of $\pi$. Note that $-\bm{r}_j$ and $\bm{r}_j$ correspond to the same projective measurement, except for the labeling of the two outcomes. By replacing some $\bm{r}_j$ with $-\bm{r}_j$ if necessary, the two conditions in \lref{lem:CompatibleMeas} can be simplified as
\begin{enumerate}
	\item $\bm{r}_j=(0,0, 1)^\rmT$ for each $j$.
	
	\item $\bm{r}_j=(\cos \phi_j,\sin\phi_j,0)^\rmT$ with $\sum_j \phi_j=0\mod 2\pi$.
\end{enumerate}
The two types of compatible measurements determined in \lref{lem:CompatibleMeas} can be used to construct two types of tests for verifying the GHZ state.

\begin{proof}[Proof of \lref{lem:CompatibleMeas}]
	If either one of the  two conditions in \lref{lem:CompatibleMeas} holds, then it is easy to verify that the set $\{\bm{r}_1, \bm{r}_2, \ldots, \bm{r}_n\}$ of local projective measurements is compatible. 
	
	Conversely, suppose the set $\{\bm{r}_1, \bm{r}_2, \ldots, \bm{r}_n\}$  is compatible. If each Bloch vector $\bm{r}_j$ lies on the equator and thus has the form $(\cos \phi_j,\sin\phi_j,0)^\rmT$ for some azimuthal angle $\phi_j$, then the  two eigenstates of the operator $\bm{r}_j\cdot\bm{\sigma}$ read $(|0\rangle \pm \rme^{\rmi\phi_j }|1\rangle)/\sqrt{2}$. After the projective measurements of all parties except for party $k$, the two possible reduced states of party $k$ read
	\begin{equation}
	\frac{1}{\sqrt{2}}(|0\rangle \pm \rme^{\rmi\phi_k-\rmi \varphi}|1\rangle),\quad \varphi=\sum_{j=1}^n\phi_j.
	\end{equation}
	Note that the two states are the eigenstates of $\bm{r}_k\cdot\bm{\sigma}$ iff $\varphi=0\mod \pi$. So condition 2 in \lref{lem:CompatibleMeas} holds.

	Next, suppose at least one $\bm{r}_j$, say $\bm{r}_1$, does not lie on the equator. Then  $\bm{r}_1=(\sin\theta\cos\phi,\sin\theta\sin\phi,\cos\theta)^\rmT$ with $0\leq \theta\leq \pi$, $0\leq \phi<2\pi$, and $\theta\neq \pi/2$.
	The two eigenstates of the operator $\bm{r}_1\cdot\bm{\sigma}$ have the form
	\begin{align}
	|\psi_+\rangle &=\cos\frac{\theta}{2}|0\rangle +\sin\frac{\theta}{2}\rme^{\rmi\phi}|1\rangle,\\
	|\psi_-\rangle &=-\sin\frac{\theta}{2}|0\rangle +\cos\frac{\theta}{2}\rme^{\rmi\phi}|1\rangle.
	\end{align}
	After the projective measurement on the first qubit, the two possible reduced states of the remaining $n-1$ qubits are given by
	\begin{align}
	|\Psi_+\rangle &=\cos\frac{\theta}{2}|0\rangle^{\otimes (n-1)} +\sin\frac{\theta}{2}\rme^{-\rmi\phi}|1\rangle^{\otimes (n-1)},\\
	|\Psi_-\rangle &=-\sin\frac{\theta}{2}|0\rangle^{\otimes (n-1)} +\cos\frac{\theta}{2}\rme^{-\rmi\phi}|1\rangle^{\otimes (n-1)}.
	\end{align}
	The reduced state of $|\Psi_+\rangle$ for each qubit $j=2,3,\ldots,n$ reads
	\begin{equation}
	\rho_+=\cos^2\frac{\theta}{2}|0\rangle\langle 0|+\sin^2\frac{\theta}{2}|1\rangle\langle1|.
	\end{equation}
	By contrast, the reduced state of $|\Psi_-\rangle$ for each qubit $j=2,3,\ldots,n$ reads
	\begin{equation}
	\rho_-=\sin^2\frac{\theta}{2}|0\rangle\langle 0|+\cos^2\frac{\theta}{2}|1\rangle\langle1|.
	\end{equation}
	
	By assumption $\rho_+$ is a convex combination of the two eigenstates of $\bm{r}_j\cdot\bm{\sigma}$ for each $j=2,3,\ldots, n$, and so is $\rho_-$. It follows that $\bm{r}_j=(0,0,\pm 1)^\rmT$ for $j=2,3,\ldots,n$. Accordingly, the two possible reduced states of qubit 1 conditioned on the measurement outcomes of the other $n-1$ qubits are $|0\rangle\langle0|$ and $|1\rangle\langle 1|$, which implies that $\bm{r}_1=(0,0,\pm 1)^\rmT$. Therefore, all $\bm{r}_j$ are parallel to the unit vector $(0,0,1)^\rmT$, and  condition 1 in \lref{lem:CompatibleMeas} holds.
	This observation completes the proof of \lref{lem:CompatibleMeas}.
\end{proof}

\subsection*{\label{app:GHZprotocols} Tests and protocols  for verifying  GHZ states}
Recall that $\scrH$ is the set of honest parties, and  $V_\scrH$ is the two-dimensional subspace spanned by $|0\rangle_\scrH$ and $|1\rangle_\scrH$, where
\begin{equation}
|0\rangle_\scrH=\bigotimes_{j\in \scrH}|0\rangle_j,  \quad |1\rangle_\scrH=\bigotimes_{j\in \scrH}|1\rangle_j. 
\end{equation}
Let $P_\scrH$ be the projector onto $V_\scrH$ and let $\mathbb{I}_\scrH$ be the identity operator on the Hilbert space associated with $\scrH$. To simplify the terminology, all parties other than the honest parties will be referred to as the adversary. Without loss of generality, we can assume that the actual state $\rho$ to be verified is prepared by the adversary. Let $\rho_\scrH:=\tr_\scrD(\rho)$ be the reduced state of $\rho$  for the honest parties.

The two types of compatible measurements determined in \lref{lem:CompatibleMeas} can be used to construct two types of tests for verifying the GHZ state. 
In the first type, all parties perform $Z$ measurements, and the test is passed if they obtain the same outcome. In the perspective of the  honest parties, including the verifier, these measurements realize the  three-outcome projective measurement 
\begin{equation}
\{|0\rangle_\scrH\langle 0|,\;|1\rangle_\scrH\langle 1|, \;\mathbb{I}_\scrH-P_\scrH \}. 
\end{equation}
The state $\rho$ prepared by the adversary  is rejected immediately if the third outcome occurs. To maximize the guessing probability, it is advantageous for the adversary to prepare the state $\rho$ such that $\rho_\scrH$  is supported in $V_\scrH$.  Then the verifier  effectively realizes the $Z$ measurement on $V_\scrH$.

In the second type of tests, party $j$  performs the $X(\phi_j)$ measurement with $\sum_j \phi_j=0\mod 2\pi$, where
\begin{equation}
X(\phi_j):=\begin{pmatrix}
0& \rme^{-\rmi \phi_j} \\
\rme^{\rmi\phi_j} &0
\end{pmatrix}
\end{equation}
corresponds to the Bloch vector $(\cos\phi_j,\sin\phi_j,0)^\rmT$.
The test is passed if the number of outcome $-1$ is even. In the perspective of the  honest parties, these measurements realize the  projective measurement with two outcomes 
\begin{equation}
P_\pm(\{\phi_j\}_j)=\frac{\mathbb{I}_\scrH\pm\bigotimes_{j\in \scrH} X(\phi_j)}{2}.
\end{equation}
Let  $\phi_\scrH=\sum_{j\in \scrH}\phi_j \mod 2\pi$. In the subspace $V_\scrH$, this measurement reduces to a projective measurement with two outcomes
\begin{equation}
P_\pm^\scrH=\frac{P_\scrH\pm X_\scrH(\phi_\scrH)}{2},
\end{equation}
where 
\begin{equation}
X_\scrH(\phi_\scrH)=\rme^{-\rmi\phi_\scrH}|0\rangle_\scrH\langle 1|+\rme^{\rmi\phi_\scrH}|1\rangle_\scrH\langle 0|. 
\end{equation}
So the verifier can   effectively realize the $X(\phi_\scrH)$ measurement on  $V_\scrH$. If $\rho_\scrH$ is supported in $V_\scrH$, then the guessing probability is the same as in the verification of the Bell state. However, it is not clear whether the adversary can increase the guessing probability if $\rho_\scrH$ is not supported in $V_\scrH$. To eliminate this problem, we can introduce some randomness in each $\phi_j$.

Suppose $\phi_1,\phi_2,\ldots, \phi_{n}$ are chosen independently and uniformly at  random from  the interval $[0,2\pi)$;  then $\phi_\scrH$ is uniformly distributed in $[0,2\pi)$. Given $\phi\in[0,2\pi)$,  the average of         $\bigotimes_{j\in \scrH}  X(\phi_j)$ under the condition $\phi_\scrH =\phi$ reads
\begin{align}\label{eq:Xphi}
\biggl\<\bigotimes_{j\in \scrH} X(\phi_j)\biggr\>_\phi=X_\scrH(\phi),
\end{align}
which implies that 
\begin{align}
Q_\pm:=&\langle P_\pm(\{\phi_j\}_j)\rangle_\phi =\frac{\mathbb{I}_\scrH\pm X_\scrH(\phi_\scrH)}{2}\nonumber\\
=&\frac{P_\scrH\pm X_\scrH(\phi_\scrH)}{2}+\frac{\mathbb{I}_\scrH-P_\scrH}{2}.\label{eq:Pphi}
\end{align}
In this way the verifier can   effectively realize the $X(\phi)$ measurement on  $V_\scrH$, where $\phi$ is completely random. The resulting verification protocol corresponds to the equator protocol in the verification of the Bell state. Thanks to the following equality
\begin{equation}
(\mathbb{I}-P_\scrH)Q_+(\mathbb{I}-P_\scrH)=(\mathbb{I}-P_\scrH)Q_-(\mathbb{I}-P_\scrH),
\end{equation}
the adversary cannot gain any advantage if $\rho_\scrH$ is not supported in $V_\scrH$, and the maximum guessing probability is the same as in the verification of the Bell state. Equator$+Z$ protocol can be constructed by adding the $Z$ measurement. 

Let $M\geq 3$ be an integer. If   $\phi_1,\phi_2,\ldots, \phi_{n}$ are chosen independently and uniformly at  random from  the discrete set  $\{2k\pi/M\}_{k=0}^{M-1}$;  then $\phi_\scrH$ is uniformly distributed in the same set. In addition, \esref{eq:Xphi} and \eqref{eq:Pphi} still hold for any $\phi\in \{2k\pi/M\}_{k=0}^{M-1}$. 
In this way the verifier can  effectively realize the polygon protocol for verifying the GHZ state. Again the adversary cannot gain any advantage if $\rho_\scrH$ is not supported in $V_\scrH$, and the maximum guessing probability is the same as in the verification of the Bell state.  Polygon$+Z$ protocol can be constructed by adding the $Z$ measurement.

In summary, each verification protocol of the $n$-qubit GHZ state corresponds to a probability distribution  on the Bloch sphere which is supported on the equator together with the north and south poles. Moreover,  the equator protocol, equator$+Z$ protocol, polygon protocols, and polygon$+Z$ protocols (including the $XYZ$ protocol) can be generalized to GHZ states 
such that the guessing probabilities and sample efficiencies are identical to  the counterparts for the verification  of the Bell state. Notably, GHZ states can be verified almost as efficiently as Bell states. The simplest verification protocol is the $XY$ protocol, which is equivalent to the polygon protocol with four vertices. The optimal protocol is an equator$+Z$ protocol, in which the optimal probability $p_Z$ for performing the $Z$ measurement depends on the target fidelity $F$ as in the verification  of the Bell state. 
 
\bigskip

\onecolumngrid
\section*{Supplementary References}
\vspace{-3ex}
\twocolumngrid


\begin{thebibliography}{99}

\bibitem{Horodecki2009}
Horodecki, R., Horodecki, P., Horodecki, M. \& Horodecki, K. Quantum entanglement. \href{https://journals.aps.org/rmp/abstract/10.1103/RevModPhys.81.865}{\emph{Rev. Mod. Phys.} {\bf 81}, 865–942 (2009)}.

 \bibitem{Nonlocality2014}
 Brunner, N., Cavalcanti, D., Pironio, S., Scarani, V. \& Wehner, S. Bell nonlocality. \href{https://journals.aps.org/rmp/abstract/10.1103/RevModPhys.86.419}{\emph{Rev. Mod. Phys.} {\bf 86}, 419–478 (2014)}.


\bibitem{Steering2020}
Uola, R., Costa, A. C. S., Nguyen, H. C. \& G\"uhne, O. Quantum steering. \href{https://journals.aps.org/rmp/abstract/10.1103/RevModPhys.92.015001}{\emph{Rev. Mod. Phys.} {\bf 92}, 015001 (2020)}. 

 \bibitem{GHZ1989}
 Greenberger, D. M., Horne, M. A. \& Zeilinger, A. Going Beyond Bell’s Theorem. in Bell’s Theorem, Quantum Theory and Conceptions of the Universe, edited by Kafatos, M, 69–72, (Kluwer Academic, Dordrecht, 1989).
 
  
\bibitem{GHZ1990}
Greenberger, D. M., Horne, M. A., Shimony, A. \& Zeilinger, A. Bell’s theorem without inequalities. \href{https://aapt.scitation.org/doi/10.1119/1.16243}{\emph{Am. J. Phys.} {\bf 58}, 1131–1143 (1990)}.

\bibitem{Tele1993}
Bennett, C. H. et al. Teleporting an unknown quantum state via dual classical and Einstein-Podolsky-Rosen channels. \href{https://link.aps.org/doi/10.1103/PhysRevLett.70.1895}{\emph{Phys. Rev. Lett.} {\bf 70}, 1895–1899 (1993)}.

\bibitem{Tele1997}
Bouwmeester, D. et al. Experimental quantum teleportation. \href{https://www.nature.com/articles/37539}{\emph{Nature} {\bf 390}, 575-579 (1997)}.

\bibitem{Tele2004}
Zhao, Z. et al. Experimental demonstration of five-photon entanglement and open-destination teleportation. \href{https://www.nature.com/articles/nature02643}{\emph{Nature} {\bf  430}, 54-58 (2004)}.

\bibitem{E91}
Ekert, A. K. Quantum cryptography based on Bell's theorem. \href{https://journals.aps.org/prl/abstract/10.1103/PhysRevLett.67.661}{\emph{Phys. Rev. Lett.} {\bf 67}, 661–663 (1991)}.

\bibitem{DI2007}
Ac\'{\i}n, A. et al. Device-independent security of quantum cryptography against collective attacks. \href{https://journals.aps.org/prl/abstract/10.1103/PhysRevLett.98.230501}{\emph{Phys. Rev. Lett.} {\bf  98}, 230501 (2007)}.

\bibitem{QRNG2016}
Ac\'{\i}n, A. \& Masanes, L. Certified randomness in quantum physics. \href{https://www.nature.com/articles/nature20119}{\emph{Nature} {\bf 540}, 213-219 (2016)}.

\bibitem{CHSH1969}
Clauser, J. F., Horne, M. A., Shimony, A. \& Holt, R. A. Proposed Experiment to Test Local Hidden-Variable Theories. \href{https://link.aps.org/doi/10.1103/PhysRevLett.23.880}{\emph{Phys. Rev. Lett.} {\bf 23}, 880–884 (1969)}.

\bibitem{Pan2000}
Pan, J.-W., Bouwmeester, D., Daniell, M., Weinfurter, H. \& Zeilinger, A. Experimental test of quantum nonlocality in three-photon Greenberger–Horne–Zeilinger entanglement. \href{http://www.nature.com/articles/35000514}{\emph{Nature} {\bf 403}, 515-519 (2000)}.

\bibitem{hein2004}
Hein, M., Eisert, J. \& Briegel, H. J. Multiparty entanglement in graph states. \href{https://doi.org/10.1103/PhysRevA.69.062311}{\emph{Phys. Rev. A} {\bf 69}, 062311 (2004)}. 
  
\bibitem{Kimble2008}  
Kimble, H. J. The quantum internet. \href{https://www.nature.com/articles/nature07127}{\emph{Nature} {\bf 453}, 1023-1030 (2008)}.
  
\bibitem{Wehner2018}
Wehner, S., Elkouss, D. \& Hanson, R. Quantum internet: A vision for the road ahead. \href{https://science.sciencemag.org/content/362/6412/eaam9288}{\emph{Science} {\bf 362}, eaam9288 (2018)}.

\bibitem{QSS1999}
Hillery, M., Bu\v{z}ek, V. \& Berthiaume, A. Quantum secret sharing. \href{https://journals.aps.org/pra/abstract/10.1103/PhysRevA.59.1829}{\emph{Phys. Rev. A} {\bf 59}, 1829 (1999)}.

\bibitem{QSS2014}
Bell, B. A. et al. Experimental demonstration of graph-state quantum secret sharing. \href{https://www.nature.com/articles/ncomms6480}{\emph{Nat. Commun.} {\bf 5}, 5480 (2014)}.

\bibitem{Zhao2020GHZ}
Zhao, S. et al. Phase-Matching Quantum Cryptographic Conferencing. \href{https://link.aps.org/doi/10.1103/PhysRevApplied.14.024010}{\emph{Phys. Rev. Appl.} {\bf 14}, 024010 (2020)}.

\bibitem{Fu2015GHZ}
Fu, Y., Yin, H.-L., Chen, T.-Y. \& Chen, Z.-B. Long-Distance Measurement-Device-Independent Multiparty Quantum Communication. \href{https://link.aps.org/doi/10.1103/PhysRevLett.114.090501}{\emph{Phys. Rev. Lett.} {\bf 114}, 090501 (2015)}.

\bibitem{Eisert2020}
Eisert, J. et al. Quantum certification and benchmarking. \href{https://doi.org/10.1038/s42254-020-0186-4}{\emph{Nat. Rev. Phys.} {\bf 2}, 382-390 (2020)}.

\bibitem{Supic2020}
\v{S}upi\'c, I. \& Bowles, J. Self-testing of quantum systems: a review. \href{http://arxiv.org/abs/1904.10042}{\emph{Quantum} {\bf 4}, 337 (2020)}. 

\bibitem{Pappa2012}
Pappa, A., Chailloux, A., Wehner, S., Diamanti, E. \& Kerenidis, I. Multipartite Entanglement Verification Resistant against Dishonest Parties. \href{https://link.aps.org/doi/10.1103/PhysRevLett.108.260502}{\emph{Phys. Rev. Lett.} {\bf 108}, 260502 (2012)}. 

\bibitem{McCut2016}
McCutcheon, W. et al. Experimental verification of multipartite entanglement in quantum networks. \href{http://www.nature.com/articles/ncomms13251}{\emph{Nat. Commun.} {\bf 7}, 13251 (2016)}.

\bibitem{Supic2016}
\v{S}upi\'c, I. \& Hoban, M. J. Self-testing through EPR-steering. \href{http://stacks.iop.org/1367-2630/18/i=7/a=075006}{\emph{New J. Phys.} {\bf 18}, 075006 (2016)}.

\bibitem{Gheorghiu2017}
Gheorghiu, A., Wallden, P. \& Kashefi, E. Rigidity of quantum steering and one-sided device-independent verifiable quantum computation. \href{https://iopscience.iop.org/article/10.1088/1367-2630/aa5cff}{\emph{New J. Phys.} {\bf 19}, 023043 (2017)}.

\bibitem{Lu2020}
 Lu, H. et al. Counting Classical Nodes in Quantum Networks. \href{https://link.aps.org/doi/10.1103/PhysRevLett.124.180503}{\emph{Phys. Rev. Lett.} {\bf 124}, 180503 (2020)}.

\bibitem{Markham2020}
Markham, D. \& Krause, A. A Simple Protocol for Certifying Graph States and Applications in Quantum Networks. \href{https://www.mdpi.com/2410-387X/4/1/3}{\emph{Cryptography} {\bf 4}, 3 (2020)}.

\bibitem{Unnikrishnan2020}
Unnikrishnan, A. \& Markham, D. Verification of graph states in an untrusted network. Preprint at \href{http://arxiv.org/abs/2007.13126}{http://arxiv.org/abs/2007.13126} (2020). 

\bibitem{Branciard2012}
Branciard, C., Cavalcanti, E. G., Walborn, S. P., Scarani, V. \& Wiseman, H. M. One-sided device-independent quantum key distribution: Security, feasibility, and the connection with steering. \href{http://link.aps.org/doi/10.1103/PhysRevA.85.010301}{\emph{Phys. Rev. A} {\bf 85}, 010301(R) (2012)}.


\bibitem{Unnikrishnan2019}
Unnikrishnan, A. et al. Anonymity for Practical Quantum Networks. \href{https://link.aps.org/doi/10.1103/PhysRevLett.122.240501}{\emph{Phys. Rev. Lett.} {\bf 122}, 240501 (2019)}.

\bibitem{Hahn2020GHZ}
Hahn, F., de Jong, J. \& Pappa, A. Anonymous Quantum Conference Key Agreement. \href{https://link.aps.org/doi/10.1103/PRXQuantum.1.020325}{\emph{PRX Quantum} {\bf 1}, 020325 (2020)}.

\bibitem{Hayashi2019}
Hayashi, M. \& Koshiba, T. Verifiable Quantum Secure Modulo Summation.  Preprint at \href{http://arxiv.org/abs/1910.05976}{http://arxiv.org/abs/1910.05976} (2019).

\bibitem{Steering2007}
Wiseman, H. M., Jones, S. J. \& Doherty, A. C. Steering, Entanglement, Nonlocality, and the Einstein-Podolsky-Rosen Paradox. \href{http://link.aps.org/doi/10.1103/PhysRevLett.98.140402}{\emph{Phys. Rev. Lett.} {\bf 98}, 140402 (2007)}.

\bibitem{ExpSteering2010}
Saunders, D. J., Jones, S. J., Wiseman, H. M. \& Pryde, G. J. Experimental EPR-steering using Bell-local states. \href{http://www.nature.com/doifinder/10.1038/nphys1766}{\emph{Nat. Phys.} {\bf 6}, 845-849 (2010)}.

\bibitem{Steering2015}
Cavalcanti, D. et al. Detection of entanglement in asymmetric quantum networks and multipartite quantum steering. \href{https://www.nature.com/articles/ncomms8941}{\emph{Nat. Commun.} {\bf 6}, 7941 (2015)}.

\bibitem{BertCCR10} 
Berta, M., Christandl, M., Colbeck, R., Renes, J. M. \& Renner, R. The uncertainty principle in the presence of quantum memory. \href{https://www.nature.com/articles/nphys1734}
{\emph{Nat. Phys.} {\bf 6}, 659-662 (2010)}.

\bibitem{Zhu21}
Zhu, H. Zero uncertainty states in the presence of quantum memory. \href{https://www.nature.com/articles/s41534-021-00384-4}{\emph{npj Quantum Inf.} {\bf 7}, 47 (2021)}.

\bibitem{Flammia2011}
Flammia, S. T. \& Liu, Y.-K. Direct Fidelity Estimation from Few Pauli Measurements. \href{https://journals.aps.org/prl/abstract/10.1103/PhysRevLett.106.230501}{\emph{Phys. Rev. Lett.} {\bf 106}, 230501 (2011)}.

\bibitem{HayaMT06}
Hayashi, M., Matsumoto, K. \& Tsuda, Y. A study of LOCC-detection of a maximally entangled state using hypothesis testing. \href{https://doi.org/10.1088/0305-4470/39/46/013}{\emph{J. Phys. A: Math. Gen.} {\bf 39}, 14427–14446 (2006)}.


\bibitem{Haya09}
Hayashi, M. Group theoretical study of LOCC-detection of maximally entangled states using hypothesis testing. \href{https://doi.org/10.1088/1367-2630/11/4/043028}{\emph{New J. Phys.} {\bf 11}, 043028 (2009)}.

\bibitem{pallister2018}
Pallister, S., Linden, N. \& Montanaro, A. Optimal Verification of Entangled States with Local Measurements. \href{https://doi.org/10.1103/PhysRevLett.120.170502}{\emph{Phys. Rev. Lett.} {\bf 120}, 170502 (2018)}.

\bibitem{ZhuH2019AdvS}
Zhu, H. \& Hayashi, M. Efficient Verification of Pure Quantum States in the Adversarial Scenario. \href{https://doi.org/10.1103/PhysRevLett.123.260504}{\emph{Phys. Rev. Lett.} {\bf 123}, 260504 (2019)}.
  
\bibitem{ZhuH2019AdvL}
Zhu, H. \& Hayashi, M. General framework for verifying pure quantum states in the adversarial scenario. \href{https://doi.org/10.1103/PhysRevA.100.062335}{\emph{Phys. Rev. A} {\bf 100}, 062335 (2019)}.

 \bibitem{Takeuchi2018}
Takeuchi, Y. \& Morimae, T. Verification of Many-Qubit States. \href{https://link.aps.org/doi/10.1103/PhysRevX.8.021060}{\emph{Phys. Rev. X} {\bf 8}, 021060 (2018)}.

 \bibitem{Dimic2021} 
Dimi\'c, A., \v{S}upi\'c, I. \& Daki\'c, B. Sample-efficient device-independent quantum state verification and certification. Preprint at \href{http://arxiv.org/abs/2105.05832}{http://arxiv.org/abs/2105.05832} (2021).

\bibitem{Mayers2004}
 Mayers, D. \&  Yao, A. Self Testing Quantum Apparatus, \emph{Quantum Inf. Comput.} {\bf 4}, 273 (2004).
  
\bibitem{McKague2012}
McKague, M., Yang, T. H. \& Scarani, V. Robust self-testing of the singlet. \href{https://iopscience.iop.org/article/10.1088/1751-8113/45/45/455304/meta}{\emph{J. Phys. A: Math. Theor.} {\bf 45}, 455304 (2012)}.

\bibitem{Yang2014}
Yang, T. H., V\'{e}rtesi, T., Bancal, J.-D., Scarani, V. \& Navascu\'{e}s, M. Robust and Versatile Black-Box Certification of Quantum Devices. \href{http://link.aps.org/doi/10.1103/PhysRevLett.113.040401}{\emph{Phys. Rev. Lett.} {\bf 113}, 040401 (2014)}.
  
\bibitem{Kaniewski2016}
 Kaniewski, J. Analytic and Nearly Optimal Self-Testing Bounds for the Clauser-Horne-Shimony-Holt and Mermin Inequalities. \href{https://link.aps.org/doi/10.1103/PhysRevLett.117.070402}{\emph{Phys. Rev. Lett.} {\bf 117}, 070402 (2016)}.

\bibitem{Hayashi2018}
 Hayashi, M. \&  Hajdu\v{s}ek, M. Self-Guaranteed Measurement-Based Quantum Computation, \href{https://link.aps.org/doi/10.1103/PhysRevA.97.052308}{\emph{Phys. Rev. A} {\bf 97}, 052308 (2018)}.

\bibitem{Vidick2020}
Metger, T. \& Vidick, T. Self-testing of a single quantum device under computational assumptions. \href{http://arxiv.org/abs/2001.09161}{\emph{Quantum} {\bf 5}, 544 (2021)}.

\bibitem{ZhuH2019O} 
Zhu, H. \& Hayashi, M. Optimal verification and fidelity estimation of maximally entangled states. \href{https://doi.org/
10.1103/PhysRevA.99.052346}{\emph{Phys. Rev. A} {\bf 99}, 052346 (2019)}.

\bibitem{wang2019}
 Wang, K. \& Hayashi, M. Optimal Verification of Two-Qubit Pure States. \href{https://doi.org/10.1103/PhysRevA.100.032315}{\emph{Phys. Rev. A} {\bf 100}, 032315 (2019)}.
  
\bibitem{li2019_bipartite}
Li, Z., Han, Y.-G. \& Zhu, H. Efficient verification of bipartite pure states.   \href{https://doi.org/10.1103/PhysRevA.100.032316}{\emph{Phys. Rev. A} {\bf 100}, 032316 (2019)}.
  
\bibitem{yu2019}
Yu, X.-D., Shang, J. \& G{\"u}hne, O. Optimal verification of general bipartite pure states. \href{https://doi.org/10.1038/s41534-019-0226-z}{\emph{npj Quantum Inf.} {\bf 5}, 112 (2019)}.
  
\bibitem{li2020GHZ}
Li, Z., Han, Y.-G. \& Zhu, H. Optimal Verification of Greenberger-Horne-Zeilinger States.  \href{https://doi.org/10.1103/PhysRevApplied.13.054002}{\emph{Phys. Rev. Appl.} {\bf 13}, 054002 (2020)}.

\bibitem{Dangniam2020}
Dangniam, N., Han, Y.-G. \& Zhu, H. Optimal verification of stabilizer states. \href{https://link.aps.org/doi/10.1103/PhysRevResearch.2.043323}{\emph{Phys. Rev. Res.} {\bf 2}, 043323 (2020)}.

\bibitem{Helstrom1976}
Helstrom, C. W. Quantum Detection and Estimation Theory (Academic, 1976).

\bibitem{Jevtic2014}
Jevtic, S., Pusey, M., Jennings, D. \& Rudolph, T. Quantum Steering Ellipsoids. \href{https://link.aps.org/doi/10.1103/PhysRevLett.113.020402}{\emph{Phys. Rev. Lett.} {\bf 113}, 020402 (2014)}. 

\bibitem{Zhang2019}
Zhang, C. et al. Experimental Validation of Quantum Steering Ellipsoids and Tests of Volume Monogamy Relations. \href{https://link.aps.org/doi/10.1103/PhysRevLett.122.070402}{\emph{Phys. Rev. Lett.} {\bf 122}, 070402 (2019)}. 

\bibitem{Wootters1998}
Wootters, W. K. Entanglement of formation of an arbitrary state of two qubits. \href{http://journals.aps.org/prl/abstract/10.1103/PhysRevLett.80.2245}{\emph{Phys. Rev. Lett.} {\bf 80}, 2245 (1998)}.

\bibitem{Verstraete2002}
Verstraete, F. \& Verschelde, H. Fidelity of mixed states of two qubits. \href{https://link.aps.org/doi/10.1103/PhysRevA.66.022307}{\emph{Phys. Rev. A} {\bf 66}, 022307 (2002)}.

\bibitem{Guhne2009}
 G\"uhne, O. \&  T\'{o}th, G. \emph{Entanglement Detection}, \href{https://www.sciencedirect.com/science/article/abs/pii/S0370157309000623}{\emph{Phys. Rep.} {\bf 474}, 1-75 (2009)}.

\end{thebibliography}

\begin{thebibliography}{99}
\bibitem{Supic2016SM}
\v{S}upi\'c, I. \& Hoban, M. J. Self-testing through EPR-steering.  \href{http://stacks.iop.org/1367-2630/18/i=7/a=075006}{\emph{New J. Phys.} {\bf 18}, 075006 (2016)}.

\bibitem{Gheorghiu2017SM}
Gheorghiu, A., Wallden, P. \& Kashefi, E. Rigidity of quantum steering and one-sided device-independent verifiable quantum computation. \href{https://iopscience.iop.org/article/10.1088/1367-2630/aa5cff}{\emph{New J. Phys.} {\bf 19}, 023043 (2017)}.

\bibitem{Pappa2012SM}
Pappa, A., Chailloux, A., Wehner, S., Diamanti, E. \& Kerenidis, I. Multipartite Entanglement Verification Resistant against Dishonest Parties. \href{https://link.aps.org/doi/10.1103/PhysRevLett.108.260502}{\emph{Phys. Rev. Lett.} {\bf 108}, 260502 (2012)}. 

\bibitem{McCut2016SM}
McCutcheon, W. et al. Experimental verification of multipartite entanglement in quantum networks. \href{http://www.nature.com/articles/ncomms13251}{\emph{Nat. Commun.} {\bf 7}, 13251 (2016)}.

\bibitem{pallister2018SM}
Pallister, S., Linden, N. \& Montanaro, A. Optimal Verification of Entangled States with Local Measurements. \href{https://doi.org/10.1103/PhysRevLett.120.170502}{\emph{Phys. Rev. Lett.} {\bf 120}, 170502 (2018)}.
  
\bibitem{ZhuH2019AdvSSM}
Zhu, H. \& Hayashi, M. Efficient Verification of Pure Quantum States in the Adversarial Scenario. \href{https://doi.org/10.1103/PhysRevLett.123.260504}{\emph{Phys. Rev. Lett.} {\bf 123}, 260504 (2019)}.
  
\bibitem{ZhuH2019AdvLSM}
Zhu, H. \& Hayashi, M. General framework for verifying pure quantum states in the adversarial scenario. \href{https://doi.org/10.1103/PhysRevA.100.062335}{\emph{Phys. Rev. A} {\bf 100}, 062335 (2019)}.
  
\bibitem{Dimic2021SM} 
Dimi\'c, A., \v{S}upi\'c, I. \& Daki\'c, B. Sample-efficient device-independent quantum state verification and certification. Preprint at \href{http://arxiv.org/abs/2105.05832}{http://arxiv.org/abs/2105.05832} (2021). 
 
\bibitem{Mayers2004SM}
 Mayers, D. \&  Yao, A. Self Testing Quantum Apparatus, \emph{Quantum Inf. Comput.} {\bf 4}, 273 (2004).
 
\bibitem{Supic2020SM}
\v{S}upi\'c, I. \& Bowles, J. Self-testing of quantum systems: a review. \href{http://arxiv.org/abs/1904.10042}{\emph{Quantum} {\bf 4}, 337 (2020)}. 
 
\bibitem{HayaMT06SM}
Hayashi, M., Matsumoto, K. \& Tsuda, Y. A study of LOCC-detection of a maximally entangled state using hypothesis testing. \href{https://doi.org/10.1088/0305-4470/39/46/013}{\emph{J. Phys. A: Math. Gen.} {\bf 39}, 14427–14446 (2006)}.

\bibitem{ZhuH2019OSM} 
Zhu, H. \& Hayashi, M. Optimal verification and fidelity estimation of maximally entangled states. \href{https://doi.org/
10.1103/PhysRevA.99.052346}{\emph{Phys. Rev. A} {\bf 99}, 052346 (2019)}.

\bibitem{li2020GHZSM}
Li, Z., Han, Y.-G. \& Zhu, H. Optimal Verification of Greenberger-Horne-Zeilinger States.  \href{https://doi.org/10.1103/PhysRevApplied.13.054002}{\emph{Phys. Rev. Appl.} {\bf 13}, 054002 (2020)}.
  
\bibitem{CHSH1969SM}
Clauser, J. F., Horne, M. A., Shimony, A. \& Holt, R. A. Proposed Experiment to Test Local Hidden-Variable Theories. \href{https://link.aps.org/doi/10.1103/PhysRevLett.23.880}{\emph{Phys. Rev. Lett.} {\bf 23}, 880–884 (1969)}.
  
\bibitem{McKague2012SM}
McKague, M., Yang, T. H. \& Scarani, V. Robust self-testing of the singlet. \href{https://iopscience.iop.org/article/10.1088/1751-8113/45/45/455304/meta}{\emph{J. Phys. A: Math. Theor.} {\bf 45}, 455304 (2012)}.

\bibitem{Mermin1990SM}
 Mermin, N. D. Extreme Quantum Entanglement in a Superposition
of Macroscopically Distinct State. \href{https://link.aps.org/doi/10.1103/PhysRevLett.65.1838}{\emph{Phys. Rev. Lett.}
{\bf 65}, 1838 (1990)}.

\bibitem{Kaniewski2016SM}
 Kaniewski, J. Analytic and Nearly Optimal Self-Testing Bounds for the Clauser-Horne-Shimony-Holt and Mermin Inequalities. \href{https://link.aps.org/doi/10.1103/PhysRevLett.117.070402}{\emph{Phys. Rev. Lett.} {\bf 117}, 070402 (2016)}.
  
\end{thebibliography}
\end{document}